\def \VersionAuthor {}
	\newcommand{\AuthorVersion}[1]{#1}
	\newcommand{\FinalVersion}[1]{}
	\newcommand{\AuthorVersion}[1]{}
	\newcommand{\FinalVersion}[1]{#1}
\newenvironment{oneenumerate}
	{\ifdefined\VersionLong\begin{enumerate}\else\begin{inparaenum}[1)]\fi}
	{\ifdefined\VersionLong\end{enumerate}\else\end{inparaenum}\fi}
	\newcommand{\LongVersion}[1]{#1}
	\newcommand{\ShortVersion}[1]{}
	\newcommand{\LongVersion}[1]{}
	\newcommand{\ShortVersion}[1]{#1}
	\def\blx@err@patch#1{}
\footnotesize\printfield{doi}}
\definecolor{USPNcobalt}{HTML}{293358}
\definecolor{USPNocre}{HTML}{8b7d6d}
\definecolor{USPNblanc}{HTML}{ffffff}
\definecolor{USPNceruleen}{HTML}{354878}
\definecolor{USPNsable}{HTML}{ad947e}
	\crefname{line}{\text{line}}{\text{lines}} %
\newcommand{\defProblem}[3]
{%
\noindent\fcolorbox{black}{USPNsable!15}{
	\begin{minipage}{.95\columnwidth}
		\textbf{#1:}\\
		\textsc{Input}: #2\\
		\textsc{Problem}: #3
	\end{minipage}
}

	\smallskip

}
\tikzstyle{PTA}=[auto, ->, >=stealth']
\tikzstyle{every node}=[initial text=]
\tikzstyle{location}=[rectangle, rounded corners, minimum size=12pt, draw=black, fill=blue!10, inner sep=2pt]
\tikzstyle{invariant}=[draw=black, dotted, inner sep=1pt] %
\tikzstyle{final}=[double, fill=blue!50]
\tikzstyle{symbstate}=[state, draw,rectangle,inner sep=3pt]
\tikzstyle{infinitesymbstate}=[symbstate, fill=blue!10]
\tikzstyle{urgent}=[fill=yellow, thick, dotted] %
\tikzstyle{private}=[fill=red,thick]
\definecolor{coloract}{rgb}{0.50, 0.70, 0.30}
\definecolor{colorclock}{rgb}{0.4, 0.4, 1}
\definecolor{colordisc}{rgb}{1, 0, 1}
\definecolor{colorloc}{rgb}{0.4, 0.4, 0.65}
\definecolor{colorparam}{rgb}{1, 0.6, 0.0}
\newcommand{\styleclock}[1]{\ensuremath{\textcolor{colorclock}{{#1}}}}
\newcommand{\styleparam}[1]{\ensuremath{\textcolor{colorparam}{{#1}}}}
\newcommand{\textstyleact}[1]{\ensuremath{\mathit{#1}}}
\newcommand{\textstyleclock}[1]{\ensuremath{\mathit{#1}}}
\newcommand{\textstyleloc}[1]{\ensuremath{\mathrm{#1}}}
\newcommand{\textstyleparam}[1]{\ensuremath{{#1}}}
\newcommand{\rowHeader}{\rowcolor{blue!20}}
\newcommand{\cellYes}{\cellcolor{green!40}\textbf{$\surd$}}
\newcommand{\cellNo}{\cellcolor{red!40}\textbf{$\times$}}
\newcommand{\cellOpen}{\cellcolor{yellow!40}\textbf{?}}
	\theoremstyle{plain}
	\newtheorem{lemma}{Lemma}
	\newtheorem{theorem}{Theorem}
	\newtheorem{corollary}{Corollary}
	\theoremstyle{definition}
	\newtheorem{definition}{Definition}
	\newtheorem{example}{Example}
	\theoremstyle{remark}
	\newtheorem{remark}{Remark}
\newcommand{\stylealgo}[1]{\ensuremath{\textsf{#1}}}
\newcommand{\EFsynth}{\stylealgo{EFsynth}}
\newcommand{\styleproblem}[1]{\ensuremath{\texttt{#1}}}
\newcommand{\problemTOE}{\styleproblem{$\exists$OE}}
\newcommand{\problemTOS}{\styleproblem{$\exists$OS}}
\newcommand{\problemFTOE}{\styleproblem{FOE}}
\newcommand{\problemWTOE}{\styleproblem{WOE}}
\newcommand{\problemFTOS}{\styleproblem{FOS}}
\newcommand{\problemdTOS}{\styleproblem{d-$\exists$OS}}
\newcommand{\problemdFTOS}{\styleproblem{d-FOS}}
\newcommand{\PET}[1]{\ensuremath{\mathit{PET}(#1)}}
\newcommand{\textPET}[1]{PET}
\newcommand{\assign}{\leftarrow}
\newcommand{\lterm}{\mathit{lt}}
\newcommand{\true}{\ensuremath{\mathit{True}}}
\newcommand{\false}{\ensuremath{\mathit{False}}}
\newcommand{\Rfp}{\mathit{FrP}}
\newcommand{\checkUseMacro}[1]{#1}
\newcommand{\stylecode}[1]{\textcolor{colorloc}{\texttt{#1}}}
\newcommand{\set}[1]{\ensuremath{\left\{#1\right\}}}
\newcommand{\Time}{\mathbb{T}} %
\newcommand{\setN}{\ensuremath{\mathbb{N}}}
\newcommand{\setR}{\ensuremath{\mathbb{R}}}
\newcommand{\setRgeqzero}{\ensuremath{\setR_{\geq 0}}}
\newcommand{\setZ}{\ensuremath{\mathbb{Z}}}
\newcommand{\compOp}{\bowtie}
\newcommand{\init}{\ensuremath{0}}
\newcommand{\priv}{\ensuremath{{\mathit{priv}}}}
\newcommand{\pub}{\ensuremath{{\mathit{pub}}}}
\newcommand{\target}{\ensuremath{{\mathit{target}}}}
\newcommand{\abs}{\ensuremath{{\mathit{abs}}}}
\newcommand{\final}{\ensuremath{f}}
\newcommand{\valuate}[2]{\ensuremath{#2(#1)}}
\newcommand{\styleAutomaton}[1]{\ensuremath{\mathcal{#1}}}
\newcommand{\clock}{\ensuremath{\textstyleclock{x}}}
\newcommand{\clocki}[1]{\ensuremath{\textstyleclock{\clock_{#1}}}}
\newcommand{\ClockCard}{H} %
\newcommand{\clockval}{\ensuremath{\mu}}
\newcommand{\ClockSet}{\ensuremath{\mathbb{X}}} %
\newcommand{\ClocksZero}{\ensuremath{\vec{0}}}
\newcommand{\clockabs}{\ensuremath{\textstyleclock{\clock_{\abs}}}}
\newcommand{\resets}{\ensuremath{R}}
\newcommand{\reset}[2]{\ensuremath{[#1]_{#2}}}
\newcommand{\numConstant}{\ensuremath{\gamma}}
\newcommand{\param}{\ensuremath{\textstyleparam{p}}}
\newcommand{\parami}[1]{\ensuremath{\textstyleparam{\param_{#1}}}} %
\newcommand{\ParamCard}{\ensuremath{M}} %
\newcommand{\pval}{\ensuremath{v}}
\newcommand{\ParamSet}{\ensuremath{\mathbb{P}}} %
\newcommand{\paramabs}{\ensuremath{\textstyleparam{d}}}
\newcommand{\action}{\ensuremath{\textstyleact{a}}}
\newcommand{\ActionSet}{\ensuremath{\Sigma}}
\newcommand{\constraint}{\ensuremath{C}}
\newcommand{\Diff}{\ensuremath{\mathit{Diff}}}
\newcommand{\LpSl}{LpSl}
\newcommand{\edge}{\ensuremath{\checkUseMacro{e}}}
\newcommand{\edgei}[1]{\ensuremath{\checkUseMacro{\edge_{#1}}}}
\newcommand{\EdgeSet}{\ensuremath{E}}
\newcommand{\guard}{\ensuremath{g}}
\newcommand{\invariant}{\ensuremath{I}}
\newcommand{\loc}{\ensuremath{\textstyleloc{\ell}}}
\newcommand{\loci}[1]{\ensuremath{\textstyleloc{\loc_{#1}}}}
\newcommand{\locinit}{\ensuremath{\textstyleloc{\loc_\init}}}
\newcommand{\locfinal}{\ensuremath{\textstyleloc{\loc_\final}}}
\newcommand{\locpriv}{\ensuremath{\textstyleloc{\loc_\priv}}}
\newcommand{\locpub}{\ensuremath{\textstyleloc{\loc_\pub}}}
\newcommand{\LocSet}{\ensuremath{L}}
\newcommand{\LocsTarget}{\ensuremath{\LocSet_\target}} %
\newcommand{\longuefleche}[1]{\stackrel{#1}{\longrightarrow}}
\newcommand{\varrun}{\checkUseMacro{\rho}} %
\newcommand{\duration}{\ensuremath{\mathit{dur}}} %
\newcommand{\runduration}[1]{\ensuremath{\checkUseMacro{\duration}(#1)}}
\newcommand{\PTA}{\ensuremath{\checkUseMacro{\styleAutomaton{A}}}}
\newcommand{\PTAprivextend}{\ensuremath{\left(\ActionSet, \LocSet, \locinit,\locpriv, \locfinal, \ClockSet, \ParamSet, \invariant, \EdgeSet\right)}}
\newcommand{\styleSymbStatesSet}[1]{\ensuremath{\mathbf{#1}}}
\newcommand{\Constr}{\ensuremath{\styleSymbStatesSet{C}}}
\newcommand{\symbstate}{\ensuremath{\checkUseMacro{\styleSymbStatesSet{s}}}}
\newcommand{\SymbStateSet}{\ensuremath{\styleSymbStatesSet{S}}}
\newcommand{\symbstateinit}{\ensuremath{\checkUseMacro{\symbstate_\init}}}
\newcommand{\SymbTransitions}{\ensuremath{\Rightarrow}} %
\newcommand{\projectP}[1]{\ensuremath{#1{\downarrow_{\ParamSet}}}}
\newcommand{\timelapse}[1]{\ensuremath{#1^\nearrow}}
\newcommand{\PZG}[1]{\ensuremath{\styleSymbStatesSet{PZG}(#1)}}
\newcommand{\K}{K}
\newcommand{\semantics}[1]{\ensuremath{\mathfrak{T}_{#1}}}
\newcommand{\semanticsextend}{\ensuremath{\left(\StateSet, \concstateinit, \ActionSet \cup \setRgeqzero, \transition\right)}}
\newcommand{\transition}{{\ensuremath{\rightarrow}}}
\newcommand{\transitionWith}[1]{\stackrel{#1}{\mapsto}}
\newcommand{\StateSet}{\ensuremath{\mathfrak{S}}}
\newcommand{\concstate}{\ensuremath{\mathfrak{s}}}
\newcommand{\concstateinit}{\ensuremath{\concstate_\init}}
\newcommand{\paramd}{\textstyleparam{\ensuremath{d}}}
\newcommand{\PrivVisit}[1]{\ensuremath{\mathit{Visit}^{\mathit{priv}}(#1)}}
\newcommand{\PubVisit}[1]{\ensuremath{\mathit{Visit}^{\overline{\mathit{priv}}}(#1)}}
\newcommand{\PrivDurVisit}[1]{\ensuremath{\mathit{DVisit}^\mathit{priv}(#1)}}
\newcommand{\PubDurVisit}[1]{\ensuremath{D\mathit{Visit}^{\overline{\mathit{priv}}}(#1)}}
\newcommand{\execTimeText}{\checkUseMacro{execution time}}
\newcommand{\execTimesText}{\checkUseMacro{execution times}}
\newcommand{\execTimes}{\ensuremath{D}}
\newcommand{\arbitrarilyMany}{\ensuremath{*}}
\newcommand{\textof}[1]{\ac{#1}} 					%
\newcommand{\Textof}[1]{\Ac{#1}} 					%
\newcommand{\fullTextOf}[1]{\acf*{#1}} 			%
\DeclareAcronym{pta}{
	short=PTA,
	long=parametric timed automaton,
	short-plural=s,
	long-plural-form=parametric timed automata,
	cite=AHV93,
	extra={\vref{def:PTA}},
	tag={models}
}
\newcommand{\PTAtext}{\textof{pta}}
\DeclareAcronym{ta}{
	short=TA,
	long=timed automaton,
	short-plural=s,
	long-plural-form=timed automata,
	cite=AD94,
	extra={\vref{def:TA}},
	tag={models}
}
\newcommand{\TAtext}{\textof{ta}}
\DeclareAcronym{ppta}{
	short=(P)TA,
	long=(possibly parametric) \acs*{ta},
	short-plural=s,
	long-plural-form=(possibly parametric) \acsp*{ta},
}
\DeclareAcronym{lts}{
	short=LTS,
	long=labeled transition system,
	short-plural=s,
	long-plural-form=labeled transition systems,
	short-indefinite={an},
	long-indefinite={a},
	cite=Keller76,
	extra={\vref{def:LTS}},
	tag={models}
}
\DeclareAcronym{dfa}{
	short=DFA,
	long=deterministic finite-state automaton,
	short-plural=s,
	long-plural-form=deterministric finite-state automata,
	extra={\vref{def:DFA}},
	tag={models}
}
\DeclareAcronym{tts}{
	short=TTS,
	long=timed transition system,
	short-plural=s,
	long-plural-form=timed transition systems,
	cite=HMP91,
	extra={\vref{def:TTS}},
	tag={models}
}
\newcommand{\TTStext}{\textof{tts}}
\DeclareAcronym{pzg}{
	short=PZG,
	long=parametric zone graph,
	extra={\vref{def:PTA:symbolic}},
	tag={misc}
}
\DeclareAcronym{opacity}{
	short=ET-opacity,
	long=execution-time opacity,
	tag={notion},
	extra={\vref{def:opacity:TOSEM:ET-opacity}},
	post={\acuse{opaque}}
}
\DeclareAcronym{opaque}{
	short=ET-opaque,
	long=execution-time opaque,
	tag={notion}, %
	post={\acuse{opacity}}
}
\newcommand{\opaqueText}{\textof{opaque}}
\newcommand{\opacityText}{\textof{opacity}}
\newcommand{\opaqueTextdef}{\fullTextOf{opaque}}
\newcommand{\OpacityTextForSec}{\Textof{opacity}}
\newcommand{\existentialOpaqueText}{\checkUseMacro{\ensuremath{\exists}-\acs*{opaque}}}
\newcommand{\existentialOpacityText}{\checkUseMacro{\ensuremath{\exists}-\acs*{opacity}}}
\newcommand{\fullOpaqueText}{\checkUseMacro{fully \acs*{opaque}}}
\newcommand{\fullOpacityText}{\checkUseMacro{full \acs*{opacity}}}
\newcommand{\FullOpacityText}{\checkUseMacro{Full \acs*{opacity}}}
\newcommand{\existentialOpacityTextForSec}{\checkUseMacro{\ensuremath{\exists}-\acs*{opacity}}}
\newcommand{\fullOpacityTextForSec}{\checkUseMacro{full \acs*{opacity}}}
\DeclareAcronym{tempopacity}{
	short=exp-\acs*{opacity},
	long=expiring \acl*{opacity},
	tag={notion},
	extra={\vref{def:opacity:ICECCS:temporary-timed-opacity}},
	post={\acuse{tempopaque}}
}
\DeclareAcronym{tempopaque}{
	short=exp-\acs*{opaque},
	long=expiring \acl*{opaque},
	tag={notion}, %
	post={\acuse{tempopacity}}
}
\newcommand{\synthesisProblem}[2]{#1 #2 synthesis problem} %
\newcommand{\existentialOpacityParamSynthesisProblem}{\checkUseMacro{\synthesisProblem{\existentialOpacityText{}}{p}}} %
\newcommand{\FullOpacityParamSynthesisProblem}{\checkUseMacro{\synthesisProblem{\FullOpacityText{}}{p}}}
\newcommand{\existentialOpacitySynthesisProblem}{\checkUseMacro{\synthesisProblem{\existentialOpacityText{}}{p-d}}} %
\newcommand{\FullOpacitySynthesisProblem}{\checkUseMacro{\synthesisProblem{\FullOpacityText{}}{p-d}}}
\newcommand{\emptinessProblemOneParam}[1]{#1 emptiness problem}
\newcommand{\emptinessProblem}[2]{\emptinessProblemOneParam{#1 #2}}
\newcommand{\existentialOpacityParamEmptinessProblem}{\checkUseMacro{\emptinessProblem{\existentialOpacityText{}}{p}}}
\newcommand{\FullOpacityParamEmptinessProblem}{\checkUseMacro{\emptinessProblem{\FullOpacityText{}}{p}}}
\newcommand{\styleComplexity}[1]{{\sffamily\upshape #1}}
\newcommand{\EXPSPACE}{\styleComplexity{EXPSPACE}}
 \newcommand{\threeNEXPTIME}{\styleComplexity{3NEXPTIME}}
\newcommand{\NEXPTIME}{\styleComplexity{NEXPTIME}}
	\newcommand{\orcidID}[1]{\orcidlink{#1}}
	\newcommand{\homepage}[1]{\href{#1}{\color{gray}\faHome}}
 	\definecolor{colorok}{RGB}{80,80,150}
	\definecolor{colorok}{RGB}{0,0,0}
\newcommand{\eg}{\textcolor{colorok}{e.g.,}\xspace}
\newcommand{\etal}{\textcolor{colorok}{\emph{et al.}}\xspace}
\newcommand{\ie}{\textcolor{colorok}{i.e.,}\xspace}
\newcommand{\suchthat}{\textcolor{colorok}{s.t.}\xspace}
\newcommand{\wrt}{\textcolor{colorok}{w.r.t.}\xspace}
	\title{Execution-time opacity problems in one-clock parametric timed automata\footnote{%
		This is the author (and extended) version of the manuscript of the same name published in the proceedings of the 44th IARCS Annual Conference on Foundations of Software Technology and Theoretical Computer Science (FSTTCS 2024).
		This work is partially supported by the ANR-NRF French-Singaporean research program ProMiS (ANR-19-CE25-0015 / 2019 ANR NRF 0092) and by ANR BisoUS (ANR-22-CE48-0012).
	}%
}
	\title{Execution-time opacity problems in one-clock parametric timed automata}
\author{\'Etienne Andr\'e}{Universit\'e Sorbonne Paris Nord, LIPN, CNRS UMR 7030, F-93430 Villetaneuse, France\\Institut Universitaire de France (IUF) \and \url{https://lipn.univ-paris13.fr/~andre/}}{}{https://orcid.org/0000-0001-8473-9555}{}%
\author{Johan Arcile}{IBISC, Univ Evry, Universit\'e Paris-Saclay, 91025 Evry, France}{johan.arcile@univ-evry.fr}{https://orcid.org/0000-0001-9979-3829}{}
\author{Engel Lefaucheux%
	}{Universit\'e de Lorraine, CNRS, Inria, LORIA, F-54000 Nancy, France\and\url{https://elefauch.github.io/}}{}{https://orcid.org/0000-0003-0875-300X}{} %
\authorrunning{\'E.\ Andr\'e, J.\ Arcile and E.\ Lefaucheux} %
\keywords{Timed opacity, Parametric timed automata, Presburger arithmetic} %
\begin{document}

\ifdefined\VersionAuthor
	\author{}
	\date{}
\fi

\sloppy

\maketitle

\ifdefined\VersionAuthor
	\noindent{}\textbf{Étienne André\orcidID{0000-0001-8473-9555}}
	\\
	{\em\small{}Université Sorbonne Paris Nord, LIPN, CNRS UMR 7030, Villetaneuse, France}  %
	\\
	{\em\small{}Institut Universitaire de France (IUF)}

	\smallskip

	\noindent{}\textbf{Johan Arcile\orcidID{0000-0001-9979-3829}}
	\\
	{\em\small{}IBISC, Univ Evry, Universit\'e Paris-Saclay, 91025 Evry, France}

	\smallskip

	\noindent{}\textbf{Engel Lefaucheux\orcidID{0000-0003-0875-300X}}
	\\
	{\em\small{}Universit\'e de Lorraine, CNRS, Inria, LORIA, F-54000 Nancy, France}
\fi

\begin{abstract}
	Parametric timed automata (PTAs) extend the concept of timed automata, by allowing timing delays not only specified by concrete values but also by parameters, allowing the analysis of systems with uncertainty regarding timing behaviors.
	The full execution-time opacity is defined as the problem in which an attacker must never be able to deduce whether some private location was visited, by only observing the execution time.
	The problem of full ET-opacity emptiness (\ie{} the emptiness over the parameter valuations for which full execution-time opacity is satisfied) is known to be undecidable for general PTAs.
	We therefore focus here on one-clock PTAs with integer-valued parameters over dense time.
	We show that the full ET-opacity emptiness is undecidable for a sufficiently large number of parameters, but is decidable for a single parameter, and exact synthesis can be effectively achieved.
	Our proofs rely on a novel construction as well as on variants of Presburger arithmetics.
	We finally prove an additional decidability result on an existential variant of execution-time opacity.
\end{abstract}

\ifdefined\VersionWithComments
\fi

\ifdefined\VersionAuthor
\else
	\newpage
\fi

\section{Introduction}\label{section:introduction}

\LongVersion{%
	Numerous critical information systems rely on communication via a shared network\LongVersion{, such as the Internet}.
	Data passing through such networks is often sensitive, and requires secrecy.
}%
\LongVersion{%
	If not handled carefully, information such as private data, authentication code, timing information or localization can be accessible to anyone on the network.
}%
As surveyed in~\cite{BGN17}, for some systems, private information may be deduced simply by observation of public information.
For example, it may be possible to infer the content of some memory space from the access times of a cryptographic module.

The notion of \emph{opacity} \cite{Mazare04,BKMR08} concerns information leaks from a system to an attacker; that is, it expresses the power of the attacker to deduce some secret information based on some publicly observable behaviors.
If an attacker observing a subset of the actions cannot deduce whether a given sequence of actions has been performed, then the system is opaque.
Time particularly influences the deductive capabilities of the attacker.
It has been shown in~\cite{GMR07} that it is possible for models that are opaque when timing constraints are omitted, to be non-opaque when those constraints are added to the models.

For this reason, the notion is extended to \emph{timed} opacity in~\cite{Cassez09}, where the attacker can also observe time\LongVersion{, hence untimed system that are opaque may not be so once time is added}.
The input model is timed automata (TAs)~\cite{AD94}, a formalism extending finite-state automata with real-time variables called \emph{clocks}.
It is proved in~\cite{Cassez09} that this version of timed opacity is undecidable for~TAs.

In~\cite{ALMS22}, a less powerful version of opacity is proposed, where the attacker has access only to the system execution time and aims at deducing whether a private location was visited during the system execution.
This version of timed opacity is called \emph{execution-time opacity (ET-opacity)}.
Two main problems are considered in~\cite{ALMS22}:
\begin{oneenumerate}
	\item the existence of at least one execution time for which the system is ET-opaque (\emph{$\exists$-ET-opacity}), and
	\item whether \emph{all} execution times are such that the system is ET-opaque (called \emph{full ET-opacity}).
\end{oneenumerate}%
\LongVersion{%
	Additionally, in~\cite{ALLMS23}, \emph{weak} ET-opacity is defined as a variant where the attacker may deduce that the private location was \emph{not} visited, but of course not that the private location was visited.
}%
These \LongVersion{three}\ShortVersion{two} notions of opacity are proved to be decidable for TAs~\cite{ALLMS23}.
\LongVersion{%

}%
In the same works, the authors then extend ET-opacity to parametric timed automata (PTAs)~\cite{AHV93}.
PTAs are an extension of TAs where timed constraints can be expressed with timing parameters instead of integer constants, allowing to model uncertainty or lack of knowledge\LongVersion{ about such constants}.
The \LongVersion{three}\ShortVersion{two} problems come with two flavors:
\begin{oneenumerate}%
	\item \emph{emptiness} problems: whether the set of parameter valuations guaranteeing a given version of opacity ($\exists$-ET-opacity or full ET-opacity) is empty or not, and
	\item \emph{synthesis} problems: synthesize all parameter valuations for which a given version of opacity holds.
\end{oneenumerate}%
\ShortVersion{Both emptiness problems \problemTOE{} ($\exists$-ET-opacity emptiness) and \problemFTOE{} (full-ET-opacity emptiness)}\LongVersion{All three problems \problemTOE{} ($\exists$-ET-opacity emptiness), \problemFTOE{} (full-ET-opacity emptiness) and \problemWTOE{} (weak-ET-opacity emptiness)}
have been shown to be undecidable for PTAs, while decidable subclasses are exhibited~\cite{ALMS22,ALLMS23}.
A semi-algorithm (\ie{} that may not terminate, but is correct if it does) is provided to solve $\exists$-ET-opacity synthesis (hereafter \problemTOS{}) in \cite{ALMS22}.

\subsection{Contributions}
We address here full-ET-opacity emptiness (\problemFTOE{}) and synthesis (\problemFTOS{}), and $\exists$-ET-opacity emptiness (\problemTOE{}) and synthesis (\problemTOS{}), for PTAs with integer-valued parameters over dense time with the following main theoretical contributions:
\begin{enumerate}
	\item We prove that \textbf{\problemFTOE{} is undecidable} (\cref{corollary-FTOE-undecidable}) for PTAs with a single clock and a sufficiently large number of parameters\LongVersion{, even by restricting the model to integer parameter valuations}.
	\item We prove in contrast that \textbf{\problemFTOE{} is decidable} (\cref{corollary-FTOE-decidable}) for PTAs with a single clock and a single \LongVersion{integer-valued }parameter.
	\item We prove that \textbf{\problemTOE{} is decidable} (\cref{theorem:TOE-div}) for PTAs with a single clock and arbitrarily many \LongVersion{integer-valued }parameters.
		We also exhibit a better complexity for a single parameter over discrete time (\cref{theorem:TOE-div:GH21}).
\end{enumerate}

We focus on one-clock PTAs, as virtually all problems are undecidable for 3~clocks~\cite{Andre19STTT}, and the 2-clock case is an extremely difficult problem, already for reachability~\cite{GH21}.
Our contributions are summarized in \cref{table:summary-problems}.
In order to prove these results, we improve on the semi-algorithm from~\cite{ALMS22} for \problemTOS{} and provide one for \problemFTOS{}.
These solutions are based on the novel notion of \emph{parametric execution times} (\textPET\ ).
The \textPET{} of a PTA is the total elapsed time and associated parameter valuations on all paths between two given locations.
We provide a semi-algorithm for the computation of \textPET{}, 
that builds upon reachability synthesis (\ie the synthesis of parameter valuations for which a set of locations are reachable) for which a semi-algorithm already exists (\cite{JLR15}).
We then show how to resolve \problemTOS{} and \problemFTOS{} problems by performing set operations on \textPET{} of two complementary subsets of the PTA where we respectively consider only private paths and only non-private paths.

We then solve the full ET-opacity emptiness (\problemFTOE{}) problem for PTAs with 1~clock and 1 \LongVersion{integer-valued }parameter, by rewriting the problems in a parametric variant of Presburger arithmetic.
This is done by
\begin{oneenumerate}%
	\item providing a sound and complete method for encoding infinite \textPET{} for PTAs with 1 clock and arbitrarily many parameters over dense time; and
	\item translating them into parametric semi-linear sets, a formalism defined and studied in~\cite{L23}.
\end{oneenumerate}%
With these ingredients, we notably prove that:
\begin{oneenumerate}
	\item \problemFTOE{} is undecidable in general for PTAs with 1 clock and sufficiently many parameters\LongVersion{, even integer-valued}.
This is done by reducing a known undecidable problem of parametric Presburger arithmetic
(whose undecidability comes from Hilbert's 10th problem) to the \problemFTOE{} problem in this context.
	\item \problemTOE{} is decidable for PTAs with 1 clock and arbitrarily many \LongVersion{integer-valued }parameters.
This is done by reducing \problemTOE{} to the existential fragment of Presburger arithmetic with divisibility, \LongVersion{which is }known to be decidable.
\end{oneenumerate}

\subsection{Related works}\label{ss:related}

\LongVersion{%
\subsubsection{Timed opacity in timed automata}
}

The undecidability of timed opacity proved in~\cite{Cassez09} leaves hope for decidability only by modifying the problem (as in~\cite{ALMS22,ALLMS23}), or by restraining the model.
\LongVersion{%
	In~\cite{WZ17}, Wang \etal\ define a notion of initial-state opacity that is proved decidable on real-time automata~\cite{Dima01} (RTAs).
	RTAs are a strict subclass of TA with a single clock reset at each transition\LongVersion{, meaning the only timed information represented by this formalism is the time elapsed in the current location; most properties are decidable, even complement and language inclusion (undecidable for the full class of TAs)}.
	In this context, an intruder can observe the elapsed time on a given subset of transitions, and the system is opaque if an intruder cannot determine whether the system starts or not from a given secret state.
	In~\cite{WZA18} the authors extend this work to timed language-opacity à la~\cite{Cassez09}\ShortVersion{ and prove decidability of the problem}.
	\LongVersion{%
		The positive decidability properties of RTAs allow to check the emptiness of the inclusion between the language accepted by the system 	and the language where secret words are not allowed.
		Therefore, timed language-opacity is decidable on the RTA subclass.
	}

}\ShortVersion{%
	In~\cite{WZ17,WZA18}, (initial state) opacity is shown to be decidable on a restricted subclass of TAs called real-time automata~\cite{Dima01}.
}%
\LongVersion{%
	In~\cite{AEYM21}, an intruder have access to a subset of actions, along with timed information, as in~\cite{Cassez09}.
	The originality of~\cite{AEYM21} is to consider a \emph{time-bounded} framework.
	\LongVersion{
		As in~\cite{Cassez09}, a secret location is timed opaque if the intruder cannot infer from the observation of any execution that the system has reached this particular location.
	}%
	Two variants of this definition of opacity are proposed.
	The first one, called \emph{timed bounded opacity} implies that the location is opaque up to a given time duration, and is decidable for non-Zeno TA\LongVersion{ (\ie{} TA where it is impossible for an infinity of actions to occur in finite time)}.
	The second one, \emph{$\delta$-duration bounded opacity} implies that the location must remain secret for at least $\delta$ time units after it is reached.
	This problem can then be solved by timed bounded inclusion checking\LongVersion{, which is decidable for timed automata~\cite{ORW09};
	this is contrast with the (potentially unbounded) language inclusion checking, known to be undecidable~\cite{AD94}}.
	This time-bounded setting is the crux that explains the difference of decidability between~\cite{Cassez09} and~\cite{AEYM21}.
}\ShortVersion{%
	In~\cite{AEYM21}, a notion of \emph{timed bounded opacity}, where the secret has an expiration date, and over a time-bounded framework, is proved decidable.
}%
Opacity over subclasses of TAs (such as one-clock or one-actions TAs) is considered in~\cite{ADL24,AGWZH24} and over discrete time in~\cite{KKG24}.

\LongVersion{%
\subsubsection{Timed opacity in parametric timed automata}
}
In \cite{ALMS22}, $\exists$-ET-opacity synthesis ($\problemTOS$) is solved using a semi-algorithm.
The method is based on a self-composition of the PTA with $m$ parameters and $n$ clocks, where the resulting model consists of $m+1$ 
parameters and $2n+1$ clocks.
The method terminates if the symbolic state space of this self-composition is finite.
Our work proposes in contrast an approach based on set operations on parametric execution times (\textPET{}) of both complementary subsets of the PTA where we respectively consider only private paths and only non-private paths.
Those submodels are each composed of $m+1$ parameters and $n+1$ clocks.
Our new method terminates if the symbolic state spaces of both submodels are finite.
Another improvement is that the method described here also supports full timed opacity synthesis ($\problemFTOS$).%

\LongVersion{%
\subsubsection{Decision problems for parametric timed automata}
}
The reachability emptiness problem (\ie{} the emptiness over the valuations set for which a given target location is reachable) is known to be undecidable in general since~\cite{AHV93}.
The rare decidable settings require a look at the number of parametric clocks (\ie{} compared at least once in a guard or invariant to a parameter), non-parametric clocks and parameters; throughout this paper, we denote these 3~numbers using a triple $(pc, npc, p)$.
Reachability emptiness is decidable
	for $(1, \arbitrarilyMany, \arbitrarilyMany)$-PTAs (``$\arbitrarilyMany$'' denotes ``arbitrarily many'' for decidable cases, and ``sufficiently many'' for undecidable cases) over discrete time~\cite{AHV93} or dense time with integer-valued parameters~\cite{BBLS15},
	for $(1, 0, \arbitrarilyMany)$-PTAs over dense time over rational-valued parameters~\cite{ALM20},
		and
	for $(2, \arbitrarilyMany, 1)$-PTAs over discrete time~\cite{BO17,GH21};
and it is undecidable
	for $(3, \arbitrarilyMany, 1)$-PTAs over discrete or dense time~\cite{BBLS15},
		and
	for $(1, 3, 1)$-PTAs over dense time only for rational-valued parameters~\cite{Miller00}.
See~\cite{Andre19STTT} for a complete survey as of~2019.

\LongVersion{%
\subsection{Outline}
}\ShortVersion{\medskip}
\cref{section:preliminaries} recalls the necessary preliminaries\LongVersion{, \ie{} syntax and semantics of PTAs, as well as definitions of execution-time opacity that where proposed in~\cite{ALMS22,ALLMS23}}.
\cref{section:PETS} introduces one of our main technical proof ingredients, \ie{} the definition of \textPET{}, and \textPET{}-based semi-algorithms for \problemTOS{} and \problemFTOS{}.
\cref{section:TOS} considers the \problemFTOE{} problem over $(1, 0, \arbitrarilyMany)$-PTAs (undecidable) and $(1, 0, 1)$-PTAs (decidable).
\cref{section:TOE} proves decidability of \problemTOE{} for $(1, 0, \arbitrarilyMany)$-PTAs.
We also give a better complexity for $(1, 0, 1)$-PTAs over discrete time.
\cref{section:conclusion} concludes.

\section{Preliminaries}\label{section:preliminaries}

\LongVersion{%
\subsection{Clocks, parameters and guards}\label{ss:clocks}
}

We let $\Time$ be the domain of the time, which will be either non-negative reals $\setRgeqzero$ (continuous-time semantics) or naturals $\setN$ (discrete-time semantics).
Unless otherwise specified, we assume $\Time = \setRgeqzero$.

\emph{Clocks} are real-valued variables that all evolve over time at the same rate.
We assume a set~$\ClockSet = \{ \clocki{1}, \dots, \clocki{\ClockCard} \} $ of \emph{clocks}.
A \emph{clock valuation} is a function
$\clockval : \ClockSet \rightarrow \Time$\LongVersion{, assigning a non-negative value to each clock}.
We write $\ClocksZero$ for the clock valuation assigning $0$ to all clocks.
Given a constant $\numConstant \in \Time$, $\clockval + \numConstant$ denotes the valuation \suchthat\ $(\clockval + \numConstant)(\clock) = \clockval(\clock) + \numConstant$, for all $\clock \in \ClockSet$.
Given $\resets \subseteq \ClockSet$, we define the \emph{reset} of a valuation~$\clockval$, denoted by $\reset{\clockval}{\resets}$, as follows: $\reset{\clockval}{\resets}(\clock) = 0$ if $\clock \in \resets$, and $\reset{\clockval}{\resets}(\clock)=\clockval(\clock)$ otherwise.

A \emph{(timing) parameter} is an unknown %
integer-valued constant of a model.
We assume a set~$\ParamSet = \{ \parami{1}, \dots, \parami{\ParamCard} \} $ of \emph{parameters}.
A \emph{parameter valuation} $\pval$ is a function %
$\pval : \ParamSet \rightarrow \setN$.

We assume ${\compOp} \in \{<, \leq, =, \geq, >\}$.
A \emph{clock guard}~$\constraint$ is a conjunction of inequalities over $\ClockSet \cup \ParamSet$ of the form
$\clock \compOp \sum_{1 \leq i \leq \ParamCard} \alpha_i \parami{i} + \numConstant$, with
$\clock \in  \ClockSet$, 
$\parami{i} \in \ParamSet$,
and
$\alpha_i, \numConstant \in \setZ$.
Given~$\constraint$, we write~$\clockval\models\pval(\constraint)$ if %
the expression obtained by replacing each~$\clock$ with~$\clockval(\clock)$ and each~$\param$ with~$\pval(\param)$ in~$\constraint$ evaluates to true.

\subsection{Parametric timed automata}

Parametric timed automata (PTAs) extend TAs with parameters within guards and invariants in place of integer constants~\cite{AHV93}.
We also add to the standard definition of PTAs a special private location, which will be used to define our subsequent opacity concepts.

\begin{definition}[PTA~\cite{AHV93}]
	A \PTAtext{} $\PTA$ is a tuple \mbox{$\PTA = \PTAprivextend$}, where:
	\begin{oneenumerate}
		\item $\ActionSet$ is a finite set of actions;
		\item $\LocSet$ is a finite set of locations;
		\item $\locinit \in \LocSet$ is the initial location;
		\item $\locpriv \in \LocSet$ is a special private location;
		\item $\locfinal \in \LocSet$ is the final location;
		\item $\ClockSet$ is a finite set of clocks;
		\item $\ParamSet$ is a finite set of parameters;
		\item $\invariant$ is the invariant, assigning to every $\loc\in \LocSet$ a clock guard $\invariant(\loc)$ (called \emph{invariant});
		\item $\EdgeSet$ is a finite set of edges  $\edge = (\loc,\guard,\action,\resets,\loc')$
		where~$\loc,\loc'\in \LocSet$ are the source and target locations, $\action \in \ActionSet$,
		$\resets\subseteq \ClockSet$ is a set of clocks to be reset, and $\guard$ is a clock guard.
	\end{oneenumerate}
	\label{def:PTA}
\end{definition}
\begin{figure}[tb]
	\centering

	\begin{subfigure}[b]{0.45\textwidth}
	\centering
	\begin{tikzpicture}[PTA, scale=1.5, xscale=1, yscale=.4]

		\node[location, initial] at (0, -1) (s0) {$\loci{0}$};

		\node[location, private] at (1, 0) (s2) {$\locpriv$};

		\node[location, final] at (2, -1) (s1) {$\locfinal$};

		\node[invariant, below=of s0,yshift=2.8em] {$\styleclock{\clock} \leq 3$};
		\node[invariant, above=of s2,yshift=-2.8em] {$\styleclock{\clock} \leq \styleparam{\parami{2}}$};

		\path (s0) edge[bend left] node[align=center]{$\styleclock{\clock} \geq \styleparam{\parami{1}}$} (s2); %
		\path (s0) edge[] node[align=center]{} (s1); %
		\path (s2) edge[bend left] node[align=center]{} (s1); %

	\end{tikzpicture}

	\caption{A PTA example~$\PTA$}
	\label{figure:example-PTA}
	\end{subfigure}
	\hfill{}
	\begin{subfigure}[b]{0.45\textwidth}
	\centering
	\begin{tikzpicture}[PTA, scale=1.5, xscale=1, yscale=.4]

		\node[location, initial] at (0, -1) (s0) {$\loci{0}$};

		\node[location, private] at (1, 0) (s2) {$\locpriv$};

		\node[location, final, urgent] at (2, -1) (s1) {$\locfinal$};

		\node[invariant, below=of s0,yshift=2.8em] {$\styleclock{\clock} \leq 3$};
		\node[invariant, above=of s2,yshift=-2.8em] {$\styleclock{\clock} \leq \styleparam{\parami{2}}$};

		\path (s0) edge[bend left] node[align=center]{$\styleclock{\clock} \geq \styleparam{\parami{1}}$} (s2); %
		\path (s0) edge[] node[below,align=center]{$\clockabs = \paramabs$} (s1); %
		\path (s2) edge[bend left] node[align=center]{$\clockabs = \paramabs$} (s1); %

	\end{tikzpicture}

	\caption{$\PTA'$}
	\label{figure:example-PTA-transformed-PET}
	\end{subfigure}
	\hfill{}
	\begin{subfigure}[b]{0.45\textwidth}
	\centering
	\begin{tikzpicture}[PTA, scale=1.5, xscale=1, yscale=.4]

		\node[location, initial] at (0, -1) (s0) {$\loci{0}$};

		\node[location, private] at (1, 0) (s2) {$\locpriv$};

		\node[location, final] at (2, -1) (s1) {$\locfinal$};

		\node[invariant, below=of s0,yshift=2.8em] {$\styleclock{\clock} \leq 3$};
		\node[invariant, above=of s2,yshift=-2.8em] {$\styleclock{\clock} \leq \styleparam{\parami{2}}$};

		\path (s0) edge[bend left] node[align=center]{$\styleclock{\clock} \geq \styleparam{\parami{1}}$\\$b \assign \true$} (s2); %
		\path (s0) edge[] node[below,align=center]{$b = \true$} (s1); %
		\path (s2) edge[bend left] node[align=center]{$b = \true$} (s1); %

	\end{tikzpicture}

	\caption{$\PTA^{\locpriv}_{\locfinal}$}
	\label{figure:example-PTA-transformed-priv}
	\end{subfigure}
	\hfill{}
	\begin{subfigure}[b]{0.45\textwidth}
	\centering
	\begin{tikzpicture}[PTA, scale=1.5, xscale=1, yscale=.4]

		\node[location, initial] at (0, -1) (s0) {$\loci{0}$};

		\node[location, final] at (2, -1) (s1) {$\locfinal$};

		\node[invariant, below=of s0,yshift=2.8em] {$\styleclock{\clock} \leq 3$};

		\path (s0) edge[] node[align=center]{} (s1); %

	\end{tikzpicture}

	\caption{$\PTA^{\neg \locpriv}_{\locfinal}$}
	\label{figure:example-PTA-transformed-pub}
	\end{subfigure}

	\caption{A PTA example and its transformed versions. The yellow dotted location is urgent.}
\end{figure}
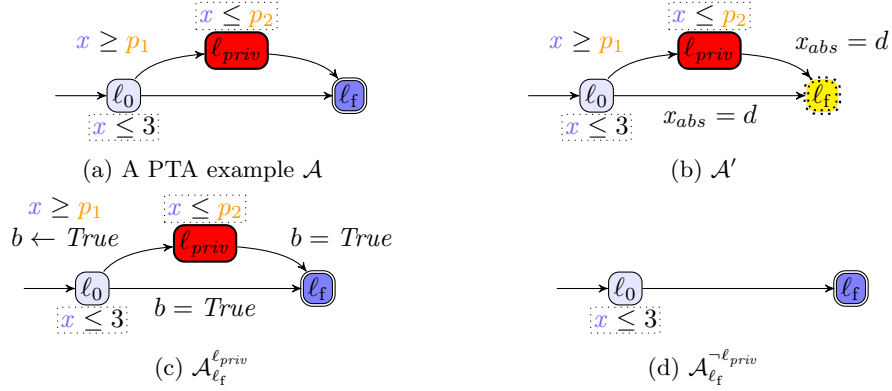

Given a parameter valuation~$\pval$, we denote by $\valuate{\PTA}{\pval}$ the non-parametric structure where all occurrences of a parameter~$\parami{i}$ have been replaced by~$\pval(\parami{i})$.
\begin{definition}[Reset-free PTA]\label{def:reset-freePTA}
	A \emph{reset-free PTA} $\PTA = \PTAprivextend$ is a PTA where $\forall\ (\loc,\guard,\action,\resets,\loc') \in \EdgeSet$, $\resets = \emptyset$.
\end{definition}
\begin{example}
	Consider the PTA~$\PTA$ in \cref{figure:example-PTA}.
	It has three locations, one clock and two parameters (actions are omitted).
	``$\clock \leq \parami{2}$'' is the invariant of~$\locpriv$, and the transition from~$\loc_0$ to~$\locpriv$ has guard ``$\clock \geq \parami{1}$''.
	In this example, $\clock$ is never reset, and therefore $\PTA$ happens to be reset-free.
\end{example}

\LongVersion{%
Let us now recall the concrete semantics of PTAs.
}

\begin{definition}[Semantics of a \TAtext{}]
	Given a \PTAtext{} $\PTA = \PTAprivextend$ and a parameter valuation~$\pval$,
	the semantics of the TA $\valuate{\PTA}{\pval}$ is given by the \TTStext{} $\semantics{\valuate{\PTA}{\pval}} = \semanticsextend$, with
	\begin{enumerate}
		\item $\StateSet = \{ (\loc, \clockval) \in \LocSet \times \setRgeqzero^\ClockCard \mid \clockval \models \valuate{\invariant(\loc)}{\pval} \}$,
		\LongVersion{\item }$\concstateinit = (\locinit, \ClocksZero) $,
		\item  $\transition$ consists of the discrete and (continuous) delay transition relations:
		\begin{enumerate}
			\item discrete transitions: $(\loc,\clockval) \transitionWith{\edge} (\loc',\clockval')$,
			if $(\loc, \clockval) , (\loc',\clockval') \in \StateSet$, and there exists $\edge = (\loc,\guard,\action,\resets,\loc') \in \EdgeSet$, such that $\clockval'= \reset{\clockval}{\resets}$, and $\clockval\models\pval(\guard$).
			\item delay transitions: $(\loc,\clockval) \transitionWith{\numConstant} (\loc, \clockval+\numConstant)$, with $\numConstant \in \setRgeqzero$, if $\forall \numConstant' \in [0, \numConstant], (\loc, \clockval+\numConstant') \in \StateSet$.
		\end{enumerate}
	\end{enumerate}
\end{definition}

Moreover we write $(\loc, \clockval)\longuefleche{(\numConstant, \edge)} (\loc',\clockval')$ for a combination of a delay and discrete transition if
$\exists  \clockval'' :  (\loc,\clockval) \transitionWith{\numConstant} (\loc,\clockval'') \transitionWith{\edge} (\loc',\clockval')$.

Given a TA~$\valuate{\PTA}{\pval}$ with concrete semantics $\semanticsextend$, we refer to the states of~$\StateSet$ as the \emph{concrete states} of~$\valuate{\PTA}{\pval}$.
A \emph{run} of~$\valuate{\PTA}{\pval}$ is an alternating sequence of concrete states of~$\valuate{\PTA}{\pval}$ and pairs of edges and delays starting from the initial state $\concstateinit$ of the form
$(\loci{0}, \clockval_{0}), (d_0, \edge_0), (\loci{1}, \clockval_{1}), \cdots$
with
$i = 0, 1, \dots$, $\edge_i \in \EdgeSet$, $d_i \in \setRgeqzero$ and
$(\loci{i}, \clockval_{i}) \longuefleche{(d_i, \edge_i)} (\loci{i+1}, \clockval_{i+1})$.

Given a state~$\concstate = (\loc, \clockval)$, we say that $\concstate$ is \emph{reachable} in~$\valuate{\PTA}{\pval}$ if $\concstate$ appears in a run of~$\valuate{\PTA}{\pval}$.
By extension, we say that $\loc$ is reachable in~$\valuate{\PTA}{\pval}$; and by extension again, given a set~$\LocsTarget$ of locations, we say that $\LocsTarget$ is reachable in~$\valuate{\PTA}{\pval}$ if there exists $\loc \in \LocsTarget$ such that $\loc$ is reachable in~$\valuate{\PTA}{\pval}$.

Given a finite run $\varrun : (\loci{0}, \clockval_{0}), (d_0, \edgei{0}), (\loci{1}, \clockval_{1}), \cdots, (d_{i-1}, \edgei{i-1}), (\loci{n}, \clockval_{n})$, the \emph{duration} of $\varrun$ is $\runduration{\varrun} = \sum_{0 \leq i \leq n-1} d_i$.
We also say that $\loci{n}$ is reachable in time~$\runduration{\varrun}$.

\LongVersion{%
\subsection{Symbolic semantics}\label{ss:symbolic}
}

Let us now recall the symbolic semantics of PTAs (see \eg{} \cite{HRSV02}).
\LongVersion{%
\subsubsection{Constraints}

}%
We first\LongVersion{ need to} define operations on constraints.
A \emph{linear term} over $\ClockSet \cup \ParamSet$ is of the form $\sum_{1 \leq i \leq \ClockCard} \alpha_i \clock_i + \sum_{1 \leq j \leq \ParamCard} \beta_j \param_j + \numConstant$, with
	$\clock_i \in \ClockSet$,
	$\param_j \in \ParamSet$,
	and
	$\alpha_i, \beta_j, \numConstant \in \setZ$.
A \emph{constraint}~$\Constr$ (\ie{} a convex polyhedron\LongVersion{\footnote{%
	Strictly speaking, we manipulate \emph{polytopes}, while polyhedra refer to 3-dimensional polytopes.
	However, for sake of consistency with the parametric timed model checking literature, and with the Parma polyhedra library (among others), we refer to these geometric objects as \emph{polyhedra}.
}}) over $\ClockSet \cup \ParamSet$ is a conjunction of inequalities of the form $\lterm \compOp 0$, where $\lterm$ is a linear term.
\LongVersion{

}%
Given a parameter valuation~$\pval$, $\valuate{\Constr}{\pval}$ denotes the constraint over~$\ClockSet$ obtained by replacing each parameter~$\param$ in~$\Constr$ with~$\pval(\param)$.
Likewise, given a clock valuation~$\clockval$, $\valuate{\valuate{\Constr}{\pval}}{\clockval}$ denotes the expression obtained by replacing each clock~$\clock$ in~$\valuate{\Constr}{\pval}$ with~$\clockval(\clock)$.
We write $\clockval \models \valuate{\Constr}{\pval}$ whenever $\valuate{\valuate{\Constr}{\pval}}{\clockval}$ evaluates to true.
We say that %
$\pval$ \emph{satisfies}~$\Constr$,
denoted by $\pval \models \Constr$,
if the set of clock valuations satisfying~$\valuate{\Constr}{\pval}$ is nonempty.
We say that $\Constr$ is \emph{satisfiable} if $\exists \clockval, \pval \text{ s.t.\ } \clockval \models \valuate{\Constr}{\pval}$.
\LongVersion{

}%
We define the \emph{time elapsing} of~$\Constr$, denoted by $\timelapse{\Constr}$, as the constraint over $\ClockSet$ and $\ParamSet$ obtained from~$\Constr$ by delaying all clocks by an arbitrary amount of time.
That is,
\(\clockval' \models \valuate{\timelapse{\Constr}}{\pval} \text{ if } \exists \clockval : \ClockSet \to \setRgeqzero, \exists \numConstant \in \setRgeqzero \text { s.t.\ } \clockval \models \valuate{\Constr}{\pval} \land \clockval' = \clockval + \numConstant \text{.}\)
Given $\resets \subseteq \ClockSet$, we define the \emph{reset} of~$\Constr$, denoted by $\reset{\Constr}{\resets}$, as the constraint obtained from~$\Constr$ by resetting the clocks in~$\resets$ to $0$, and keeping the other clocks unchanged.
That is,
\[\clockval' \models \valuate{\reset{\Constr}{\resets}}{\pval} \text{ if } \exists \clockval : \ClockSet \to \setRgeqzero \text { s.t.\ } \clockval \models \valuate{\Constr}{\pval} \land \forall \clock \in \ClockSet
	\left \{ \begin{array}{ll}
		 \clockval'(\clock) = 0 & \text{if } \clock \in \resets\\
		 \clockval'(\clock) = \clockval(\clock) & \text{otherwise.}
	\end{array} \right .\]
We denote by $\projectP{\Constr}$ the projection of~$\Constr$ onto~$\ParamSet$, \ie{} obtained by eliminating the variables not in~$\ParamSet$ (\eg{} using Fourier-Motzkin~\cite{Schrijver86}).

\begin{definition}[Symbolic state]
	A symbolic state is a pair $(\loc, \Constr)$ where $\loc \in \LocSet$ is a location, and $\Constr$ its associated parametric zone.
\end{definition}
\begin{definition}[Symbolic semantics]\label{def:PTA:symbolic}
	Given a PTA $\PTA = \PTAprivextend$, %
	the symbolic semantics of~$\PTA$ is the labeled transition system called \emph{parametric zone graph}
	$ \PZG{\PTA} = ( \EdgeSet, \SymbStateSet, \symbstateinit, \SymbTransitions )$, with
	\begin{itemize}
		\item $\SymbStateSet = \{ (\loc, \Constr) \mid \Constr \subseteq \invariant(\loc) \}$, %
		\LongVersion{\item }$\symbstateinit = \big(\locinit, \timelapse{(\bigwedge_{1 \leq i\leq\ClockCard}\clock_i=0)} \land \invariant(\loc_0) \big)$,
				and
		\item $\big((\loc, \Constr), \edge, (\loc', \Constr')\big) \in {\SymbTransitions}$ if $\edge = (\loc,\guard,\action,\resets,\loc') \in \EdgeSet$ and
			\[\Constr' = \timelapse{\big(\reset{(\Constr \land \guard)}{\resets}\land \invariant(\loc')\big )} \land \invariant(\loc') \text{\ with $\Constr'$ satisfiable.}\]
	\end{itemize}

\end{definition}

That is, in the parametric zone graph, nodes are symbolic states, and arcs are labeled by \emph{edges} of the original PTA.

\subsection{Reachability synthesis}

We use reachability synthesis to solve the problems defined in \cref{section:opacity}.
This procedure, called \EFsynth{}, takes as input a PTA~$\PTA$ and a set of target locations~$\LocsTarget$, and attempts to synthesize all parameter valuations~$\pval$ for which~$\LocsTarget$ is reachable in~$\valuate{\PTA}{\pval}$.
$\EFsynth(\PTA, \LocsTarget)$ was formalized in \eg{} \cite{JLR15} and is a procedure that may not terminate, but that computes an exact result (sound and complete) if it terminates.
\LongVersion{%
\EFsynth{} traverses the parametric zone graph of~$\PTA$.
}
\subsection{Execution-time opacity problems\ShortVersion{~\cite{ALLMS23}}}\label{section:opacity}
\LongVersion{%
	We recall here the notion of execution-time opacity~\cite{ALMS22,ALLMS23}.
	This form of opacity is such that the observation is limited to the time to reach a designated location.
	This section recalls relevant definitions from~\cite{ALMS22,ALLMS23}.
}

\LongVersion{%
\subsubsection{Defining the execution times}\label{sec:opacity:preliminaries:PrivPubVisit} %
}%

Given a \TAtext{}~$\valuate{\PTA}{\pval}$ and a run~$\varrun$, we say that $\locpriv$ is \emph{visited on the way to~$\locfinal$ in~$\varrun$} if $\varrun$ is of the form
\( (\loci{0}, \clockval_0), (d_0, \edgei{0}), (\loci{1}, \clockval_1), \cdots, (\loci{m}, \clockval_m), (d_m, \edgei{m}), \cdots (\loci{n}, \clockval_n)\)
\noindent{}for some~$m,n \in \setN$ such that $\loci{m} = \locpriv$, $\loci{n} = \locfinal$ and $\forall 0 \leq i \leq n-1, \loci{i} \neq \locfinal$.
We denote by $\PrivVisit{\valuate{\PTA}{\pval}}$ the set of those runs, and refer to them as \emph{private} runs.
We denote by $\PrivDurVisit{\valuate{\PTA}{\pval}}$ the set of all the durations of these runs.

Conversely, we say that
$\locpriv$ is \emph{avoided on the way to~$\locfinal$ in~$\varrun$}
if $\varrun$ is of the form
\((\loci{0}, \clockval_0), (d_0, \edgei{0}), (\loci{1}, \clockval_1), \cdots, (\loci{n}, \clockval_n )\)
\noindent{}with $\loci{n} = \locfinal$ and $\forall 0 \leq i < n, \loci{i} \notin \set{\locpriv,\locfinal}$.
We denote the set of those runs by~$\PubVisit{\valuate{\PTA}{\pval}}$, referring to them as \emph{public} runs,
and by $\PubDurVisit{\valuate{\PTA}{\pval}}$ the set of all the durations of these public runs.
\LongVersion{%

}%
Therefore, $\PrivDurVisit{\valuate{\PTA}{\pval}}$ (resp.\ $\PubDurVisit{\valuate{\PTA}{\pval}}$) is the set of all the durations of the runs for which $\locpriv$ is visited (resp.\ avoided) on the way to~$\locfinal$.
\LongVersion{%

}%
These concepts can be seen as the set of \execTimesText{} from the initial location~$\locinit$ to the final location $\locfinal$ while visiting (resp.\ not visiting) a private location~$\locpriv$.
Observe that, from the definition of the duration of a run%
, this ``\execTimeText{}'' does not include the time spent in~$\locfinal$.

\LongVersion{%
\subsubsection{Defining execution-time opacity}
}

We now recall formally the concept of ``\opacityText{} for a set of durations (or \execTimesText{}) $\execTimes$'': a system is \emph{\opaqueText{} for \execTimesText{}~$\execTimes$} whenever, for any duration in~$\execTimes$, it is not possible to deduce whether the system visited~$\locpriv$ or not.
\begin{definition}[\Acf*{opacity} for $\execTimes$]\label{def:opacity:TOSEM:ET-opacity}
	Given a \TAtext{}~$\valuate{\PTA}{\pval}$ and a set of \execTimesText{}~$\execTimes$,
	we say that $\valuate{\PTA}{\pval}$ is \emph{\opaqueTextdef{} for \execTimesText{}~$\execTimes$}
	if $\execTimes \subseteq (\PrivDurVisit{\valuate{\PTA}{\pval}} \cap \PubDurVisit{\valuate{\PTA}{\pval}})$.
\end{definition}

In the following, we will be interested in the \emph{existence} of such an \execTimeText{}.
We say that a \TAtext{} is \existentialOpaqueText{} if it is \opaqueText{} for a non-empty set of \execTimesText{}.

\begin{definition}[\existentialOpacityTextForSec{}]\label{def:opacity:TOSEM:exist-ET-opacity}
	A \TAtext{}~$\valuate{\PTA}{\pval}$ is \emph{\existentialOpaqueText{}}
	if $(\PrivDurVisit{\valuate{\PTA}{\pval}} \cap \PubDurVisit{\valuate{\PTA}{\pval}}) \neq \emptyset$.
\end{definition}

\LongVersion{%
If one does not have the ability to tune the system (\ie{} change internal delays, or add some \stylecode{Thread.sleep()} statements in a program), one may be first interested in knowing whether the system is \opaqueText{} for all \execTimesText{}. %
In other words, }%
\ShortVersion{In addition, }%
a system is \emph{\fullOpaqueText{}} if, for any possible measured \execTimeText{}, an attacker is not able to deduce whether~$\locpriv$ was visited or not.

\begin{definition}[\fullOpacityTextForSec{}]\label{def:opacity:TOSEM:full-ET-opacity}
	A \TAtext{}~$\valuate{\PTA}{\pval}$ is \emph{\fullOpaqueText{}}
	if $\PrivDurVisit{\valuate{\PTA}{\pval}} = \PubDurVisit{\valuate{\PTA}{\pval}}$.
\end{definition}

\LongVersion{%
That is, a system is \fullOpaqueText{} if, for any \execTimeText{} ~$\paramd$, a run of duration~$\paramd$ reaches~$\locfinal$ after visiting~$\locpriv$ iff another run of duration~$\paramd$ reaches~$\locfinal$ without visiting~$\locpriv$.
}
\begin{example}
	Consider again the PTA~$\PTA$ in \cref{figure:example-PTA}.
	Let $\pval$ \suchthat{} $\pval(\parami{1}) = 1$ and $\pval(\parami{2}) = 4$.
	Then $\pval(\PTA)$ is \existentialOpaqueText{} since there is at least one \execTimeText{} for which $\pval(\PTA)$ is \opaqueText{}.
	Here, $\pval(\PTA)$ is \opaqueText{} for \execTimesText{} $[1,3]$.
	However, $\pval(\PTA)$ is not \fullOpaqueText{} since there is at least one \execTimeText{} for which $\pval(\PTA)$ is not \opaqueText{}.
	Here, $\pval(\PTA)$ is not \opaqueText{} for \execTimesText{} $[0,1)$ (which can only occur on a public run) and for \execTimesText{} $(3,4]$ (which can only occur on a private run).
\end{example}

\LongVersion{%
\subsubsection{Decision and computation problems}\label{sec:opacity:TOSEM:problems}
}

\LongVersion{%
\paragraph{Decision problems} %
}

Let us consider the following decision problems\LongVersion{, \ie{} the problem of checking the \emph{emptiness} of the set of parameter valuations guaranteeing \existentialOpacityText{} (hereafter $\problemTOE$) and its counterpart for \fullOpacityText{} (hereafter $\problemFTOE$)}:

\defProblem
{\existentialOpacityParamEmptinessProblem{} ($\problemTOE$)}
{A \PTAtext{}~$\PTA$
}
{Decide the emptiness of the set of \LongVersion{parameter }valuations $\pval$
\suchthat{} $\valuate{\PTA}{\pval}$ is \existentialOpaqueText{}.}

\LongVersion{%
	The negation of the \existentialOpacityParamEmptinessProblem{} consists in deciding whether there exists at least one parameter valuation for which $\valuate{\PTA}{\pval}$ is \existentialOpaqueText{}.
}

\defProblem
{\FullOpacityParamEmptinessProblem{} ($\problemFTOE$)}
{A \PTAtext{}~$\PTA$
}
{Decide the emptiness of the set of \LongVersion{parameter }valuations $\pval$ \suchthat{} $\valuate{\PTA}{\pval}$ is \fullOpaqueText{}.}

\LongVersion{%
	Equivalently, we are interested in deciding whether there exists at least one parameter valuation for which $\valuate{\PTA}{\pval}$ is \fullOpaqueText{}.
}

\LongVersion{%
\paragraph{Synthesis problems} %
}

The synthesis counterpart allows for a higher-level problem aiming at synthesizing (ideally the entire set of) parameter valuations~$\pval$ for which $\valuate{\PTA}{\pval}$ is \existentialOpaqueText{} or \fullOpaqueText{}.

\defProblem
{\existentialOpacityParamSynthesisProblem{} ($\problemTOS$)}
{A \PTAtext{}~$\PTA$
}
{Synthesize the set of all \LongVersion{parameter }valuations $\pval$ \suchthat{} $\valuate{\PTA}{\pval}$ is \existentialOpaqueText{}.}

\defProblem
{\FullOpacityParamSynthesisProblem{} ($\problemFTOS$)}
{A \PTAtext{}~$\PTA$
}
{Synthesize the set of all \LongVersion{parameter }valuations $\pval$ \suchthat{} $\valuate{\PTA}{\pval}$ is \fullOpaqueText{}.}

\section{A parametric execution times-based semi-algorithm for \problemTOS{} and \problemFTOS{}}\label{section:PETS}

One of our main results is the proof that both \problemTOS{} and \problemFTOS{} can be deduced from set operations on two sets representing respectively all the durations and parameter valuations of the runs for which $\locpriv$ is reached (resp.\ avoided) on the way to~$\locfinal$.
Those sets can be seen as a parametrized version of $\PrivDurVisit{\valuate{\PTA}{\pval}}$ and $\PubDurVisit{\valuate{\PTA}{\pval}}$.
In order to compute such sets, we propose here the novel notion of parametric execution times.
(Note that our partial solution for \textPET{} construction and  semi-algorithms for \problemTOS{} and \problemFTOS{} work perfectly for \emph{rational}-valued parameters too, and that they are not restricted to 1-clock PTAs.)

\subsection{Parametric execution times}\label{ss:PETS}

The parametric execution times (\textPET{}) are the parameter valuations and execution times of the runs to~$\locfinal$. %

\begin{definition}
\label{defPET}
	Given a PTA~$\PTA$ with final location~$\locfinal$, the \emph{parametric execution times} of~$\PTA$ are defined as
	$\PET{\PTA} = \{ (\pval, d) \mid \exists \varrun \text{ in } \valuate{\PTA}{\pval} \text{ such that } d = \duration(\varrun)\ \land\ \varrun \text{ is of the form } (\loc_0, \clockval_0), (d_0, \edge_0), \cdots, (\loc_n, \clockval_n)$ for some $n \in \setN$ such that $\loc_n = \locfinal$ and $\forall 0 \leq i \leq n-1, \loc_i \neq \locfinal \}$.
\end{definition}

By definition, we only consider paths up to the point where $\locfinal$ is reached, meaning that execution times do not include the time elapsed in $\locfinal$, and that runs that reach $\locfinal$ more than once are only considered up to their first visit of~$\locfinal$.

\begin{example}
	Consider again the PTA~$\PTA$ in \cref{figure:example-PTA}.
	Then $\PET{\PTA} $ is
	$(d \leq 3 \land \parami{1} \geq 0 \land \parami{2} \geq 0)    \lor   (0 \leq \parami{1} \leq 3 \land \parami{1} \leq d \leq \parami{2})$.
\end{example}
\subsubsection{Partial solution}

Synthesizing parametric execution times is in fact equivalent to a reachability synthesis where the PTA is enriched (in particular by adding a clock measuring the total execution time).

\begin{restatable}{proposition}{propGeneralPETS}
\label{proposition:GeneralPETS}
	Let $\PTA$ be a PTA, and $\locfinal$ the final location of~$\PTA$.\\
	Let $\PTA'$ be a copy of~$\PTA$ \suchthat{}:
	\begin{itemize}
	\item a clock $\clockabs$ is added and initialized at $0$ (it does not occur in any guard or reset);
	\item a parameter $\paramabs$ is added;
	\item $\locfinal$ is made \emph{urgent} (\ie{} time is not allowed to pass in $\locfinal$), all outgoing edges from $\locfinal$ are pruned and a guard $\clockabs = \paramabs$ is added to all incoming edges to $\locfinal$.
	\end{itemize}
	Then, $\PET{\PTA} = \EFsynth(\PTA', \{\locfinal\})$.
\end{restatable}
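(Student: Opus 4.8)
The plan is to prove the two inclusions separately, in both cases by transporting runs between $\valuate{\PTA}{\pval}$ and $\valuate{\PTA'}{\pval'}$, where $\pval'$ is the parameter valuation over $\ParamSet \cup \{\paramabs\}$ that coincides with $\pval$ on $\ParamSet$ and satisfies $\pval'(\paramabs) = d$. Throughout, I would identify a pair $(\pval, d)$ occurring in $\PET{\PTA}$ with such a $\pval'$, so that $\EFsynth(\PTA', \{\locfinal\})$ — which, by its soundness and completeness recalled above, denotes exactly the set of valuations of the $\ParamSet \cup \{\paramabs\}$ parameters for which $\locfinal$ is reachable in $\valuate{\PTA'}{\pval'}$ — can be compared with $\PET{\PTA}$.

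First I would isolate the structural observation on which everything rests: the clock $\clockabs$ is fresh, i.e.\ it occurs in no reset, in no invariant, and in no guard of $\PTA'$ other than in the conjunct $\clockabs = \paramabs$ added to the incoming edges of $\locfinal$. Consequently, along any run of $\valuate{\PTA'}{\pval'}$ the value of $\clockabs$ in a state equals the total delay elapsed since the start of the run, and erasing the $\clockabs$-coordinate from a run of $\valuate{\PTA'}{\pval'}$ yields a run of $\valuate{\PTA}{\pval}$ (the invariants and the surviving guards/resets agree on the $\ClockSet$-components, and the only edges removed in $\PTA'$ are the outgoing edges of $\locfinal$, which can be taken only after $\locfinal$ has been reached). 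Conversely, decorating a run of $\valuate{\PTA}{\pval}$ with the cumulative elapsed time in a fresh coordinate produces a candidate run of $\valuate{\PTA'}{\pval'}$, which is legal provided the new guard $\clockabs = \paramabs$ holds each time an edge into $\locfinal$ is traversed and no time is spent in the now-urgent $\locfinal$.

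With this, the inclusion $\PET{\PTA} \subseteq \EFsynth(\PTA', \{\locfinal\})$ goes as follows: given $(\pval, d) \in \PET{\PTA}$ witnessed by a run $\varrun = (\loci{0}, \clockval_0), (d_0, \edgei{0}), \cdots, (\loci{n}, \clockval_n)$ of $\valuate{\PTA}{\pval}$ with $\loci{n} = \locfinal$, $\loci{i} \neq \locfinal$ for $i < n$, and $d = \duration(\varrun) = \sum_{i<n} d_i$, lift $\varrun$ as above into $\valuate{\PTA'}{\pval'}$ with $\pval'(\paramabs) = d$. The only step to check is the last edge $\edgei{n-1}$ into $\locfinal$: just before traversing it the value of $\clockabs$ is $\sum_{i<n} d_i = d = \pval'(\paramabs)$, so $\clockabs = \paramabs$ holds, and no delay occurs in $\locfinal$, consistently with urgency; hence $\locfinal$ is reachable in $\valuate{\PTA'}{\pval'}$. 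For the reverse inclusion, if $\pval' \in \EFsynth(\PTA', \{\locfinal\})$, take a run of $\valuate{\PTA'}{\pval'}$ reaching $\locfinal$ and truncate it at its first state located in $\locfinal$ (still a run, since $\locfinal$ has no outgoing edge and is urgent); erasing $\clockabs$ gives a run $\varrun$ of $\valuate{\PTA}{\pval}$ with $\pval = \pval'{\restriction}_{\ParamSet}$, ending in $\locfinal$ with no earlier occurrence of $\locfinal$, whose duration equals the final value of $\clockabs$, which, since the last edge carries $\clockabs = \paramabs$ and $\clockabs$ is untouched afterwards, is $\pval'(\paramabs)$. Thus $(\pval, \pval'(\paramabs)) \in \PET{\PTA}$.

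The hard part is not any single computation but making the two transport operations rigorous: one must verify that freshness of $\clockabs$ genuinely turns the erase/decorate maps into mutually inverse bijections between the runs of $\valuate{\PTA}{\pval}$ and those of $\valuate{\PTA'}{\pval'}$ preserving location sequences, edges and delays, and that the combination of urgency of $\locfinal$, the pruning of its outgoing edges, and the guard $\clockabs = \paramabs$ selects precisely the runs that reach $\locfinal$ for the first time, at global time $\paramabs$. Once these are in place, the statement follows by invoking the soundness and completeness of $\EFsynth$.
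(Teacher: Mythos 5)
Your proposal is correct and follows essentially the same route as the paper's own proof: it rests on the same three observations (urgency of~$\locfinal$ plus pruned outgoing edges force runs to stop at the first visit of~$\locfinal$, the never-reset clock~$\clockabs$ records the run duration, and the guard $\clockabs = \paramabs$ ties that duration to the added parameter), combined with the soundness and completeness of~$\EFsynth$. You merely spell out the two inclusions via explicit erase/decorate transport of runs, which the paper leaves implicit.
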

\begin{example}
	Consider again the PTA~$\PTA$ in \cref{figure:example-PTA}.
	Then $\PTA'$ is given in \cref{figure:example-PTA-transformed-PET}.
\end{example}

As per \cref{prop:EFsynth} in \cref{appendix:prop:EFsynth}, there exist \LongVersion{sound and correct }semi-algorithms for reachability synthesis, and hence for the \textPET{} synthesis problem---although they do not guarantee termination.

\subsection{\problemTOS{} and \problemFTOS{} problems}\label{ss:problems}

Now, we detail how the \textPET{} can be used to compute the solution to both \problemTOS{} and~\problemFTOS{}.
To do so, we will go through a (larger) intermediate problem: the synthesis of both parameter valuations $\pval$ \emph{and} execution times for which $\valuate{\PTA}{\pval}$ is \opaqueText{}.

\defProblem
{\existentialOpacitySynthesisProblem{} ($\problemdTOS$)}
{A \PTAtext{}~$\PTA$
}
{
Synthesize the set of parameter valuations~$\pval$ and execution times~$d$ \suchthat{} $\valuate{\PTA}{\pval}$ is \existentialOpaqueText{} and $\valuate{\PTA}{\pval}$ is \opaqueText{} for execution time~$d$.
}

\defProblem
{\FullOpacitySynthesisProblem{} ($\problemdFTOS$)}
{A \PTAtext{}~$\PTA$
}
{%
Synthesize the set of parameter valuations~$\pval$ and execution times~$d$ \suchthat{} $\valuate{\PTA}{\pval}$ is \fullOpaqueText{} and 
 $d$ is the set of durations of all runs in $\valuate{\PTA}{\pval}$.
}

First, given a PTA~$\PTA$ and two locations $\locfinal$ and $\locpriv$ of~$\PTA$, let us formally define both sets representing respectively all the durations and parameter valuations of the runs for which $\locpriv$ is reached (resp.\ avoided) on the way to~$\locfinal$.

Let $\PTA^{\locpriv}_{\locfinal}$ be a copy of~$\PTA$ \suchthat{}:
\begin{oneenumerate}
	\item a Boolean variable\footnote{Which is a convenient syntactic sugar for doubling the number of locations.} $b$ is added and initialized to \false,
	\item $b$ is set to \true\ on all incoming edges to $\locpriv$,
	\item a guard $b = \true$ is added to all incoming edges to $\locfinal$.
\end{oneenumerate}%
The PTA $\PTA^{\locpriv}_{\locfinal}$ contains all runs of~$\PTA$ for which $\locpriv$ is reached on the way to~$\locfinal$, and $\PET{\PTA^{\locpriv}_{\locfinal}}$ contains the durations and parameter valuations of those runs.

Let $\PTA^{\neg \locpriv}_{\locfinal}$ be a copy of~$\PTA$ \suchthat{} all incoming and outgoing edges to and from $\locpriv$ are pruned.
The PTA $\PTA^{\neg \locpriv}_{\locfinal}$ contains all runs of~$\PTA$ for which $\locpriv$ is avoided on the way to~$\locfinal$, and $\PET{\PTA^{\neg \locpriv}_{\locfinal}}$ contains the durations and parameter valuations of those runs.

\begin{example}
	Consider again the PTA~$\PTA$ in \cref{figure:example-PTA}.
	Then $\PTA^{\locpriv}_{\locfinal}$ is given in \cref{figure:example-PTA-transformed-priv}, and $\PTA^{\neg \locpriv}_{\locfinal}$ is given in \cref{figure:example-PTA-transformed-pub}.
\end{example}

\LongVersion{
Let us consider the inclusion relations between $\PET{\PTA^{\locpriv}_{\locfinal}}$, $\PET{\PTA^{\neg \locpriv}_{\locfinal}}$, $\problemTOS(\PTA)$ and $\problemFTOS(\PTA)$ (\cref{figure:TOsets}).
By construction, the union of both \textPET{}s corresponds to $\PET{\PTA}$, while their intersection corresponds to $\problemTOS(\PTA)$.
As one can see in the figure, $\problemFTOS(\PTA) \subseteq \problemTOS(\PTA)$ and cannot be expressed directly with \textPET{}s.

 \begin{figure} [htb]
 	\centering
 	\small
 		\begin{tikzpicture}[yscale=.5]
 			\tikzstyle{openzone1} = [fill=USPNsable!20!white, draw=none]
 			\tikzstyle{openzone2} = [fill=USPNsable!40!white, draw=none]
 			\tikzstyle{openzone3} = [fill=USPNsable!60!white, draw=black]

 			\draw[openzone1] (0, -0.25) -- (0, 4.75) -- (5, 4.75)-- (5, -0.25) -- cycle;
 			\draw[openzone1] (1, 1.25) -- (1, 6.25) -- (6, 6.25)-- (6, 1.25) -- cycle;
 			\draw[openzone2] (1, 1.25) -- (1, 4.75) -- (5, 4.75)-- (5, 1.25) -- cycle;
 			\draw[openzone3] (1.5, 2) -- (1.5, 3.5) -- (4.5, 3.5)-- (4.5, 2) -- cycle;
			\draw[densely dotted, line width = 0.3mm, black] (0, -0.25) -- (0, 4.75) -- (5, 4.75)-- (5, -0.25) -- cycle;
			\draw[dashed, line width = 0.3mm, black] (1, 1.25) -- (1, 6.25) -- (6, 6.25)-- (6, 1.25) -- cycle;

 			\node at (4.5,5.5) {$\PET{\PTA^{\locpriv}_{\locfinal}}$};
 			\node at (1.5,0.5) {$\PET{\PTA^{\neg \locpriv}_{\locfinal}}$};
 			\node at (2.5,4) {\problemTOS(\PTA)};
 			\node at (3,2.75) {\problemFTOS(\PTA)};
 		\end{tikzpicture}
 	\caption{Inclusion relations between the sets discussed in this section. Each set is a union of convex polyhedra.}
 	\label{figure:TOsets}
 \end{figure}
}

\begin{restatable}{proposition}{propTOS}\label{TOS}
Given a PTA~$\PTA$, we have:
 	\(\problemdTOS(\PTA) = \PET{\PTA^{\locpriv}_{\locfinal}} \cap \PET{\PTA^{\neg \locpriv}_{\locfinal}}\text{.}\)
\end{restatable}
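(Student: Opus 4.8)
The plan is to establish the identity by a direct unfolding of the relevant definitions, showing that both sides coincide with the set of pairs $(\pval,d)$ for which $d \in \PrivDurVisit{\valuate{\PTA}{\pval}} \cap \PubDurVisit{\valuate{\PTA}{\pval}}$.

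First I would simplify the left-hand side. Fix a pair $(\pval,d)$. By \cref{def:opacity:TOSEM:ET-opacity} instantiated with the singleton set of execution times $\{d\}$, the TA $\valuate{\PTA}{\pval}$ is \opaqueText{} for execution time~$d$ iff $d \in \PrivDurVisit{\valuate{\PTA}{\pval}} \cap \PubDurVisit{\valuate{\PTA}{\pval}}$. Whenever this holds, the intersection is in particular non-empty, so $\valuate{\PTA}{\pval}$ is also \existentialOpaqueText{} by \cref{def:opacity:TOSEM:exist-ET-opacity}; hence the \existentialOpaqueText{} requirement in the definition of $\problemdTOS$ is automatically subsumed. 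This yields
\[
	\problemdTOS(\PTA) = \bigl\{ (\pval, d) \mid d \in \PrivDurVisit{\valuate{\PTA}{\pval}} \cap \PubDurVisit{\valuate{\PTA}{\pval}} \bigr\} \text{.}
\]

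Next I would prove the two ``component'' identities $\PET{\PTA^{\locpriv}_{\locfinal}} = \{ (\pval, d) \mid d \in \PrivDurVisit{\valuate{\PTA}{\pval}} \}$ and $\PET{\PTA^{\neg \locpriv}_{\locfinal}} = \{ (\pval, d) \mid d \in \PubDurVisit{\valuate{\PTA}{\pval}} \}$, each by exhibiting a duration-preserving correspondence between runs. For the private case, the Boolean flag~$b$ of $\PTA^{\locpriv}_{\locfinal}$ starts at \false{}, is set to \true{} exactly on the edges entering~$\locpriv$, and is tested to equal \true{} on every edge entering~$\locfinal$; since introducing and maintaining~$b$ affects neither clocks nor clock guards, resets or invariants, the runs of $\valuate{\PTA^{\locpriv}_{\locfinal}}{\pval}$ that reach~$\locfinal$ are in bijection, with identical delay sequences (hence identical durations), with those runs of $\valuate{\PTA}{\pval}$ that visit~$\locpriv$ before their first visit of~$\locfinal$ --- precisely the runs of $\PrivVisit{\valuate{\PTA}{\pval}}$, and $\PET{}$ by \cref{defPET} already restricts every run to its first visit of~$\locfinal$. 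For the public case, pruning all edges incident to~$\locpriv$ makes~$\locpriv$ unreachable in $\PTA^{\neg \locpriv}_{\locfinal}$, so the runs of $\valuate{\PTA^{\neg \locpriv}_{\locfinal}}{\pval}$ are exactly the runs of $\valuate{\PTA}{\pval}$ that never take an edge touching~$\locpriv$; cut at the first visit of~$\locfinal$, these are exactly the runs of $\PubVisit{\valuate{\PTA}{\pval}}$, again with the same durations.

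Finally I would intersect the two component identities and compare with the simplified left-hand side, concluding $\problemdTOS(\PTA) = \PET{\PTA^{\locpriv}_{\locfinal}} \cap \PET{\PTA^{\neg \locpriv}_{\locfinal}}$. The only delicate point is the bookkeeping in the second step: one must check that the index conditions defining $\PrivVisit{}$ and $\PubVisit{}$ (no occurrence of~$\locfinal$ strictly before the end, and, for the public case, no occurrence of~$\locpriv$ at all) match exactly the ``up to the first visit of~$\locfinal$'' convention built into \cref{defPET}, and that the syntactic-sugar encoding of~$b$ genuinely leaves the timed behaviour untouched; everything else is immediate.
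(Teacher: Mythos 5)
Your proposal is correct and follows essentially the same route as the paper's proof: unfold the definition of $\problemdTOS$ to the condition $d \in \PrivDurVisit{\valuate{\PTA}{\pval}} \cap \PubDurVisit{\valuate{\PTA}{\pval}}$, identify $\PET{\PTA^{\locpriv}_{\locfinal}}$ and $\PET{\PTA^{\neg\locpriv}_{\locfinal}}$ with the parametrized private and public duration sets, and intersect. Your write-up merely makes explicit (the run correspondences and the redundancy of the $\exists$-ET-opacity clause) what the paper's terser argument asserts directly.
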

\begin{example}
	Consider again the PTA~$\PTA$ in \cref{figure:example-PTA}.
	Then $\PET{\PTA^{\locpriv}_{\locfinal}}$ is $\parami{1} \leq \paramabs \leq \parami{2} \land 0 \leq \parami{1} \leq 3$.
	Moreover, $\PET{\PTA^{\neg \locpriv}_{\locfinal}}$ is $0 \leq \paramabs \leq 3 \land \parami{1} \geq 0 \land \parami{2} \geq 0$.
	Hence, $\problemdTOS(\PTA)$ is $0 \leq \parami{1} \leq \paramabs \leq \parami{2} \land \paramabs \leq 3$.
\end{example}

In order to compute $\problemdFTOS(\PTA)$, we need to remove from $\problemdTOS(\PTA)$ all parameter valuations~$\pval$ \suchthat{} there is at least one run to~$\locfinal$ in $\valuate{\PTA}{\pval}$ whose duration is not\LongVersion{ in~$\execTimes_\pval$.} 
in the set of execution times for which $\valuate{\PTA}{\pval}$ is \opaqueText{}.
Parameter valuations and durations of such runs are included in $\PET{\PTA} \setminus \problemdTOS(\PTA)$, which is also the difference between $\PET{\PTA^{\locpriv}_{\locfinal}}$ and $\PET{\PTA^{\neg \locpriv}_{\locfinal}}$.
We note that difference as
\[\Diff(\PTA) = \big(\PET{\PTA^{\locpriv}_{\locfinal}} \cup \PET{\PTA^{\neg \locpriv}_{\locfinal}} \big) \setminus \big(\PET{\PTA^{\locpriv}_{\locfinal}} \cap \PET{\PTA^{\neg \locpriv}_{\locfinal}}\big)\]
$\Diff(\PTA)$ is made of a union of convex polyhedra $\Constr$ over $\ParamSet$ (\ie{} the parameters of~$\PTA$) and~$\paramabs$, which is the duration of runs.
The parameter values in those polyhedra are the ones we do not want to see in $\problemdFTOS(\PTA)$.
Our solution thus consists in removing from $\problemdTOS(\PTA)$ the values of $\ParamSet$ in $\Diff(\PTA)$.

\begin{restatable}{proposition}{propFullTOS}\label{fullTOS}
Given a PTA~$\PTA$ with parameter set $\ParamSet$:
	\(\problemdFTOS(\PTA) = \problemdTOS(\PTA) \setminus \projectP{\Diff(\PTA)}\text{.}\)
\LongVersion{ where $\projectP{\Diff(\PTA)}$ is the union of convex polyhedra obtained by rewriting each constraint $\Constr \in \Diff(\PTA)$ by $\projectP{\Constr}$.}
\end{restatable}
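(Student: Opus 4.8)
The plan is to reduce the identity to an elementary set computation on the duration slices of the two \textPET{}s, invoking \cref{TOS} for the left-hand intersection together with the two characterisations established just above the statement. First I would restate those characterisations: for every parameter valuation~$\pval$, the set of durations~$d$ with $(\pval,d)\in\PET{\PTA^{\locpriv}_{\locfinal}}$ is exactly $\PrivDurVisit{\valuate{\PTA}{\pval}}$, and the set of durations~$d$ with $(\pval,d)\in\PET{\PTA^{\neg\locpriv}_{\locfinal}}$ is exactly $\PubDurVisit{\valuate{\PTA}{\pval}}$. Combined with \cref{TOS}, this gives $(\pval,d)\in\problemdTOS(\PTA)$ iff $d\in\PrivDurVisit{\valuate{\PTA}{\pval}}\cap\PubDurVisit{\valuate{\PTA}{\pval}}$.

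Next I would unfold $\Diff(\PTA) = (\PET{\PTA^{\locpriv}_{\locfinal}}\cup\PET{\PTA^{\neg\locpriv}_{\locfinal}})\setminus(\PET{\PTA^{\locpriv}_{\locfinal}}\cap\PET{\PTA^{\neg\locpriv}_{\locfinal}})$: for a fixed~$\pval$, the set of~$d$ with $(\pval,d)\in\Diff(\PTA)$ is the symmetric difference $\PrivDurVisit{\valuate{\PTA}{\pval}}\mathbin{\triangle}\PubDurVisit{\valuate{\PTA}{\pval}}$, which is nonempty iff $\PrivDurVisit{\valuate{\PTA}{\pval}}\neq\PubDurVisit{\valuate{\PTA}{\pval}}$. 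Hence $\projectP{\Diff(\PTA)} = \{\pval\mid\PrivDurVisit{\valuate{\PTA}{\pval}}\neq\PubDurVisit{\valuate{\PTA}{\pval}}\}$, which by \cref{def:opacity:TOSEM:full-ET-opacity} is exactly the set of valuations for which $\valuate{\PTA}{\pval}$ is \emph{not} \fullOpaqueText{}. Reading $\problemdTOS(\PTA)\setminus\projectP{\Diff(\PTA)}$ as the subset of pairs of $\problemdTOS(\PTA)$ whose valuation is not in $\projectP{\Diff(\PTA)}$, it therefore equals the set of pairs $(\pval,d)$ with $\valuate{\PTA}{\pval}$ \fullOpaqueText{} and $d\in\PrivDurVisit{\valuate{\PTA}{\pval}}\cap\PubDurVisit{\valuate{\PTA}{\pval}}$.

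To conclude, I would observe that being \fullOpaqueText{} means $\PrivDurVisit{\valuate{\PTA}{\pval}}=\PubDurVisit{\valuate{\PTA}{\pval}}$, so for such~$\pval$ the intersection above collapses to this common set, which also equals the union $\PrivDurVisit{\valuate{\PTA}{\pval}}\cup\PubDurVisit{\valuate{\PTA}{\pval}}$, \ie{} the set of durations of all runs of $\valuate{\PTA}{\pval}$ reaching~$\locfinal$ (every such run either visits~$\locpriv$ on the way or avoids it). This is precisely the condition defining $\problemdFTOS(\PTA)$, so both inclusions hold. I would also note that valuations~$\pval$ with $\PrivDurVisit{\valuate{\PTA}{\pval}}=\PubDurVisit{\valuate{\PTA}{\pval}}=\emptyset$ (for instance when $\locfinal$ is unreachable) are vacuously \fullOpaqueText{} but contribute no pair on either side, so the degenerate case does not threaten the equality.

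The only delicate points are presentational: one must fix the convention that subtracting the parameter set $\projectP{\Diff(\PTA)}$ from a set of $(\pval,d)$ pairs means discarding every pair whose valuation lies in that set (the natural cylindrification already used implicitly throughout the section), and one must check that the intersection of durations in $\problemdTOS(\PTA)$ coincides with the set of all run durations in $\problemdFTOS(\PTA)$ once the valuation is \fullOpaqueText{}---which is immediate from $\PrivDurVisit{\valuate{\PTA}{\pval}}=\PubDurVisit{\valuate{\PTA}{\pval}}$. No construction is needed beyond \cref{TOS} and the transformed PTAs $\PTA^{\locpriv}_{\locfinal}$ and $\PTA^{\neg\locpriv}_{\locfinal}$.
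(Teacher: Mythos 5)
Your proof is correct and follows essentially the same route as the paper's: you characterise $\projectP{\Diff(\PTA)}$ as exactly the valuations $\pval$ for which $\PrivDurVisit{\valuate{\PTA}{\pval}} \neq \PubDurVisit{\valuate{\PTA}{\pval}}$ (\ie{} the non-\fullOpaqueText{} ones) and observe that removing them from $\problemdTOS(\PTA)$ leaves precisely the pairs required by $\problemdFTOS(\PTA)$, which is the paper's argument. Your slice-wise formulation, the explicit cylindrification convention for the set difference, and the remark on the vacuous case where both duration sets are empty only make explicit what the paper's proof leaves implicit.
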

\begin{example}
	Consider again the PTA~$\PTA$ in \cref{figure:example-PTA}.
	Whe have $\Diff(\PTA)$ is $(0 \leq \parami{1} \leq 3 < \paramabs \leq \parami{2}) \lor (0 \leq \paramabs \leq 3 \land \paramabs < \parami{1} \land \parami{2} \geq 0) \lor (0 \leq \parami{2} < \paramabs \leq 3 \land \parami{1} \geq 0)$.
	Then $\projectP{\Diff(\PTA)}$ is $(0 \leq \parami{1} \leq 3 < \parami{2}) \lor (0 < \parami{1} \land \parami{2} \geq 0) \lor (0 \leq \parami{2} < 3 \land \parami{1} \geq 0)$.
	Hence, $\problemdFTOS(\PTA)$ is $\parami{1} = 0 \leq \paramabs \leq \parami{2} = 3$.
	
\end{example}

Finally, obtaining $\problemTOS(\PTA)$ and $\problemFTOS(\PTA)$ is trivial since, by definition, $\problemTOS(\PTA) = \projectP{(\problemdTOS(\PTA))}$ and $\problemFTOS(\PTA) = \projectP{(\problemdFTOS(\PTA))}$.

\begin{example}
Consider again the PTA~$\PTA$ in \cref{figure:example-PTA}.
Then $\problemTOS(\PTA)$ is $0 \leq \parami{1} \leq \parami{2} \land \parami{1} \leq 3$.
And $\problemFTOS(\PTA)$ is $\parami{1} = 0 \land \parami{2} = 3$.
\end{example}

\LongVersion{
\subsubsection{Illustration}

Let us illustrate on a simple example how \textPET{} are used to solve \problemTOS{} and \problemFTOS{} problems, as described above.
Let us assume we have a PTA~$\PTA$ with one parameter~$p$ on which we successfully computed $\PET{\PTA^{\locpriv}_{\locfinal}}$ and $\PET{\PTA^{\neg \locpriv}_{\locfinal}}$:

$
\begin{array}{cclcccl}
\PET{\PTA^{\locpriv}_{\locfinal}}	& = &		\{ p = 1 \land 0 \leq \paramabs \leq 3 \}	&\ $and$ & \PET{\PTA^{\neg \locpriv}_{\locfinal}} & = &		\{ p = 1 \land 5 \leq \paramabs \leq 10 \} \\
		& \cup &	\{ p = 2 \land 0 \leq \paramabs \leq 5 \}	&&			& \cup &	\{ p = 2 \land 3 \leq \paramabs \leq 10 \} \\
		& \cup &	\{ p = 3 \land 0 \leq \paramabs \leq 5 \}	& &				& \cup &	\{ p = 3 \land 0 \leq \paramabs \leq 5 \}
\end{array}
$

\smallskip

We then compute \Diff(\PTA) and its projection onto $\ParamSet$:

$
\begin{array}{cclcccl}
\Diff(\PTA) & = &	\{ p = 1 \land 0 \leq \paramabs \leq 3 \}	&\ $and$ & \projectP{\Diff(\PTA)} & = &		\{ p = 1 \} \\
		& \cup &	\{ p = 1 \land 5 \leq \paramabs \leq 10 \}	&&				& \cup &	\{ p = 2 \} \\
		& \cup &	\{ p = 2 \land 0 \leq \paramabs < 3 \}	& &					&  &	\\
		& \cup &	\{ p = 2 \land 5 < \paramabs \leq 10 \}	& &					&  &
\end{array}
$

\smallskip

We then have $\problemdTOS(\PTA) = \PET{\PTA^{\locpriv}_{\locfinal}} \cap \PET{\PTA^{\neg \locpriv}_{\locfinal}} = \{ p = 2 \land 3 \leq \paramabs \leq 5 \} \cup \{ p = 3 \land 0 \leq \paramabs \leq 5 \}$.
And $\problemdFTOS(\PTA) = \problemTOS(\PTA) \setminus \projectP{\Diff(\PTA)} = \{ p = 3 \land 0 \leq \paramabs \leq 5 \}$.
}

\subsubsection{On correctness and termination}

We described here a method for computing \problemTOS(\PTA) and \problemFTOS(\PTA) for a PTA, that produces an exact (sound and complete) result if it terminates.
It relies on the \textPET{} of two subsets of the PTA, the computation of which requires enrichment with one clock and one parameter.
If they can be computed, those \textPET{} take the form of a finite union of convex polyhedra, on which are then applied the union, intersection, difference and projection set operations --- that are known to be decidable in this context.
Thus the actual termination of the whole semi-algorithm relies on the reachability synthesis of two $(n+1, 0, m+1)$-PTAs.
Reachability synthesis is known to be effectively computable for $(1, 0, m)$-PTAs~\cite{ALM20}, and cannot be achieved for PTAs with 3 parametric clocks or more due to the undecidability of the reachability emptiness problem~\cite{AHV93}.
For the semi-algorithm we proposed here for \problemTOS{} and \problemFTOS{} problems, we therefore do not have any guarantees of termination, even with only one parametric clock (due to the additional clock $\clockabs$), although this might change depending on future results regarding the decidability of reachability synthesis for PTAs with 2 parametric clocks (a first decidability result for the emptiness only was proved for $(2,\arbitrarilyMany,1)$-PTAs over discrete time~\cite{GH21}).

\section{Decidability and undecidability of \problemFTOE{} for 1-clock-PTAs\LongVersion{ with integer parameters over dense time}}\label{section:TOS}

\LongVersion{%
Recall that we denote by $(pc, npc, p)$-PTAs the class of PTAs where $pc$ is the number of parametric clocks (\ie{} compared at least once in a guard to a parameter), $npc$ the number of non-parametric clocks and $p$ the number of parameters (``$\arbitrarilyMany$'' denotes ``any number).
}

In this section, we:
\begin{enumerate}
	\item propose a method to compute potentially infinite \textPET{} on $(1, 0, \arbitrarilyMany)$-PTAs, \ie{} PTAs with 1 parametric clock and arbitrarily many parameters\LongVersion{ (integer-valued in this work, but which could also be rational-valued technically)} (\cref{ss:infinitePETS});
	\item prove decidability of the \problemFTOE{} problem for $(1, 0, 1)$-PTAs\LongVersion{ (over dense time and an integer-valued parameter)}, by rewriting infinite \textPET{} in a variant of Presburger arithmetic (\cref{ss:PresburgerFTOE});
	\item prove undecidability of the \problemFTOE{} problem for $(1, 0, \arbitrarilyMany)$-PTAs\LongVersion{ with integer-valued parameters over dense time (and thus for rational-valued $(1, 0, \arbitrarilyMany)$-PTAs as well)} (\cref{ss:PresburgerFTOE}).
\end{enumerate}

\subsection{Encoding infinite \textPET{} for $(1, 0, \arbitrarilyMany)$-PTAs}\label{ss:infinitePETS}
Given a PTA~$\PTA$ with exactly 1 clock, \LongVersion{and given a location $\locfinal$ of~$\PTA$, }the goal of the method described here is to guarantee termination of the computation of $\PET{\PTA}$ with an exact result.
If the partial solution given in \cref{ss:PETS} is applied, it amounts to a reachability synthesis on a PTA with 2 clocks, without guarantee of termination.
The gist of this method is a form of divide and conquer, where we solve sub-problems, specifically reachability synthesis on sub-parts of~$\PTA$ without adding an additional clock.
The first step consists of building some reset-free PTAs, each representing a meaningful subset of the paths joining two given locations in $\PTA$.
$\PET{\PTA}$ is then obtained by combining the results of reachability synthesis performed on those reset-free PTAs.
The result is encoded in a (finite) regular expression that represents an infinite union of convex polyhedra.
Note that this method works perfectly for rational-valued parameters.

\subsubsection{Defining the set of reset-free PTAs}

Each of the PTAs we build describes parts of the behavior between two locations.
More precisely, they represent all the possible paths such that clock resets may occur only on the last transition of the path.
We first define the set of locations that we may need based on whether they are initial, final, or reached by a transition associated to a reset.

\begin{definition}[Final-reset paths $\Rfp(\PTA, \locfinal)$]
\label{def_rfp}
	Let $\PTA$ be a 1-clock PTA, $\locinit$ its initial location and $\locfinal$ a location of~$\PTA$.
	We define as $\Rfp(\PTA, \locfinal)$ the set of pairs of locations \suchthat{} $\forall (\loc_i, \loc_j) \in \Rfp(\PTA, \locfinal)$
\begin{itemize}
\item %
 $\loc_i = \locinit$, or $\loc_i \neq \locfinal$ and there is a clock reset on an incoming edge to $\loc_i$, 
\item %
$\loc_j = \locfinal$, or there is a clock reset on an incoming edge to $\loc_j$.%
\end{itemize}
\end{definition}

For each pair of states $(\loc_i,\loc_j)$ as defined above, we build a reset-free PTA. 
If the target state $\loc_j$ is not final (which is a special case), the reset-free PTA models every path going from $\loc_i$ to $\loc_j$ and that ends with a reset on its last step.
In particular, this ensures that $\loc_j$ is reached with clock valuation~$0$.

\begin{definition}[Reset-free PTA $\PTA(\loc_i,\loc_j)$]\label{sub-automata}
	Let $\PTA$ be a 1-clock PTA, $\clock$ its unique clock, and $\loc_i$, $\loc_j$ two locations in $\PTA$.
	We define as $\PTA(\loc_i,\loc_j)$ the reset-free PTA obtained from a copy of~$\PTA$ by:
	\begin{enumerate}
	\item creating a duplicate $\loc_j'$ of $\loc_j$;
	\item for all incoming edges $(\loc,\guard,\action,\resets,\loc_j)$ where $\resets = \emptyset$, removing $(\loc,\guard,\action,\resets,\loc_j)$ and adding an incoming edge $(\loc,\guard,\action,\resets,\loc'_j)$;
	\item if $\loc_j \neq \locfinal$, then for all outgoing edges $(\loc_j,\guard,\action,\resets,\loc)$,  removing $(\loc_j,\guard,\action,\resets,\loc)$ and adding an outgoing edge $(\loc'_j,\guard,\action,\resets,\loc)$,

	else, making $\loc_j'$ urgent and adding an edge $(\loc_j',\true,\epsilon,\emptyset,\loc_j)$;
	\item removing any upper bound invariant on $\loc_j$ and making it urgent;
	\item  if $\loc_i \neq \loc_j$, setting $\loc_i$ as the initial location,

	else, setting $\loc_j'$ as the initial location;
	\item removing any clock reset on incoming edges to $\loc_j$ and pruning all other edges featuring a clock reset, and all outgoing edges from $\locfinal$;
	\item adding a parameter $\paramabs$, and a guard $\clock = \paramabs$ to all incoming edges to $\loc_j$;
	\end{enumerate}
\end{definition}

We will show next how the reachability synthesis of those reset-free PTAs corresponds to fragments of the runs that are considered in $\PET{\PTA}$.
\LongVersion{The following two proposition will be needed for that demonstration.}
For simplification, given $\PTA$ a 1-clock PTA, and $\loc_i$, $\loc_j$ two locations of~$\PTA$, we now note $Z_{\loc_i,\loc_j} = \EFsynth(\PTA(\loc_i,\loc_j), \{\loc_j\})$.

\LongVersion{
\begin{restatable}{proposition}{propequivnotlf}
\label{equiv_not_lf}
	Let~$\PTA$ be a 1-clock PTA, and $(\loc_i,\loc_j) \in \Rfp(\PTA, \locfinal)$ such that $\loc_j \neq \locfinal$.
	Then $Z_{\loc_i,\loc_j}$ is equivalent to the synthesis of parameter valuations~$\pval$ and execution times~$\execTimes_\pval$ such that $\execTimes_\pval = \{ d \mid \exists \varrun$ from $(\loc_i,\{\clock = 0\})$ to $\loc_j$ in $\valuate{\PTA}{\pval}$ such that  $d = \duration(\varrun)$, $\locfinal$ is never reached, and $\clock$ is reset on the last edge of $\varrun$ and on this edge only $\}$.
\end{restatable}
\begin{proof}
	See \cref{appendix:proof:equiv_not_lf}.
\end{proof}
\begin{restatable}{proposition}{propequivlf}
\label{equiv_lf}
	Let~$\PTA$ be a 1-clock PTA, and $(\loc_i,\loc_j) \in \Rfp(\PTA, \locfinal)$ such that $\loc_j = \locfinal$.
	Then $Z_{\loc_i,\loc_j}$ is equivalent to the synthesis of parameter valuations~$\pval$ and execution times~$\execTimes_\pval$ such that $\execTimes_\pval = \{ d \mid \exists \varrun$ from $(\loc_i,\{\clock = 0\})$ to $\locfinal$ in $\valuate{\PTA}{\pval}$ such that  $d = \duration(\varrun)$, $\locfinal$ is reached only on the last state of $\varrun$, and $\clock$ may only be reset on the last edge of $\varrun$ $\}$.
\end{restatable}
\begin{proof}
	See \cref{appendix:proof:equiv_lf}.
\end{proof}
}

\subsubsection{Reconstruction of \textPET{} from the reachability synthesis of the reset-free PTAs.}

Given $\PTA$ a 1-clock PTA, and $\locfinal$ a location of~$\PTA$, for all $(\loc_i,\loc_j) \in \Rfp(\PTA, \locfinal)$ we may compute the parametric zone $Z_{\loc_i,\loc_j}$ with guarantee of termination, since the reachability synthesis is decidable on 1-clock PTAs.
Those parametric zones may be used to build the (potentially infinite) \textPET{} of~$\PTA$.
To do so, we first define a (non-parametric, untimed) finite automaton where the states are the locations of~$\PTA$, and the arc between the states $\loc_i$ and $\loc_j$ is labeled by $Z_{\loc_i,\loc_j}$.
We refer to this automaton as the \emph{automaton of the zones} of~$\PTA$.

\begin{definition}[Automaton of the zones]\label{zone_automaton}
	Let $\PTA$ be a 1-clock PTA, $\locinit$ its initial location and $\locfinal$ a location of~$\PTA$.
	We define as $\hat{\PTA}$ the finite automaton such that:
\begin{itemize}
\item The states of $\hat{\PTA}$  are exactly the locations of~$\PTA$;
\item $\locinit$ is initial and $\locfinal$ is final;
\item $\forall (\loc_i,\loc_j) \in \Rfp(\PTA, \locfinal)$, there is a transition from $\loc_i$ to $\loc_j$ labeled by $Z_{\loc_i,\loc_j}$.
\end{itemize}
\end{definition}

We claim that the language $\hat{L}$ of $\hat{\PTA}$ is a representation of the times (along with parameter constraints) to go from $\locinit$ to $\locfinal$ in $\PTA$.
As $\hat{\PTA}$ is a finite automaton, $\hat{L}$ can be represented as a regular expression with three operators: the concatenation ($.$), the alteration ($+$), and the Kleene star ($^*$).
$\PET{\PTA}$ can thus be expressed by redefining those operators with operations on the parametric zones that label edges of $\hat{L}$.

Any parametric zone $Z_{a,b}$ labeling an edge of $\hat{\PTA}$ is of the form $\bigcup_{i}{\Constr_i}$ with $1 \leq i \leq n$ and ${\Constr_i}$ a convex polyhedra.
As per \cref{def:PTA:symbolic}, $\Constr_i$ is a conjunction of inequalities, each of the form $\alpha \paramabs + \sum_{1 \leq i \leq \ParamCard} \beta_i \param_i + \numConstant \compOp 0$, with $\param_i \in \ParamSet$, and $\alpha, \beta_i, \numConstant \in \setZ$.
Note that $\clock$ has been replaced by execution times $\paramabs$, as per \cref{defPET}.
In the following, we denote by $\Constr_i^\paramabs$ all inequalities such that $\alpha \neq 0$ (\ie{} inequalities over $\paramabs$ and possibly some parameters in $\mathbb{P}$), and by $\Constr_i^\mathbb{P}$ all inequalities such that $\alpha = 0$ (\ie{} inequalities strictly over $\mathbb{P}$).
This means that $\Constr_i = \Constr_i^\paramabs \cap \Constr_i^\mathbb{P}$.
For simplification of what follows, we write inequalities in $\Constr_i^\paramabs$ as  $\paramabs \compOp c$ where $c = \frac{\sum_{1 \leq i \leq \ParamCard} \beta_i \param_i + \numConstant}{-\alpha}$.

Given $Z_{a,b} = \bigcup_{i}{\Constr_i}$ and $Z_{c,d} = \bigcup_{j}{\Constr_j}$, we define the operators $\bar{.}$, $\bar{*}$ and $\bar{+}$\ .

Operator $\bar{.}$ is the addition of the time durations and intersection of parameter constraints between two parametric zones.
Formally, $Z_{a,b}\ \bar{.}\ Z_{c,d} = \bigcup_{i*j}{\Constr_{i,j}^\paramabs \cap \Constr_{i,j}^\mathbb{P}}$ such that $\Constr_{i,j}^\mathbb{P} = \Constr_i^\mathbb{P} \cap \Constr_j^\mathbb{P}$, and for all $\paramabs \bowtie c_i \in \Constr_i^\paramabs$ and $\paramabs \bowtie' c_j \in \Constr_j^\paramabs$, if ${\bowtie}, {\bowtie'} \in \{<, \leq , = \}$ or ${\bowtie}, {\bowtie'} \in \{>, \geq  , =\}$, then $\paramabs \bowtie'' c_i + c_j \in \Constr_{i,j}^\paramabs$ with $\bowtie''$ being in the same direction as $\bowtie$ and $\bowtie'$ and is
\begin{itemize}
\item a strict inequality if either $\bowtie$ or $\bowtie'$  is a strict inequality; 
\item an equality if both $\bowtie$ and $\bowtie'$ are equalities;
\item a non-strict inequality otherwise.
\end{itemize}

Operator $\bar{*}$ is the recursive application of $\bar{.}$ on a parametric zone.
Formally, ${Z_{a,b}}^{\bar{*}} = \bigcup_{K\in\setN}{  \{\paramabs = 0\} (\bar{.} Z_{a,b})^K}$ where $(\bar{.} Z_{a,b})$ is repeated $K$ times, with $K$ being any value in $\ensuremath{\mathbb N}$. Note that $\{\paramabs = 0\}$ corresponds to the case where the loop is never taken, and that it is neutral for the $\bar{.}$ operator: $\{\paramabs = 0\} \bar{.} Z_{a,b} = Z_{a,b}$.
Also note that, in practice, $a=b$ whenever we use this operator.
 
Operator $\bar{+}$ is the union of two parametric zones.
Formally, $Z_{a,b} \bar{+} Z_{c,d} = Z_{a,b} \cup Z_{c,d}$.

Note that the result of any of those operations is a union of convex polyhedra of the form $\bigcup_{i}{\Constr_i}$, meaning that these operators can be nested.
Also, this union is infinite whenever operator $\bar{*}$ is present.

\begin{restatable}{proposition}{proplanguageequivalence}
\label{language equivalence}
	Let~$\PTA$ be a 1-clock PTA and $\locfinal$ a location of~$\PTA$.
	Let $\hat{L}$ be the language of the automaton of the zones $\hat{\PTA}$, and $e$ a regular expression describing $\hat{L}$.
	Let $\bar{e}$ be the expression obtained by replacing the $.$, $+$ and $^*$ operators in $e$ respectively by $\bar{.}$, $\bar{+}$ and $^{\bar{*}}$.
	We have $\bar{e} = \PET{\PTA}$.
\end{restatable}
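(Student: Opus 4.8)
The plan is to relate $\bar{e}$ and $\PET{\PTA}$ through the language $\hat{L}$, by attaching to every word of $\hat{L}$ a union of convex polyhedra over $\ParamSet \cup \{\paramabs\}$. For a word $w = Z_{(1)} Z_{(2)} \cdots Z_{(m)}$ accepted by $\hat{\PTA}$, read along an accepting path $\locinit = \loc_{k_0} \to \loc_{k_1} \to \cdots \to \loc_{k_m} = \locfinal$ (so $Z_{(l)} = Z_{\loc_{k_{l-1}},\loc_{k_l}}$), set $\Lambda(w) = Z_{(1)} \mathbin{\bar{.}} \cdots \mathbin{\bar{.}} Z_{(m)}$, with $\Lambda(\varepsilon) = \{\paramabs = 0\}$, and $\Lambda(L) = \bigcup_{w \in L} \Lambda(w)$ for a language $L$. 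I would then establish separately that (i) $\Lambda(\hat{L}) = \PET{\PTA}$ (the semantic half) and (ii) $\bar{e} = \Lambda(\hat{L}(e))$ for every regular expression $e$ with $\hat{L}(e) = \hat{L}$ (the symbolic half). Chaining the two equalities gives $\bar{e} = \PET{\PTA}$.

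For the semantic half, the idea is to cut runs at clock resets. Fix $\pval$. Given a run $\varrun$ of $\valuate{\PTA}{\pval}$ witnessing $(\pval, d) \in \PET{\PTA}$ --- from $(\locinit, \ClocksZero)$ to $\locfinal$, reaching $\locfinal$ only at its last state, of duration $d$ (when $\varrun$ has no edges, necessarily $\locinit = \locfinal$, and $(\pval,0) \in \Lambda(\varepsilon) \subseteq \Lambda(\hat{L})$ since $\varepsilon \in \hat{L}$; assume $\varrun$ has edges from now on) --- list as $j_1 < \cdots < j_m$ the positions of the reset-carrying edges of $\varrun$ together with the position of its last edge, and cut $\varrun$ into consecutive fragments $\varrun_1, \dots, \varrun_m$ accordingly. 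Each fragment $\varrun_l$ starts with $\clock = 0$ (the initial valuation for $l=1$, the effect of the reset $\edge_{j_{l-1}}$ for $l>1$), none of its edges except the last carries a reset, and its endpoints $(\loc_{k_{l-1}},\loc_{k_l})$ form a pair in $\Rfp(\PTA,\locfinal)$: the source is $\locinit$ or a reset target distinct from $\locfinal$, and for $l < m$ the target is a reset target distinct from $\locfinal$ (as $\locfinal$ is visited only at the very end), while for $l = m$ the target is $\locfinal$. By \cref{equiv_not_lf} for $l < m$ (where $\locfinal$ is never reached and $\clock$ is reset exactly on the last edge) and \cref{equiv_lf} for $l = m$, we get $(\pval, \duration(\varrun_l)) \in Z_{(l)}$ for each $l$; summing durations and using that $\mathbin{\bar{.}}$ adds the $\paramabs$-components (established below) gives $(\pval, d) \in \Lambda(w) \subseteq \Lambda(\hat{L})$, where $w$ is the word spelled by $\loc_{k_0} \to \cdots \to \loc_{k_m}$. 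Conversely, from an accepting path spelling $w$ and a tuple $(\pval, d) \in \Lambda(w)$, the characterisation of $\mathbin{\bar{.}}$ yields $d_1, \dots, d_m \geq 0$ with $\sum_l d_l = d$ and $(\pval, d_l) \in Z_{(l)}$; \cref{equiv_not_lf} and \cref{equiv_lf} then provide run fragments $\varrun_l$ of $\valuate{\PTA}{\pval}$ from $(\loc_{k_{l-1}},\{\clock=0\})$ to $\loc_{k_l}$ of duration $d_l$, none reaching $\locfinal$ before the last state of the last fragment and each (for $l < m$) ending with a reset; since the final state of $\varrun_l$ has $\clock = 0$, they concatenate into a run of $\valuate{\PTA}{\pval}$ from $(\locinit, \ClocksZero)$ to $\locfinal$ of duration $d$ that reaches $\locfinal$ only at its last state, hence $(\pval, d) \in \PET{\PTA}$.

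For the symbolic half, I would first record the structural fact underpinning the definition of $\mathbin{\bar{.}}$: since $\PTA$ has a single clock and each $\PTA(\loc_i,\loc_j)$ is reset-free, every zone in its parametric zone graph (\cref{def:PTA:symbolic}) is a conjunction of inequalities, each either purely over $\ParamSet$ or of the shape $\clock \compOp c$ with $c$ a linear term over $\ParamSet$ --- no diagonal constraint occurs, there being only one clock --- so replacing $\clock$ by $\paramabs$ yields exactly the decomposition $\Constr_i = \Constr_i^{\paramabs} \cap \Constr_i^{\mathbb{P}}$ assumed there. Consequently, for any two of these parametric zones $Z, Z'$, one has $(\pval, d) \in Z \mathbin{\bar{.}} Z'$ iff there exist $d_1, d_2 \geq 0$ with $d_1 + d_2 = d$, $(\pval, d_1) \in Z$ and $(\pval, d_2) \in Z'$: this reduces to a single-polyhedron check, where for fixed $\pval$ the admissible values of $\paramabs$ form an interval whose lower (resp.\ upper) endpoint is the maximum (resp.\ minimum) of the bounds present in $\Constr_i^{\paramabs}$, and the Minkowski sum of two such intervals is exactly governed by the pairwise sums of lower bounds, of upper bounds, and the stated rule on strictness, the $\ParamSet$-part being simply $\Constr_i^{\mathbb{P}} \cap \Constr_j^{\mathbb{P}}$. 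From associativity and commutativity of Minkowski sum and of intersection it follows that $\mathbin{\bar{.}}$ is associative, admits $\{\paramabs = 0\}$ as neutral element, and distributes over $\bar{+} = \cup$; hence $\Lambda(w_1 w_2) = \Lambda(w_1) \mathbin{\bar{.}} \Lambda(w_2)$ and $\Lambda(L_1 . L_2) = \Lambda(L_1) \mathbin{\bar{.}} \Lambda(L_2)$. A structural induction on $e$ then gives $\bar{e} = \Lambda(\hat{L}(e))$: the single-letter and $\varepsilon$ cases are the definitions; the $+$ case is $\bar{+} = \cup$ together with $\hat{L}(e_1 + e_2) = \hat{L}(e_1) \cup \hat{L}(e_2)$; the $.$ case uses distributivity of $\mathbin{\bar{.}}$ over $\cup$ and $\hat{L}(e_1 . e_2) = \hat{L}(e_1) . \hat{L}(e_2)$; and the $^*$ case unfolds $\hat{L}(e_1^*) = \bigcup_{K \in \setN} \hat{L}(e_1)^K$ and matches the definition of $^{\bar{*}}$, the $K = 0$ term being the neutral element $\{\paramabs = 0\}$. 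Combining (i) and (ii) closes the argument.

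The step I expect to be the main obstacle is the bookkeeping in the semantic half: making the run/path correspondence airtight across the boundary cases --- the last fragment possibly ending without a reset, the degenerate run when $\locinit = \locfinal$, runs containing no reset at all, and, crucially, ensuring that $\locfinal$ is first reached exactly at the end so that intermediate fragments genuinely fall under \cref{equiv_not_lf} rather than \cref{equiv_lf} --- and checking that gluing the fragments really produces a run of $\valuate{\PTA}{\pval}$, which hinges on clock continuity at the cut points. By contrast, the symbolic ingredient that $\mathbin{\bar{.}}$ realises a Minkowski sum on the $\paramabs$-coordinate is conceptually routine; it is, however, exactly where the single-clock hypothesis enters (it rules out diagonal constraints), which is the reason the whole construction is confined to $1$-clock PTAs.
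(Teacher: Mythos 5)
Your proposal is correct and follows essentially the same route as the paper's proof: cut runs of $\valuate{\PTA}{\pval}$ at clock resets, match the resulting fragments with the pairs in $\Rfp(\PTA,\locfinal)$ via \cref{equiv_not_lf} and \cref{equiv_lf}, and use that $\bar{.}$ intersects parameter constraints while summing the $\paramabs$-components, in both directions of the inclusion. The only difference is presentational: you factor the argument through the map $\Lambda$ and prove the regular-expression-to-language step by structural induction on $e$ (using associativity, neutrality of $\{\paramabs=0\}$ and distributivity of $\bar{.}$ over $\bar{+}$), a step the paper treats as immediate from the definitions of the operators.
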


\subsubsection{Summary and illustration of the encoding}
Given a PTA~$\PTA$ with exactly 1 clock, and given a location $\locfinal$ of~$\PTA$, we compute with an exact result an encoding of $\PET{\PTA}$, through the following steps:
\begin{enumerate}
\item compute $\Rfp(\PTA, \locfinal)$, the pairs of locations $(\loc_i,\loc_j)$ such that on some run from initial location to $\locfinal$ there might exists a sub-path from $\loc_i$ to $\loc_j$, such that the clock is reset when entering both locations, but never in between;
\item for each of those pairs, compute the reset-free PTA $\PTA(\loc_i,\loc_j)$, for which reachability synthesis, noted $Z_{\loc_i,\loc_j}$ corresponds to the aforementioned sub-paths;
\item generate the automaton of the zones $\hat{\PTA}$, on which each pair of locations $(\loc_i,\loc_j)$ is connected by a transition labeled with $Z_{\loc_i,\loc_j}$;
\item compute a regular expression for $\hat{\PTA}$, which we proved to be equivalent to $\PET{\PTA}$. Note that computing a regular expression from a finite automaton is decidable and there exists numerous efficient methods for this~\cite{GH15}.
\end{enumerate}

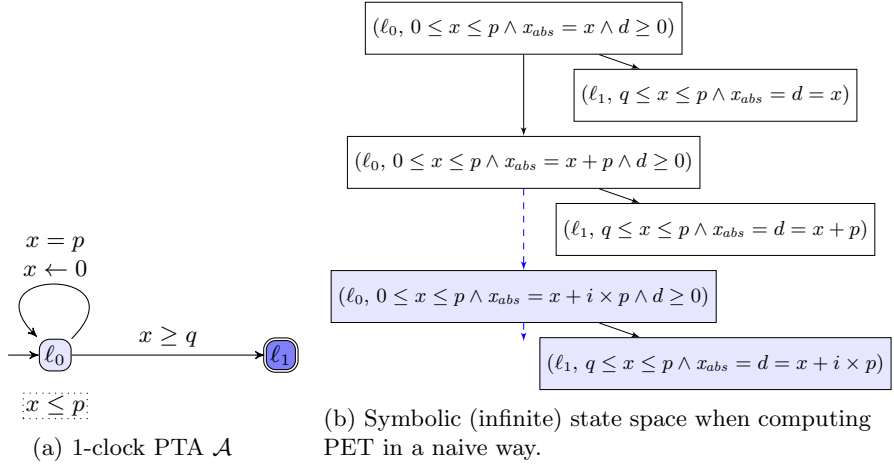
\begin{figure} [tb]
	\centering
	\small
	\begin{subfigure}[b]{0.38\textwidth}
		\centering
		\begin{tikzpicture}[PTA, thin]

			\node[location, initial] at(0,0) (l0) {$\loc_0$};
			\node[location,final] at(3,0) (l1) {$\loc_1$};
			\node [invariant, below] at (0,-0.5) {$x \leq p$};

			\path
				(l0) edge[loop] node [above,align=center] {$x = p$\\$x \leftarrow 0$} (l0)
				(l0) edge[] node {$x \geq q$}  (l1)
			;
		\end{tikzpicture}
		\caption{1-clock PTA~$\PTA$}
		\label{figure:1-clock_pta:pta}
	\end{subfigure}
		\hfill
	\begin{subfigure}[b]{0.6\textwidth}
		\centering
		\scalebox{.85}{
		\begin{tikzpicture}[>=latex', xscale=1.2, yscale=.7,every node/.style={scale=1}]
		\node[symbstate] at (0,0) (c0) {($\loc_0$, $0 \leq x \leq p \land \clockabs = x \land \paramabs \geq 0$)};
		\node[symbstate] at (2.5,-1.5) (c0') {($\loc_1$, $q \leq x \leq p \land \clockabs = \paramabs = x$)};
		\node[symbstate] at (0,-3) (c1) {($\loc_0$, $0 \leq x \leq p \land \clockabs = x + p \land \paramabs \geq 0$)};
		\node[symbstate] at (2.5,-4.5) (c1') {($\loc_1$, $q \leq x \leq p \land \clockabs = \paramabs = x + p$)};
		\node[infinitesymbstate] at (0,-6) (ci) {($\loc_0$, $0 \leq x \leq p \land \clockabs = x + i \times p \land \paramabs \geq 0$)};
		\node[infinitesymbstate] at (2.5,-7.5) (ci') {($\loc_1$, $q \leq x \leq p \land \clockabs = \paramabs = x + i \times p$)};
		\draw[->] (c0) -- (c0');
		\draw[->] (c0) -- (c1);
		\draw[->] (c1) -- (c1');
		\draw[->, blue,dashed] (c1) -- (ci);
		\draw[->] (ci) -- (ci');
		\draw[->, blue,dashed] (ci) -- (0,-7);
		\end{tikzpicture}
		}
		\caption{Symbolic (infinite) state space when computing \textPET{} in a naive way.}
		\label{figure:1-clock_pta:ss}
	\end{subfigure}
	\caption{A 1-clock PTA and the \textPET{} problem.}
	\label{figure:1-clock_pta}
\end{figure}

Before discussing how this regular expression can be used to answer the \FullOpacityParamEmptinessProblem, let us illustrate how it is obtained on a simple example.
\cref{figure:1-clock_pta:pta} depicts a 1-clock PTA~$\PTA$ with a clock $\clock$ and two parameters~$p$ and~$q$.
We are interested in solving $\PET{\PTA}$ where we assume here that $\locfinal$ is~$\loc_1$.
Applying the semi-algorithm from \cref{ss:PETS}, suppose the addition of a clock $\clockabs$ and parameter $\paramabs$ to the PTA, followed by the computation of the reachability synthesis to~$\loc_1$.
In this case, the algorithm does not terminate though, and as shown in \cref{figure:1-clock_pta:ss}.

Following the steps of our method, we have $\Rfp(\PTA, \loc_1) = \{(\loc_0, \loc_0),(\loc_0, \loc_1)\}$.
\cref{figure:transformed:l_0l_0,figure:transformed:l_0l_1} depict the corresponding reset-free automata while \cref{figure:transformed:zone_automaton} gives the automaton of the zones.
Urgent locations are colored in yellow.

\begin{figure} [tb]
	\centering
	\small
	\begin{subfigure}[b]{0.32\textwidth}
		\centering
		\begin{tikzpicture}[PTA, thin]

			\node[location, initial] at(0,0) (l0') {$\loc_0'$};
			\node[location] at(1.5,0) (l1) {$\loc_1$};
			\node[location,urgent] at(0,1.5) (l0) {$\loc_0$};
			\node [invariant, below] at (0,-0.5) {$x \leq p$};

			\path
				(l0') edge[] node [above] {$x \geq q$} (l1)
				(l0') edge[] node[left,align=center] {$x = p$\\$x = \paramabs$}  (l0)
			;
		\end{tikzpicture}
		\caption{$\PTA(\loc_0,\loc_0)$}
		\label{figure:transformed:l_0l_0}
	\end{subfigure}
	\begin{subfigure}[b]{0.32\textwidth}
		\centering
		\begin{tikzpicture}[PTA, thin]

			\node[location, initial] at(0,0) (l0) {$\loc_0$};
			\node[location,urgent] at(1.5,0) (l1') {$\loc_1'$};
			\node[location,urgent] at(1.5,1.5) (l1) {$\loc_1$};
			\node [invariant, below] at (0,-0.5) {$x \leq p$};

			\path
				(l0) edge[] node {$x \geq q$}  (l1')
				(l1') edge[] node {$x = \paramabs$} (l1)
			;
		\end{tikzpicture}
		\caption{$\PTA(\loc_0,\loc_1)$}
		\label{figure:transformed:l_0l_1}
	\end{subfigure}
	\begin{subfigure}[b]{0.32\textwidth}
		\centering
		\begin{tikzpicture}[PTA, thin]

			\node[location, initial] at(0,0) (l0) {$\loc_0$};
			\node[location] at(2,0) (l1) {$\loc_1$};

			\path
				(l0) edge[loop] node [above] {$Z_{\loc_0,\loc_0}$} (l0)
				(l0) edge[] node {$Z_{\loc_0,\loc_1}$}  (l1)
			;
		\end{tikzpicture}
		\caption{Automaton of the zones~$\hat{\PTA}$}
		\label{figure:transformed:zone_automaton}
	\end{subfigure}
	\caption{Reset-free automata of~$\PTA$ (from \cref{figure:1-clock_pta:pta}) and automaton of the zones~$\hat{\PTA}$.}
	\label{figure:transformed}
\end{figure}
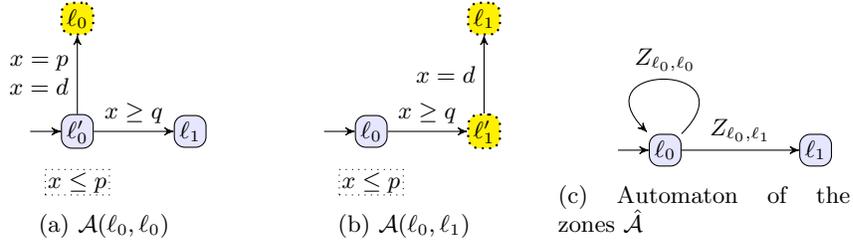

Reachability synthesis of the reset-free automata gives $Z_{\loc_0,\loc_0} = \{\paramabs = p\}$ and $Z_{\loc_0,\loc_1} = \{q \leq \paramabs \leq p\}$.
As per \cref{language equivalence}, the expression $\bar{e}$ (obtained by replacing operators in the regular expression of the language of $\hat{\PTA}$) is equivalent to $\PET{\PTA}$ (again taking $\loc_1$ as final location). %
That expression can be easily obtained (for example with a state elimination method) and gives $\bar{e} = (Z_{\loc_0,\loc_0})^{\bar{*}}\ \bar{.}\ Z_{\loc_0,\loc_1}$.
We may then develop operations on $\bar{e}$ and obtain the following infinite disjunction of parametric zones.

\begin{equation}
\begin{array}{ccl}
\PET{\PTA}	& = &		(Z_{\loc_0,\loc_0})^{\bar{*}}\ \bar{.}\ Z_{\loc_0,\loc_1} \\
				& = &		\{\paramabs = p\}^{\bar{*}} \bar{.}\ \{q \leq \paramabs \leq p\} \\
				& = &		\{\paramabs = 0 \lor \paramabs = p \lor \paramabs = 2p \lor \dots\} . \{q \leq \paramabs \leq p\} \\
				& = &		\{q \leq \paramabs \leq p \lor q + p \leq \paramabs \leq 2p \lor q + 2p \leq \paramabs \leq 3p \lor \dots\}

\end{array}
\end{equation}

\subsection{Solving the $\problemFTOE$ problem through a translation of \textPET{} to parametric Presburger arithmetic}\label{ss:PresburgerFTOE}

Presburger arithmetic is the first order theory of the integers with addition.
It is a useful tool that can represent and manipulate sets of integers called semi-linear sets.
Those sets are particularly meaningful to study TAs, as the set of durations of runs reaching the final location can be described by a semi-linear set~\cite{BDR08}.
Presburger arithmetic is however not expressive enough to represent durations of runs in PTAs due to the presence of parameters.
In~\cite{L23}, a parametric extension of Presburger arithmetic was considered, introducing linear parametric semi-linear sets (\LpSl{} sets) which are functions associating to a parameter valuation $\pval$ a (traditional) semi-linear set of the following form:
\begin{align}
\label{eq:semip}
\begin{split}
S(\pval) = \Big \{x\in \setN^m\mid \bigvee_{i\in I} \exists x_0,\dots x_{n_i}\in \setN^m,& k_1,\dots k_{n_i}\in \setN,
x=\sum_{j=0}^{n_i} x_j\\ \wedge b_0^i(\pval)\leq x_0\leq &c_0^i(\pval)
 \wedge\bigwedge_{j=1}^{n_i}
k_jb_j^i(\pval)\leq x_j \leq k_j c_j^i(\pval) \Big \}
\end{split}
\end{align}
\noindent{}where $I$ is a finite set and the $b_j^i$ and $c_j^i$ are linear polynomials with coefficients in~$\setN$.
A~1-\LpSl{} set is an \LpSl{} set defined over a single parameter.
Given two \LpSl{} (resp.\ 1-\LpSl{}) sets $S_1$ and~$S_2$, the \LpSl{} (resp.\ 1-\LpSl{}) equality problem consists in deciding whether there exists a parameter valuation $\pval$ such that $S_1(\pval)=S_2(\pval)$.

\begin{theorem}[\cite{L23}]
\label{th:LpSl}
The \LpSl{} equality problem is undecidable.

The 1-\LpSl{} equality problem is decidable.
Moreover, the set of valuations achieving equality can be computed.
\end{theorem}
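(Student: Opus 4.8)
The statement has two halves that require opposite tools: for the first half an encoding of an undecidable arithmetic problem into an \LpSl{} equality, and for the second half a structural analysis of how a one-parameter \LpSl{} family depends on its single parameter. Throughout, recall that the equality problem asks whether \emph{some} valuation $\pval$ makes the two families equal, so both halves are questions about the existential dependence on $\pval$.

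\textbf{Undecidability of the \LpSl{} equality problem.} The plan is to reduce from Hilbert's tenth problem, i.e.\ from solvability over $\setN$ of a Diophantine system. First I would put an instance into the standard elementary form: a finite conjunction over fresh $\setN$-valued unknowns $p_1,\dots,p_r$ of constraints of the three shapes $p_i = 1$, $p_i = p_j + p_k$, and $p_i = p_j\cdot p_k$, with the property that the original system has a solution over $\setN$ iff this conjunction does. The unknowns become the parameters of the \LpSl{} families. The affine and constant constraints are cheap: the interval family $\pval \mapsto [0,\pval(p)]$ is trivially \LpSl{} (a single summand $x_0$ with $0 \le x_0 \le p$, all coefficients in $\setN$), so $p_i = p_j + p_k$ is encoded by the equality of $[0,p_i]$ with $[0, p_j+p_k]$, and $p_i = 1$ by the equality of $[0,p_i]$ with $[0,1]$. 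The core of the reduction is a \emph{multiplication gadget}: an \LpSl{} pair $(S_1^{\mathrm{mul}}, S_2^{\mathrm{mul}})$ over the parameters $p_i,p_j,p_k$ (plus, possibly, auxiliary ones) such that $S_1^{\mathrm{mul}}(\pval) = S_2^{\mathrm{mul}}(\pval)$ holds exactly when $\pval(p_i) = \pval(p_j)\cdot\pval(p_k)$. Here the distinctive feature of the \LpSl{} format is what makes this possible and also what makes it delicate: the scaling coefficients $k_j$ allow an \LpSl{} set to contain an \emph{unbounded} arithmetic progression with \emph{parametric} step (e.g.\ $\pval \mapsto \{k\,\pval(p_j) \mid k \in \setN\}$), and one may moreover shift summands by offsets lying in a bounded parametric window $[0,\pval(p_j)]$; by arranging such progressions so that, together with their offsets, they exactly tile (or just fail to tile) a region whose "size parameters" are controlled by $p_i$ and $p_k$, one forces a multiplicative relation between the parameters. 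Getting the arithmetic of this tiling exactly right is the main obstacle of this half. Finally, to assemble the $r$ gadget-equalities $S_1^{(t)}(\pval) = S_2^{(t)}(\pval)$ into a single \LpSl{} equality I would use closure of \LpSl{} sets under finite union (the disjunction over $I$ in the definition) together with a tagging coordinate: set $S_1 = \bigcup_t (\{t\}\times S_1^{(t)})$ and $S_2 = \bigcup_t (\{t\}\times S_2^{(t)})$, whence $S_1(\pval) = S_2(\pval)$ iff all $r$ equalities hold. This yields a computable reduction preserving (non-)solvability, hence undecidability.

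\textbf{Decidability and computability for $1$-\LpSl{} sets.} Fix $1$-\LpSl{} families $S_1,S_2$ and write $a\in\setN$ for the parameter value. The plan is to prove a \emph{structure theorem}: for every choice of disjunct $i$ and of which summands are "active", each coordinate range $\bigcup_{k\ge 0}\bigl[k\,b_j^i(a),\, k\,c_j^i(a)\bigr]$ is, for all $a$ beyond a computable threshold, of one of finitely many uniform shapes --- all of $\setN$; the multiples of the affine quantity $b_j^i(a)$ when $b_j^i = c_j^i$; or a union of a linearly-many-interval "head" followed by a half-line --- with every threshold, offset and period an explicit piecewise-affine function of $a$. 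Summing and unioning these, one gets that for $a$ beyond a computable bound $N$ each $S_\ell(a)$ is a head of size polynomial in $a$ together with an ultimately periodic tail whose period divides $\gcd$ of finitely many affine functions of $a$; the key arithmetic point is that such a $\gcd$ is \emph{bounded by a constant} independent of $a$ (Bézout: $\beta'(\beta a+\gamma) - \beta(\beta' a+\gamma')$ is constant), and hence ultimately periodic in $a$. Consequently, deciding $S_1(a) = S_2(a)$ splits cleanly: (i) for the finitely many $a < N$, each instance is an equality of two ordinary semilinear sets, decidable and effective via Presburger arithmetic; (ii) for $a \ge N$, matching the two structured descriptions reduces --- by the rigidity of the shapes, periods must coincide, then offsets, then the finitely many head intervals --- to a Boolean combination of affine equations/inequalities in $a$ and divisibility atoms with constant moduli, which defines an ultimately periodic set of $a$. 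The set of equalizing valuations is therefore the union of a computable finite set with a computable ultimately periodic set, giving both decidability and the promised effective description.

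\textbf{Main obstacles.} For the undecidability half, the entire difficulty is concentrated in designing the multiplication gadget: one must realize the graph of $(b,c)\mapsto bc$ purely through equalities between sets built from parametric-step progressions and bounded parametric offsets, and verify that no valuation other than the intended ones accidentally equalizes the two families. For the decidability half, the obstacle is the structure theorem itself: controlling, uniformly in $a$, how progressions with $a$-dependent step interact under (Minkowski) summation, and bounding the "head" and a potential counterexample point by explicit functions of $a$ so that the large-$a$ comparison genuinely collapses to an ultimately periodic condition rather than an uncontrolled one.
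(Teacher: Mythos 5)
First, note that \cref{th:LpSl} is not proved in this paper at all: it is imported verbatim from~\cite{L23} and used as a black box (together with \cref{th:FTOE-LpSl}) to derive \cref{corollary-FTOE-undecidable,corollary-FTOE-decidable}. So there is no in-paper proof to compare against; your text has to stand on its own as a reconstruction of the result of~\cite{L23}. Your overall orientation is reasonable and consistent with what the paper itself says about the provenance of the undecidability (the introduction attributes it to Hilbert's tenth problem), and the tagging-coordinate trick for combining finitely many gadget equalities into one \LpSl{} equality is unproblematic, since constant coordinates are expressible with constant bounds $b=c$.

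However, as a proof the proposal has two genuine gaps, one per half. For undecidability, the multiplication gadget is named but never constructed, and it is the \emph{entire} mathematical content of that half: \LpSl{} sets only combine one parametrically bounded interval $[b_0^i(\pval),c_0^i(\pval)]$ with summands ranging over $[k_j b_j^i(\pval),k_j c_j^i(\pval)]$, all coefficients in $\setN$, so the sets are monotone in each parameter and are finite unions of sumsets of intervals and parametric-step quasi-progressions; that set \emph{equality} between two such families can pin down $\pval(p_i)=\pval(p_j)\cdot\pval(p_k)$ (and does so for no unintended valuations) is exactly the delicate counting/tiling argument you defer, so the reduction is at this point a plan, not a proof. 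For decidability, the structure theorem is likewise only asserted, and one of the supporting claims is false as stated: $\gcd(\beta a+\gamma,\ \beta' a+\gamma')$ is bounded by the constant $|\beta'\gamma-\beta\gamma'|$ only when that determinant is nonzero; if the two affine forms are proportional (e.g.\ $2a+2$ and $3a+3$) the gcd equals $a+1$ and grows with $a$, so the eventual period of $S_\ell(a)$ need not be constant and ``ultimately periodic in $a$'' does not follow from the argument given. Relatedly, the ``head'' created by summing progressions with steps affine in $a$ has length on the order of the associated Frobenius number, i.e.\ quadratic in $a$, so the claimed uniform shapes and thresholds need a more careful (and possibly case-split, period-affine-in-$a$) analysis before the large-$a$ comparison collapses to a decidable condition. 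Both halves may well be repairable --- they are, after all, theorems of~\cite{L23} --- but the two places you flag as ``obstacles'' are precisely where the proofs live, and they are missing.
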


The main goal of this subsection is to relate the expressions computed in \cref{ss:infinitePETS} to \LpSl{} sets in order to tackle \OpacityTextForSec{} problems.
Since Presburger arithmetic is a theory of integers, we have to restrict PTAs to integer parameters; this is what prevents our results to be extended to rational-valued parameters in a straightforward manner.
Moreover, we need to focus on time durations of runs with integer values.
This second restriction however is without loss of generality.
Indeed, in~\cite[Theorem~5]{ALM23}, a trick is provided
(which consists mainly in doubling every term of the system so that any run duration that used to be a rational of the form $\frac q 2$ is now an integer
to ensure that if a set is non-empty, it contains an integer. This transformation also allows one to consider only non-strict constraints, and thus we assume every constraint is non-strict in the following.
\begin{restatable}{theorem}{FTOELPSL}
\label{th:FTOE-LpSl}
The \LpSl{} equality problem reduces to the $\problemFTOE$ problem for $(1, 0, \arbitrarilyMany)$-PTAs\LongVersion{ with integer-valued parameters over dense time}. %

Moreover, the $\problemFTOE$ problem for $(1, 0, 1)$-PTAs\LongVersion{ with integer-valued parameters over dense time} %
reduces to the 1-\LpSl{} equality problem.
\end{restatable}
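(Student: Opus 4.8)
The plan rests on two facts already in place: the decomposition of \cref{section:PETS}, by which the $\pval$-slices of $\PET{\PTA^{\locpriv}_{\locfinal}}$ and $\PET{\PTA^{\neg \locpriv}_{\locfinal}}$ are exactly $\PrivDurVisit{\valuate{\PTA}{\pval}}$ and $\PubDurVisit{\valuate{\PTA}{\pval}}$ (so $\valuate{\PTA}{\pval}$ is \fullOpaqueText{} iff these two sets coincide), and the encoding of \cref{ss:infinitePETS}: for a one-clock PTA, $\PET{\PTA}$ is computed exactly (\cref{language equivalence}) as a regular expression over parametric zones under the operators $\bar{.}$, $\bar{+}$, $\bar{*}$. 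The driving observation is that such an expression is, after a purely syntactic rewriting, exactly the normal form~\eqref{eq:semip} of an \LpSl{} set: $\bar{+}$ is union, $\bar{.}$ is the (Minkowski-style) sum used between the $x_j$, and $Z^{\bar{*}}$ for a zone $Z$ describing an interval $b \le \paramabs \le c$ with $b,c$ linear in the parameters yields $\bigcup_{k \in \setN} \{ k b \le \paramabs \le k c\}$, which is one period block of~\eqref{eq:semip}; with a single parameter this gives a $1$-\LpSl{} set, $\paramabs$ being the generated element and the genuine parameter the \LpSl{} parameter. Combined with the dense-time-to-integer doubling trick of \cite[Theorem~5]{ALM23}, needed because \LpSl{} lives over $\setN$ and which moreover lets us assume all constraints non-strict, both reductions follow.

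\emph{From \LpSl{} equality to $\problemFTOE$.} Given \LpSl{} sets $S_1,S_2$ over a common parameter tuple, I would put each in the form~\eqref{eq:semip}, read it as a regular expression over parametric zones as above, and realise this expression as a one-clock PTA: a base interval term $b_0^i \le \clock \le c_0^i$ becomes a block with invariant $\clock \le c_0^i$ and an exit edge guarded by $\clock \ge b_0^i$ (linear integer-coefficient bounds are admissible guards); $\bar{.}$ is sequential composition with a clock reset between blocks; a period term $(\cdot)^{\bar{*}}$ is a self-loop with a clock reset; $\bar{+}$ is a choice between parallel edges. This produces one-clock PTAs $\PTA_1,\PTA_2$ with $\PET{\PTA_j} = S_j$ per $\pval$. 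I then glue them: a fresh urgent initial location $\loci{0}$ with one edge (resetting $\clock$) into $\locpriv$ and then into $\PTA_1$, whose last location is connected by an urgent edge to a single final location $\locfinal$, and one edge (resetting $\clock$) into $\PTA_2$, likewise reaching $\locfinal$ but never visiting $\locpriv$. The result is a one-clock PTA $\PTA$ with $\PrivDurVisit{\valuate{\PTA}{\pval}} = S_1(\pval)$ and $\PubDurVisit{\valuate{\PTA}{\pval}} = S_2(\pval)$, hence $\valuate{\PTA}{\pval}$ is \fullOpaqueText{} iff $S_1(\pval) = S_2(\pval)$; so the $\problemFTOE$ instance is negative iff some $\pval$ achieves $S_1(\pval)=S_2(\pval)$. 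With the undecidability half of \cref{th:LpSl} this yields \cref{corollary-FTOE-undecidable}.

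\emph{From $\problemFTOE$ on $(1,0,1)$-PTAs to $1$-\LpSl{} equality.} Given such a $\PTA$, I would form $\PTA^{\locpriv}_{\locfinal}$ and $\PTA^{\neg \locpriv}_{\locfinal}$ (still one clock, one parameter), and run the procedure of \cref{ss:infinitePETS}: since reachability synthesis is decidable on one-clock PTAs and a regular expression is always extractable from a finite automaton, this terminates and returns exact encodings $\bar{e}_1 = \PET{\PTA^{\locpriv}_{\locfinal}}$ and $\bar{e}_2 = \PET{\PTA^{\neg \locpriv}_{\locfinal}}$. Reading off the operators turns each $\bar{e}_j$ into a $1$-\LpSl{} set $S_j$, the purely parametric conjuncts of the zones being absorbed into the index set $I$. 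After the doubling trick, $\valuate{\PTA}{\pval}$ is \fullOpaqueText{} iff $S_1(\pval) = S_2(\pval)$, so $\problemFTOE$ reduces to the $1$-\LpSl{} equality of $S_1$ and $S_2$; by \cref{th:LpSl} this is decidable and the set of valuations achieving equality is computable, which also gives \cref{corollary-FTOE-decidable} and effective synthesis for $\problemFTOS$.

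The delicate point — the main obstacle — is making the translation between the regular-expression-over-parametric-zones representation and the fixed normal form~\eqref{eq:semip} genuinely exact. One must check that nested stars do not escape the \LpSl{} class: a period block $\bigcup_k \{ k b \le \paramabs \le k c\}$ is already closed under $\bar{*}$ since summing finitely many of its elements only enlarges the multiplier, and the $\bar{*}$ of a finite $\bar{.}$-product of a base interval and period blocks renormalises into~\eqref{eq:semip} by distributing over the disjunction. One must track the parametric side-constraints of each zone and fold them into $I$ (unproblematic with a single parameter, where each such constraint is an interval), and make sure the doubling trick has indeed removed every strict inequality so that the non-strict bounds of~\eqref{eq:semip} match. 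The remaining items — that the glued automaton in the first reduction realises $\PrivDurVisit{\cdot}$ and $\PubDurVisit{\cdot}$ exactly, and that the extra clock of \cref{proposition:GeneralPETS} is unnecessary here because the single clock already times each reset-free block — are routine.
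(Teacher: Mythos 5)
Your overall route is the paper's own: for hardness you build, from two \LpSl{} sets, a one-clock PTA whose private and public duration sets realise them (interval-guarded reset loops, glued behind an immediate choice through $\locpriv$ or not), and for the $(1,0,1)$ direction you translate the regular expressions of \cref{ss:infinitePETS} for $\PTA^{\locpriv}_{\locfinal}$ and $\PTA^{\neg\locpriv}_{\locfinal}$ into 1-\LpSl{} sets and invoke \cref{th:LpSl}, with the doubling trick of~\cite{ALM23} mediating between real-valued durations and integer membership. That part is essentially identical to the paper and is fine.

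The genuine gap is in the second reduction, exactly at the point you declare ``unproblematic'': the purely parametric conjuncts $\Constr_i^{\mathbb{P}}$ of the developed expression cannot be ``absorbed into the index set $I$''. In the normal form of \cref{eq:semip} the index set is a plain finite set --- every disjunct is present for \emph{every} parameter valuation, and the only parameter dependence allowed is through the bounds $b_j^i(\pval)\leq \cdot \leq c_j^i(\pval)$, whose coefficients must moreover lie in $\setN$. A conjunct such as $p\geq 3$, or a zone-produced bound such as $x\leq 5-p$, has no counterpart in an \LpSl{} set, and silently dropping it changes the encoded durations; this is precisely why the multi-parameter version of your translation cannot exist (the \LpSl{} equality problem being undecidable). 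The paper closes this gap with a threshold argument that your proposal is missing: with a single parameter every parametric conjunct is of the form $p\bowtie c$, hence constant for all $p\geq M$ for some computable~$M$; one decides \fullOpacityText{} directly for the finitely many integer valuations $p<M$ (decidable on TAs by~\cite{ALLMS23}), substitutes $p\mapsto M+p$, after which each $\Constr_i^{\mathbb{P}}$ is identically true or false (so it, or its whole term, can be removed) and every remaining bound takes the required shape $\alpha p+\beta\leq x\leq \gamma p+\delta$ with $\alpha,\beta,\gamma,\delta\in\setN$. A further rewriting is also needed before stripping parameter-only conjuncts from inside the stars: each starred factor is split into ``loop never taken''/``loop taken at least once'' so that every starred constraint also occurs in the prefix product, which is what makes deleting its parametric part harmless. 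Since this threshold-and-shift machinery is where the single-parameter hypothesis actually does its work --- and where the asymmetry of the theorem comes from --- your argument as written does not establish the $(1,0,1)$ direction.
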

\begin{proof}[Sketch of proof]
From \cref{eq:semip} one can see that an 
\LpSl{} set parametrically defines integers that are the sum of two types of elements:
$x_0$ belongs to an interval, while the $x_j$ %
represent a sum of integers, each coming from the interval $[b_j^i;c_j^i]$. 
Intuitively, we separate a run into its elementary path until the final state and its loops.
We use $x_0$ to represent the duration of the elementary path, and the $x_j$ adds the 
duration of loops. Each occurrence of the same loop within a run being independent (as they 
include a reset of the clock), their durations all belong to the same interval.

Formally, given a PTA~$\PTA$, using~\cref{ss:problems}, we build the PTAs 
$\PTA^{\locpriv}_{\locfinal}$ and $\PTA^{\neg\locpriv}_{\locfinal}$
separating the private and public runs of~$\PTA$. 
Then with~\cref{ss:infinitePETS}, we
obtain expressions $\bar{e}_{\locpriv}$ and $\bar{e}_{\neg\locpriv}$
such that (\cref{language equivalence}) 
$\bar{e}_{\locpriv} = \PET{\PTA^{\locpriv}_{\locfinal}}$ and
$\bar{e}_{\neg\locpriv} = \PET{\PTA^{\neg\locpriv}_{\locfinal}}$.
We then develop and simplify these expressions until we can build \LpSl{} sets representing
the integers accepted by each expression. We can then show the inter-reduction as 
the \fullOpacityTextForSec{} is directly equivalent to the equality of the two sets.
Note that one direction of the reduction is stronger, allowing multiple parameters. 
This is due to constraints over the parameters which may appear in our expressions, but
cannot be transferred to \LpSl{} sets. However, when there is a single parameter, one can
easily resolve these constraints beforehand.
\end{proof}

Combining \cref{th:FTOE-LpSl,th:LpSl} directly gives us:

\begin{corollary}\label{corollary-FTOE-undecidable}
$\problemFTOE$ is undecidable for $(1, 0, \arbitrarilyMany)$-PTAs\LongVersion{ with integer-valued parameters over dense time}.
\end{corollary}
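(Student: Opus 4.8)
The plan is short, since the corollary is an immediate consequence of the two preceding results; the only thing to get right is the \emph{direction} of the reduction. First I would recall from \cref{th:LpSl} that the \LpSl{} equality problem --- given two \LpSl{} sets $S_1, S_2$, decide whether some parameter valuation $\pval$ satisfies $S_1(\pval) = S_2(\pval)$ --- is undecidable (its undecidability tracing back, via \cite{L23}, to Hilbert's tenth problem, as mentioned in the introduction). Then I would invoke the first half of \cref{th:FTOE-LpSl}, which provides an \emph{effective} many-one reduction \emph{from} the \LpSl{} equality problem \emph{to} $\problemFTOE$ for $(1, 0, \arbitrarilyMany)$-PTAs with integer-valued parameters over dense time: from any instance $(S_1, S_2)$ one computes a one-clock PTA $\PTA$ (with sufficiently many integer parameters) such that $\PTA$ admits a \fullOpaqueText{} valuation iff $S_1(\pval) = S_2(\pval)$ for some~$\pval$.

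The conclusion then follows by the usual composition argument: if $\problemFTOE$ were decidable on this class of PTAs, composing a decision procedure for it with the reduction of \cref{th:FTOE-LpSl} would decide \LpSl{} equality, contradicting \cref{th:LpSl}. Hence $\problemFTOE$ is undecidable for $(1, 0, \arbitrarilyMany)$-PTAs with integer-valued parameters. Since integer-valued PTAs form a subclass of instances of the rational-valued problem, undecidability transfers a fortiori to $(1, 0, \arbitrarilyMany)$-PTAs over rational-valued parameters as well, so no separate argument is needed for that case.

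There is therefore essentially no work left for the corollary itself: the genuine obstacle has already been discharged inside \cref{th:FTOE-LpSl}. That is where one must turn the regular-expression encoding of $\PET{\PTA^{\locpriv}_{\locfinal}}$ and $\PET{\PTA^{\neg\locpriv}_{\locfinal}}$ produced in \cref{ss:infinitePETS} (with the operators $\bar{.}$, $\bar{+}$, $\bar{*}$) into sets of exactly the shape of \cref{eq:semip}, separating the contribution $x_0$ of the elementary path to~$\locfinal$ from the per-loop contributions $x_j$ (each loop resetting the clock, hence contributing independently from the same interval $[b_j^i;c_j^i]$), and then matching \fullOpacityTextForSec{} with equality of the two resulting \LpSl{} sets. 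The one delicate point there --- and the reason this direction of the reduction tolerates arbitrarily many parameters whereas the converse in \cref{th:FTOE-LpSl} requires a single one --- is the handling of the purely parametric constraints $\Constr_i^{\ParamSet}$, which cannot be embedded into an \LpSl{} set but can simply be left as side conditions when one only needs the reduction \emph{into} $\problemFTOE$.
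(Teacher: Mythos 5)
Your core argument is correct and is exactly the paper's proof: the corollary follows immediately by combining \cref{th:LpSl} with the first statement of \cref{th:FTOE-LpSl}, composing a hypothetical decision procedure for \problemFTOE{} with that reduction to contradict the undecidability of \LpSl{} equality; nothing more is needed.

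One correction to your closing commentary, though it does not invalidate the corollary since you invoke \cref{th:FTOE-LpSl} as a black box: you locate the hard work in translating the \textPET{} expressions $\bar{e}_{\locpriv}$ and $\bar{e}_{\neg\locpriv}$ into sets of the shape of \cref{eq:semip}, with the purely parametric constraints $\Constr_i^{\ParamSet}$ kept as side conditions. That is the \emph{converse} direction of \cref{th:FTOE-LpSl} (\problemFTOE{} for $(1,0,1)$-PTAs reduces to 1-\LpSl{} equality, the ingredient of \cref{corollary-FTOE-decidable}). The direction your corollary needs is proved by a direct gadget construction: from $S_1$ and $S_2$ one builds PTAs $\PTA_1$ and $\PTA_2$ whose initial transitions and self-loops are guarded by $b^i_j \leq \clock \leq c^i_j$ and reset~$\clock$, so that the durations of runs reaching $\locfinal$ are exactly $S_i(\pval)$, and one glues them behind an initial zero-delay choice that does or does not pass through $\locpriv$; no \textPET{} expressions and no constraints $\Constr_i^{\ParamSet}$ appear in that direction at all. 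Accordingly, the reason this direction tolerates arbitrarily many parameters is simply that the gadget works for linear polynomials $b^i_j, c^i_j$ over any number of parameters, whereas the $\Constr_i^{\ParamSet}$ issue (handled via the threshold~$M$ argument) is precisely what restricts the converse direction to a single parameter---the opposite of the attribution in your last paragraph.
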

\begin{corollary}\label{corollary-FTOE-decidable}
$\problemFTOE$ is decidable for $(1, 0, 1)$-PTAs and $\problemFTOS$ can be solved.
\end{corollary}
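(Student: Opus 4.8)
The plan is simply to combine the two results established immediately above: the reduction of \cref{th:FTOE-LpSl} from \problemFTOE{} over $(1,0,1)$-PTAs to the $1$-\LpSl{} equality problem, together with the decidability \emph{and} computability statement of \cref{th:LpSl} for that problem.

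Concretely, given a $(1,0,1)$-PTA~$\PTA$, I would first apply \cref{ss:problems} to build $\PTA^{\locpriv}_{\locfinal}$ and $\PTA^{\neg\locpriv}_{\locfinal}$, which separate the private and public runs and still have a single parametric clock; applying the encoding of \cref{ss:infinitePETS} produces regular expressions $\bar{e}_{\locpriv}$ and $\bar{e}_{\neg\locpriv}$ with $\bar{e}_{\locpriv} = \PET{\PTA^{\locpriv}_{\locfinal}}$ and $\bar{e}_{\neg\locpriv} = \PET{\PTA^{\neg\locpriv}_{\locfinal}}$ by \cref{language equivalence}. After the doubling transformation of \cite[Theorem~5]{ALM23} (so that emptiness over dense time is witnessed by integer run durations and every constraint may be taken non-strict), the construction underlying \cref{th:FTOE-LpSl} turns these expressions into $1$-\LpSl{} sets $S_{\locpriv}$ and $S_{\neg\locpriv}$ over the single parameter of~$\PTA$, the execution time playing the role of the free integer variable; by construction $\valuate{\PTA}{\pval}$ is \fullOpaqueText{} exactly when $S_{\locpriv}(\pval) = S_{\neg\locpriv}(\pval)$. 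Hence \problemFTOE{} coincides with the $1$-\LpSl{} equality problem for this pair, which is decidable by \cref{th:LpSl}; this settles the first claim.

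For \problemFTOS{}, I would invoke the sharper part of \cref{th:LpSl}, which asserts that the set of parameter valuations achieving equality of two $1$-\LpSl{} sets is effectively computable. Since the reduction of \cref{th:FTOE-LpSl} carries the single parameter of~$\PTA$ verbatim as the parameter of $S_{\locpriv}$ and $S_{\neg\locpriv}$, this computable set is precisely $\problemFTOS(\PTA)$, so full ET-opacity synthesis can indeed be performed. Every step here is effective, so there is no genuine obstacle: the mathematical content lies entirely in \cref{th:FTOE-LpSl,th:LpSl}, and the only point requiring care is to keep track of which symbol is a parameter (the PTA parameter, carried to the \LpSl{} side) and which is a free variable (the execution time~$\paramabs$), so that the synthesised set of \LpSl{} valuations really does match the intended set of parameter valuations of~$\PTA$.
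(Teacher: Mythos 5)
Your proposal is correct and follows exactly the paper's route: the corollary is obtained by directly combining \cref{th:FTOE-LpSl} (the reduction of \problemFTOE{} for $(1,0,1)$-PTAs to the 1-\LpSl{} equality problem) with \cref{th:LpSl} (decidability of that problem and computability of the set of valuations achieving equality). The only minor imprecision is the claim that the parameter is carried ``verbatim'': inside the proof of \cref{th:FTOE-LpSl} the parameter is shifted by a constant $M$ and the finitely many small valuations are handled separately (and must be added back for synthesis), but this bookkeeping is internal to that theorem and does not affect your argument.
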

\section{Decidability of $\problemTOE$ for $(1, 0, \arbitrarilyMany)$-PTAs for integer-valued parameters}\label{section:TOE}

We prove here the decidability of %
\problemTOE{} for $(1, 0, \arbitrarilyMany)$-PTAs with integer parameters over dense time (\cref{ss:TOE:Presburger}); we also prove that the same problem is \LongVersion{decidable }in \EXPSPACE{} for $(1, \arbitrarilyMany, 1)$-PTAs over discrete time (\cref{ss:TOE:GH21}).

\subsection{General case}\label{ss:TOE:Presburger}

Adding the divisibility predicate (denoted ``$|$'') to Presburger arithmetic produces an undecidable theory, whose purely existential fragment is known to be decidable~\cite{LOW15}.
The $\problemFTOE$ problem can be encoded in this logic, but requires a single quantifier alternation, which goes beyond the aforementioned decidability result, leading us to rely on~\cite{L23}.
The $\problemTOE$ problem however can be encoded in the purely existential fragment.

\begin{restatable}{theorem}{TOEdiv}\label{theorem:TOE-div}
	The $\problemTOE$ problem is decidable.
\end{restatable}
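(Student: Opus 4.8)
The plan is to decide $\problemTOE$ by reducing it to satisfiability in the purely existential fragment of Presburger arithmetic extended with the divisibility predicate, which is decidable by~\cite{LOW15}. The starting observation is that, by \cref{TOS}, a valuation~$\pval$ makes $\valuate{\PTA}{\pval}$ \existentialOpaqueText{} precisely when $(\pval,d)\in \PET{\PTA^{\locpriv}_{\locfinal}}\cap\PET{\PTA^{\neg\locpriv}_{\locfinal}}$ for some execution time~$d$; hence $\problemTOE$ (emptiness of the set of valuations witnessing \existentialOpacityText{}) is equivalent to deciding emptiness of $\PET{\PTA^{\locpriv}_{\locfinal}}\cap\PET{\PTA^{\neg\locpriv}_{\locfinal}}$ over all pairs $(\pval,d)$. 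So it suffices, given a $(1,0,\arbitrarilyMany)$-PTA with integer parameters, to produce existential formulae $\varphi_{\locpriv}$ and $\varphi_{\neg\locpriv}$ over $(\ParamSet,d)$ capturing membership in the two \textPET{} sets, and then test satisfiability of $\exists\, \ParamSet\, \exists d\, (\varphi_{\locpriv}\wedge\varphi_{\neg\locpriv})$.

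Concretely, I would first build $\PTA^{\locpriv}_{\locfinal}$ and $\PTA^{\neg\locpriv}_{\locfinal}$ as in \cref{ss:problems}, then apply the construction of \cref{ss:infinitePETS} to obtain finite regular expressions $\bar e_{\locpriv}$, $\bar e_{\neg\locpriv}$ with $\bar e_{\locpriv}=\PET{\PTA^{\locpriv}_{\locfinal}}$ and $\bar e_{\neg\locpriv}=\PET{\PTA^{\neg\locpriv}_{\locfinal}}$ (\cref{language equivalence}). As in the proof of \cref{th:FTOE-LpSl}, I would apply the rescaling trick of~\cite{ALM23} so that run durations may be assumed integer-valued and all zone constraints non-strict. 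Then I would translate each $\bar e$ into an existential formula by structural induction: $\bar{+}$ becomes disjunction, $\bar{.}$ becomes addition of the duration variable together with conjunction of the parameter constraints, and a starred sub-expression $Z^{\bar{*}}$ is handled by introducing, for each disjunct of $Z$, a fresh existentially quantified iteration-count variable and expressing the accumulated duration as a sum of contributions, each lying in a parametric interval scaled by its count --- exactly the shape of the sets in \cref{eq:semip}. The products between an iteration count and a linear term over the parameters produced this way are encoded through the divisibility predicate, following~\cite{L23}.

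The decisive point --- and the reason this succeeds for $\problemTOE$ while $\problemFTOE$ needs the heavier machinery of \cref{th:LpSl} --- is that non-emptiness of $\PET{\PTA^{\locpriv}_{\locfinal}}\cap\PET{\PTA^{\neg\locpriv}_{\locfinal}}$ is the single existential statement $\exists\,\ParamSet\,\exists d\,(\varphi_{\locpriv}\wedge\varphi_{\neg\locpriv})$ with $\varphi_{\locpriv},\varphi_{\neg\locpriv}$ themselves existential, so it stays inside the existential fragment; by contrast, \fullOpacityText{} forces the equality $\PET{\PTA^{\locpriv}_{\locfinal}}=\PET{\PTA^{\neg\locpriv}_{\locfinal}}$ and hence a genuine quantifier alternation. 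I expect the main obstacle to be the faithful encoding of the $\bar{*}$ operator: making precise that the duration accumulated along arbitrarily many traversals of a loop of $\hat{\PTA}$ --- where each traversal resets the clock and may independently select among several parametric interval constraints --- is captured by a divisibility-plus-linear-arithmetic formula, and verifying that no quadratic dependence (between two iteration counts, or between a count and a parameter) survives, so that the resulting formula genuinely lies in the existential fragment. Once this is established, decidability follows from~\cite{LOW15}, completing the proof.
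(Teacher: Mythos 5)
Your overall route is exactly the paper's: build $\PTA^{\locpriv}_{\locfinal}$ and $\PTA^{\neg\locpriv}_{\locfinal}$, take the expressions $\bar{e}_{\locpriv}$ and $\bar{e}_{\neg\locpriv}$ from \cref{language equivalence}, normalize them as in the proof of \cref{th:FTOE-LpSl} (including the integer-rescaling trick of~\cite{ALM23}), translate each into an existential Presburger-with-divisibility formula, and observe that $\exists$-ET-opacity only requires non-emptiness of the intersection of the two \textPET{} sets, hence a purely existential sentence decidable by~\cite{LOW15}; your explanation of why \problemFTOE{} instead needs a quantifier alternation is also correct. However, there is a genuine gap precisely where you flag your ``main obstacle'': the encoding of a starred factor $(\Constr^i_{j})^{\bar{*}}$. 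Your recipe --- one fresh iteration-count variable per disjunct, with the accumulated duration lying in ``a parametric interval scaled by its count'' --- literally asks for constraints of the form $k\,b(\pval)\leq x_j\leq k\,c(\pval)$, \ie{} products of two existentially quantified quantities (a count and a parameter-dependent bound). Such products are not expressible in Presburger arithmetic, and the divisibility predicate does not remove them ``following~\cite{L23}'': \cite{L23} \emph{defines} \LpSl{} sets with exactly these scaled intervals, but handling them is the hard part (for \problemFTOE{} it is the entire content of \cref{th:LpSl}). So, as written, your translation is not shown to land in the existential fragment, and the proof is incomplete at its decisive step.

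The paper closes this gap with a specific argument you would need. After normalization every inequality in $\Constr^i_{j}$ involves $\paramabs$, so for each parameter valuation the constraint describes a convex integer interval; by convexity, any sum of finitely many elements of that interval can be rewritten as $v_1 n_1 + v_r n_r + v$ with $n_1$, $n_r$, $v$ all in the interval (sort the summands and repeatedly replace two intermediate summands $a,b$ by $a+1,b-1$ until at most one lies strictly between the extremes). Hence the contribution of a starred factor is exactly captured by $\exists y_1,y_2,y_3,z_1,z_2$ with each $y_m$ satisfying $\Constr^i_{j}$, together with $(z_1=0\vee y_1\,|\,z_1)$, $(z_2=0\vee y_2\,|\,z_2)$ and $x_j=z_1+z_2+y_3$ --- only divisibility between quantified variables, with no count-times-parameter product. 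With this lemma your argument goes through; without it, the claimed membership in the decidable fragment of~\cite{LOW15} is unproven.
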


\begin{proof}[Sketch of proof]
As for \cref{th:FTOE-LpSl}, we start by building and simplifying expressions
representing the private and public durations of the PTA.
Instead of translating the expression into \LpSl{} set however, we now use Presburger with 
divisibility.

Again, a run can be decomposed in the run without loops, and its looping parts. 
The duration of the former is defined directly by conjunction of inequalities, 
which can be formulated in a Presburger arithmetic formula. The latter requires the 
divisibility operator to represent the arbitrary number of loops.
Hence, we can build a formula accepting exactly the integers satisfying our expressions.
Deciding the $\problemTOE$ problem can be achieved by testing the existence of an 
integer satisfying the formulas produced from both expressions, which can be stated in a 
purely existential formula.
\end{proof}
\begin{remark}[complexity]\label{remark:TOE-div:complexity}
Let us quickly discuss the complexity of this algorithm.
The expressions produced by \cref{language equivalence} can, in the worst case, be exponential in the size of the \PTAtext{}.
This formula was then simplified within the proof of~\cref{th:FTOE-LpSl}, in part by developing it, which could lead to an exponential blow-up.
Finally, the existential fragment of Presburger arithmetic with divisibility can be solved
in \NEXPTIME~\cite{LOW15}. 
As a consequence, our algorithm lies in \threeNEXPTIME{}.

\end{remark}
\subsection{Discrete time case}\label{ss:TOE:GH21}

There are clear ways to improve the complexity of this algorithm.
In particular, we finally prove an alternative version of \cref{theorem:TOE-div} in a more restricted setting ($\Time = \setN$), but with a significantly lower complexity upper bound %
and using completely different proof ingredients~\cite{GH21}.

\begin{restatable}{theorem}{thEOEeasy}
\label{theorem:TOE-div:GH21}
	$\problemTOE$ is decidable in \EXPSPACE{} for $(1, \arbitrarilyMany, 1)$-PTAs over discrete time.
\end{restatable}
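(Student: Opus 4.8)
The plan is to reduce \problemTOE{} for a $(1, \arbitrarilyMany, 1)$-PTA~$\PTA$ over discrete time to a single reachability-emptiness query on a $(2, \arbitrarilyMany, 1)$-PTA over discrete time, and then invoke the \EXPSPACE{} decidability of reachability emptiness for that class from~\cite{GH21}. The starting point is the observation, from \cref{ss:problems}, that a valuation~$\pval$ makes $\valuate{\PTA}{\pval}$ \existentialOpaqueText{} iff $\PrivDurVisit{\valuate{\PTA}{\pval}} \cap \PubDurVisit{\valuate{\PTA}{\pval}} \neq \emptyset$, and that these two duration sets are realized, respectively, by (copies of) $\PTA^{\locpriv}_{\locfinal}$ and $\PTA^{\neg\locpriv}_{\locfinal}$ once $\locfinal$ is turned into an urgent sink (outgoing edges pruned, all time blocked there), so that a run ``reaching~$\locfinal$'' is exactly a first-visit run to~$\locfinal$. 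Call the resulting two automata $\hat{\PTA}_1$ and $\hat{\PTA}_2$: each is a $(1, \arbitrarilyMany, 1)$-PTA, they share the unique parameter of~$\PTA$, and each has a single parametric clock.

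Next I would form a synchronised product $\hat{\PTA}_1 \times \hat{\PTA}_2$: its locations are pairs of locations, discrete transitions are \emph{interleaved} (one component fires one edge, using its own guard, resets and the shared parameter), delay transitions are taken \emph{simultaneously} in both components, and every product location in which at least one component sits in~$\locfinal$ is urgent. After renaming clocks apart, this is a $(2, \arbitrarilyMany, 1)$-PTA of size polynomial in $|\PTA|$: exactly the two clocks inherited from the two copies are compared to the single shared parameter, all other clocks are non-parametric, and the location count is quadratic; crucially, no additional clock is needed, because for an \emph{emptiness} question we never have to record the common execution time explicitly. The target is the single location $(\locfinal, \locfinal)$.

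The core step is then the equivalence: $(\locfinal,\locfinal)$ is reachable in $\valuate{(\hat{\PTA}_1 \times \hat{\PTA}_2)}{\pval}$ iff $\PrivDurVisit{\valuate{\PTA}{\pval}} \cap \PubDurVisit{\valuate{\PTA}{\pval}} \neq \emptyset$. For the ``if'' direction, take a private run and a public run of $\valuate{\PTA}{\pval}$ of a common duration~$d$; over discrete time one interleaves them by letting time elapse between the finitely many instants at which either run performs a discrete transition and inserting those transitions (which cost no time) at the appropriate instants, yielding a product run to $(\locfinal,\locfinal)$. For the ``only if'' direction, project a product run reaching $(\locfinal,\locfinal)$ onto each component: synchronous delays force both projections to have the same total duration~$d$, and the urgency at~$\locfinal$ together with~$\locfinal$ being a sink ensures each projection is a genuine first-visit run to~$\locfinal$, so $d$ lies in both duration sets. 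Consequently, \problemTOE{} for~$\PTA$ is exactly the reachability-emptiness instance ``is there some $\pval$ with $(\locfinal,\locfinal)$ reachable in $\valuate{(\hat{\PTA}_1\times\hat{\PTA}_2)}{\pval}$?'', which, the product being of polynomial size, is solvable in \EXPSPACE{} by~\cite{GH21}.

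The main obstacle I anticipate is making the product construction precise enough for this equivalence to be \emph{exact}: I must verify that urgency at $\locfinal$-containing product locations genuinely forces the two component durations to coincide and does not spuriously block an otherwise-matching pair (e.g.\ when one component still needs to take its final, zero-delay edge into~$\locfinal$ after the other already rests there), and that the first-visit/$\locpriv$-visited bookkeeping built into $\PTA^{\locpriv}_{\locfinal}$ and $\PTA^{\neg\locpriv}_{\locfinal}$ is preserved by interleaving. A secondary point is to double-check that the product really stays within $(2, \arbitrarilyMany, 1)$: the parameter has to be shared rather than duplicated across the two copies, which is legitimate precisely because \existentialOpacityText{} quantifies over a single valuation applied to the whole system.
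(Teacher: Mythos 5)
Your proposal is correct and follows essentially the same route as the paper: a self-composition of two copies of the PTA (one forced to visit $\locpriv$, one to avoid it) with a shared single parameter, forcing simultaneous first reachability of $\locfinal$ without adding an extra clock, which yields a $(2,\arbitrarilyMany,1)$-PTA whose reachability emptiness over discrete time is decidable in \EXPSPACE{} by~\cite{GH21}. The only cosmetic difference is the synchronization mechanism (urgency at $\locfinal$-containing product locations in your construction versus a synchronized final action in the paper's adaptation of the construction from~\cite{ALMS22}), and both correctly avoid duplicating the parameter.
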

\begin{remark}
	The fact that we can handle arbitrarily many non-parametric clocks in \cref{theorem:TOE-div:GH21} does not improve \cref{theorem:TOE-div}:
	over discrete time, it is well-known that non-parametric clocks can be eliminated using a technique from~\cite{AHV93}, and hence come ``for free''.
\end{remark}
\section{Conclusion and perspectives}\label{section:conclusion}

\LongVersion{%
	Recall that we denote by $(pc, npc, p)$-PTAs the class of PTAs where $pc$ is the number of parametric clocks (\ie{} compared at least once in a guard 	to a parameter), $npc$ the number of non-parametric clocks and $p$ the number of parameters (``$\arbitrarilyMany$'' denotes ``any number).
}

\begin{table}[tb]
\caption{Execution-time opacity problems for PTAs: contributions and some open cases}%
\label{table:summary-problems}%
{\centering
\scriptsize
\setlength{\tabcolsep}{1pt} %

\crefname{theorem}{\text{Th.}}{\text{Th.}} %
\crefname{corollary}{\text{Corol.}}{\text{Corol.}} %

\begin{tabular}{|c|c|c|c|c| }
\hline
	\rowHeader{} Time & $(pc, npc, p)$  & \problemTOE{} emptiness  & \problemTOE{} synthesis \\
\hline
dense & $(1, 0, \arbitrarilyMany)$ & \cellYes{} (\cref{theorem:TOE-div}) & \cellOpen{}\\
\hline
dense & $(1, \arbitrarilyMany, \arbitrarilyMany)$ & \cellOpen{} & \cellOpen{}\\
\hline
dense & $(2, 0, 1)$ & \cellOpen{} & \cellOpen{}\\
\hline
dense & $(3, 0, 1)$ & \cellNo{} (\cite[Th.6.1]{ALMS22}) & \cellNo{}\\
\hline
discrete & $(1, \arbitrarilyMany, 1)$ & \cellYes{\EXPSPACE{}} (\cref{theorem:TOE-div:GH21}) & \cellOpen{} \\
\hline
\end{tabular}
\hfill{}
\begin{tabular}{ |c|c|c|c|c| }
\hline
	\rowHeader{} Time & $(pc, npc, p)$  & \problemFTOE{} emptiness  & \problemFTOE{} synthesis \\
\hline
dense & $(1, 0, 1)$ & \cellYes{} \cref{corollary-FTOE-decidable} & \cellYes{} \cref{corollary-FTOE-decidable}\\
\hline
dense & $(1, 0, [2,M))$ & \cellOpen{} & \cellOpen{}\\
\hline
dense & $(1, 0, M)$ & \cellNo{} (\cref{corollary-FTOE-undecidable}) & \cellNo{}\\
\hline
dense & $([2,3], 0, 1)$ & \cellOpen{} & \cellOpen{}\\
\hline
dense & $(4, 0, 2)$ & \cellNo{} (\cite[Th.\,7.1]{ALMS22}) & \cellNo{}\\
\hline
\end{tabular}

}
\end{table}

\LongVersion{%
\subsection{Summary of contributions}
}

In this paper, we addressed the \opacityText{} for 1-clock PTAs with integer-valued parameters over dense time.
We proved that
\begin{oneenumerate}%
	\item \problemFTOE{} is undecidable for a sufficiently large number of parameters,
	\item \problemFTOE{} becomes decidable for a single parameter, and
	\item \problemTOE{} is decidable, in \threeNEXPTIME{} over dense time and in \EXPSPACE{} over discrete time.
\end{oneenumerate}%
These results rely on a novel construction of \textPET{}, for which a sound and complete computation method is provided\LongVersion{ (that also works for rational-valued parameters)}.
In the general case, we provided semi-algorithms for the computation of \textPET{}, \problemTOS{} and \problemFTOS{}.

Our \textPET{} constructions and all \textPET{}-related results work perfectly for rational-valued parameters.
It remains however unclear how to extend our (un)decidability results to rational-valued parameters, as our other proof ingredients (notably using the Presburger arithmetics) heavily rely on integer-valued parameters.

\LongVersion{%
 \subsection{Perspectives}
}

It remains also unclear whether synthesis can be achieved using techniques from~\cite{GH21}, explaining the ``open'' cell in the ``discrete time'' row of~\cref{table:summary-problems}.
Also, a number of problems remain open in \cref{table:summary-problems}, notably the 2-clock case, already notoriously difficult for reachability emptiness~\cite{AHV93,GH21}.

Finally, exploring \emph{weak} ET-opacity~\cite{ALLMS23} (which allows the attacker to deduce that the private location was \emph{not} visited) is also on our agenda.

	\newcommand{\CCIS}{Communications in Computer and Information Science}
	\newcommand{\ENTCS}{Electronic Notes in Theoretical Computer Science}
	\newcommand{\FAC}{Formal Aspects of Computing}
	\newcommand{\FundInf}{Fundamenta Informaticae}
	\newcommand{\FMSD}{Formal Methods in System Design}
	\newcommand{\IJFCS}{International Journal of Foundations of Computer Science}
	\newcommand{\IJSSE}{International Journal of Secure Software Engineering}
	\newcommand{\IPL}{Information Processing Letters}
	\newcommand{\JAIR}{Journal of Artificial Intelligence Research}
	\newcommand{\JLAP}{Journal of Logic and Algebraic Programming}
	\newcommand{\JLAMP}{Journal of Logical and Algebraic Methods in Programming} %
	\newcommand{\JLC}{Journal of Logic and Computation}
	\newcommand{\LMCS}{Logical Methods in Computer Science}
	\newcommand{\LNCS}{Lecture Notes in Computer Science}
	\newcommand{\RESS}{Reliability Engineering \& System Safety}
	\newcommand{\STTT}{International Journal on Software Tools for Technology Transfer}
	\newcommand{\TCS}{Theoretical Computer Science}
	\newcommand{\ToPNoC}{Transactions on Petri Nets and Other Models of Concurrency}
	\newcommand{\TSE}{{IEEE} Transactions on Software Engineering}

\ifdefined\VersionAuthor
	\renewcommand*{\bibfont}{\small}
	\printbibliography[title={References}]
\else
	\bibliography{oneclockopacity}
\fi
\newpage
\appendix

\section{Recalling the correctness of \EFsynth}\label{appendix:prop:EFsynth}
\begin{lemma}[\cite{JLR15}]\label{prop:EFsynth}
	Let $\PTA$ be a PTA, and let $\LocsTarget$ be a subset of the locations of~$\PTA$.
	Assume $\EFsynth(\PTA, \LocsTarget)$ terminates with result~$\K$.
	Then $\pval \models \K$ iff $\LocsTarget$ is reachable in~$\valuate{\PTA}{\pval}$.
\end{lemma}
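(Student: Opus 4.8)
The plan is to derive the statement from the standard correspondence between the concrete semantics of a PTA and its parametric zone graph $\PZG{\PTA}$ (\cref{def:PTA:symbolic}). Recall that $\EFsynth(\PTA, \LocsTarget)$ performs a (breadth- or depth-first) exploration of $\PZG{\PTA}$ starting from $\symbstateinit$, and, upon termination, returns $\K = \bigcup \projectP{\Constr}$, where the union ranges over all symbolic states $(\loc, \Constr)$ reachable in $\PZG{\PTA}$ with $\loc \in \LocsTarget$. The termination hypothesis guarantees that only finitely many symbolic states are explored, hence that $\K$ is a well-defined finite disjunction of polyhedra over~$\ParamSet$.

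The core of the proof is a bridging lemma: for every location $\loc$, every parameter valuation $\pval$ and every clock valuation $\clockval$, the concrete state $(\loc, \clockval)$ is reachable in $\valuate{\PTA}{\pval}$ if and only if there is a symbolic state $(\loc, \Constr)$ reachable in $\PZG{\PTA}$ with $\clockval \models \valuate{\Constr}{\pval}$. I would prove this by induction on the length of runs. For soundness (symbolic $\Rightarrow$ concrete), each arc of $\PZG{\PTA}$ is built precisely from a time-elapsing step $\timelapse{\cdot}$ followed by a guarded, reset discrete step intersected with the target invariant; so from a symbolic run one reads off, for each $\pval$ and each $\clockval \models \valuate{\Constr}{\pval}$, a matching sequence of concrete delay/discrete transitions in $\valuate{\PTA}{\pval}$. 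For completeness (concrete $\Rightarrow$ symbolic), one checks that the parametric zone operations commute with instantiation by $\pval$ and lose no concrete behaviour: every concrete successor of a state lying in $(\loc, \Constr)$ lies in the corresponding symbolic successor, and the satisfiability requirement in \cref{def:PTA:symbolic} ensures that the symbolic states retained are exactly those inhabited by a concrete state for some admissible $\pval$. These are the parametric analogues of the classical zone-graph correctness for timed automata, established in \eg~\cite{HRSV02,JLR15}.

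Granting the bridging lemma, the statement follows by chaining equivalences. We have $\pval \models \K$ iff there is a symbolic state $(\loc, \Constr)$ reachable in $\PZG{\PTA}$ with $\loc \in \LocsTarget$ and $\pval \models \projectP{\Constr}$; by the defining property of parameter projection this holds iff $\valuate{\Constr}{\pval}$ is satisfiable, i.e.\ iff there exists a clock valuation $\clockval$ with $\clockval \models \valuate{\Constr}{\pval}$; by the bridging lemma this is equivalent to the existence of $\clockval$ such that $(\loc, \clockval)$ is reachable in $\valuate{\PTA}{\pval}$, i.e.\ to $\loc$ being reachable in $\valuate{\PTA}{\pval}$; and since $\loc$ ranges over $\LocsTarget$, this is precisely the reachability of $\LocsTarget$ in $\valuate{\PTA}{\pval}$.

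The main obstacle is the completeness direction of the bridging lemma: one must verify that symbolic time elapsing, guard/invariant intersection and reset neither over- nor under-approximate the concrete behaviour once a parameter valuation is fixed, and that no ``spurious'' symbolic state with an empty valuation set survives the exploration. Once this invariant is in place, the induction is routine, and the remainder of the argument is mere bookkeeping over the finitely many symbolic states produced when $\EFsynth$ halts.
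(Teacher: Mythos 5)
Your sketch is correct in approach, but note that the paper does not prove this lemma at all: it is recalled verbatim from~\cite{JLR15}, which establishes it by exactly the argument you outline (the exploration of $\PZG{\PTA}$, the correspondence between reachable symbolic states $(\loc,\Constr)$ and the concrete states of $\valuate{\PTA}{\pval}$, and projection onto~$\ParamSet$). So your proposal is essentially a reconstruction of the cited reference's standard proof rather than an alternative to anything in this paper.
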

\section{Proof of results}
\subsection{Proof of \cref{proposition:GeneralPETS}}\label{appendix:proof:proposition:GeneralPETS}

\propGeneralPETS*

\begin{proof}
By having $\locfinal$ being urgent and removing its outgoing edges, we ensure that the runs that reach~$\locfinal$ in~$\PTA'$ are all of the form $(\loc_0, \clockval_0), (d_0, \edge_0), \cdots, (\loc_n, \clockval_n)$ for some $n \in \setN$ such that $\loc_n = \loc'$ and $\forall 0 \leq i \leq n-1, \loc_i \neq \loc'$.
By having a clock $\clockabs$ that is never reset and $\locfinal$ being urgent, we ensure that for any run $\varrun$ that reaches $\locfinal$ in $\PTA'$, the value of $\clockabs$ in the final state if equals to $\duration(\varrun)$.
By having a guard $\clockabs = \paramabs$ on all incoming edges to $\locfinal$, we ensure that $\paramabs = \duration(\varrun)$ on any run $\varrun$ that reaches~$\locfinal$.

Therefore, $\EFsynth(\PTA', \{\locfinal\})$ contains all parameter valuations of the runs to $\locfinal$ in $\PTA$ that stop once $\locfinal$ is reached, along with the duration of those runs contained in $\paramabs$.
\end{proof}
\subsection{Proof of \cref{TOS}}\label{appendix:proof:proposition:TOS}

\propTOS*

\begin{proof}
By definition, $\problemdTOS(\PTA)$ is the synthesis of parameter valuations~$\pval$ and execution times~$\execTimes_\pval$ such that $\valuate{\PTA}{\pval}$ is opaque \wrt{} $\locpriv$ on the way to~$\locfinal$ for these execution times~$\execTimes_\pval$.
This means that $\problemdTOS(\PTA)$ contains exactly all parameter valuations and execution times for which there exist both at least one run in $\PTA^{\locpriv}_{\locfinal}$ and at least one run in $\PTA^{\neg \locpriv}_{\locfinal}$.
Since \textPET{} are the synthesis of the parameter valuations and execution times up to the final location, $\problemdTOS(\PTA)$ is equivalent to the intersection of the $\PET{\PTA^{\locpriv}_{\locfinal}}$ and $\PET{\PTA^{\neg \locpriv}_{\locfinal}}$.
\end{proof}
\subsection{Proof of \cref{fullTOS}}\label{appendix:proof:proposition:fullTOS}

\propFullTOS*

\begin{proof}
By definition, $\problemdFTOS(\PTA)$ is the synthesis of parameter valuations~$\pval$ (and execution times of their runs) \suchthat{} $\valuate{\PTA}{\pval}$ is fully opaque \wrt{} $\locpriv$ on the way to~$\locfinal$.
By definition, $\projectP{\Diff(\PTA)}$ is the set of parameter valuations \suchthat{} for any valuation $\pval \in \projectP{\Diff(\PTA)}$, there is at least one run where $\locpriv$ is reached (resp.\ avoided) on the way to~$\locfinal$ in $\valuate{\PTA}{\pval}$ whose duration time is different from those of any run where $\locpriv$ is avoided (resp.\ reached) on the way to~$\locfinal$ in $\valuate{\PTA}{\pval}$.
By removing this set of parameters from $\problemdTOS(\PTA)$, we are left with parameter valuations (and execution times of their runs) \suchthat{} for any $\pval$, any run $\varrun$ where $\locpriv$ is reached (resp.\ avoided) on the way to~$\locfinal$ in $\valuate{\PTA}{\pval}$, there is a run $\varrun'$ where $\locpriv$ is avoided (resp.\ reached) on the way to~$\locfinal$ in $\valuate{\PTA}{\pval}$ and $\duration(\varrun) = \duration(\varrun')$.
This is equivalent to our definition of full opacity.
\end{proof}
\subsection{\cref{equiv_not_lf}}\label{appendix:proof:equiv_not_lf}
\begin{restatable}{proposition}{propequivnotlf}
\label{equiv_not_lf}
	Let~$\PTA$ be a 1-clock PTA, and $(\loc_i,\loc_j) \in \Rfp(\PTA, \locfinal)$ such that $\loc_j \neq \locfinal$.
	Then $Z_{\loc_i,\loc_j}$ is equivalent to the synthesis of parameter valuations~$\pval$ and execution times~$\execTimes_\pval$ such that $\execTimes_\pval = \{ d \mid \exists \varrun$ from $(\loc_i,\{\clock = 0\})$ to $\loc_j$ in $\valuate{\PTA}{\pval}$ such that  $d = \duration(\varrun)$, $\locfinal$ is never reached, and $\clock$ is reset on the last edge of $\varrun$ and on this edge only $\}$.
\end{restatable}
\begin{proof}
Let us first consider the case where $\loc_i \neq \loc_j$.
Steps 1 to~3 in \cref{sub-automata} imply that whenever $\loc_j$ occurs either as a source or target location in an edge, it is replaced by the duplicate locality $\loc_j'$, except when $\loc_j$ is the target location and $\clock$ is reset on the edge.
At this stage, for any path between $\loc_i$ and $\loc_j$ in $\PTA$, where no incoming edge to $\loc_j$ featuring a clock reset is present, there is an equivalent path in $\PTA(\loc_i,\loc_j)$ with $\loc_j$ being replaced by $\loc_j'$.
Step~4 implies that whenever $\loc_j$ is reached in $\PTA(\loc_i,\loc_j)$ no delay is allowed.
As there are no outgoings edges from $\loc_j$ anymore, and only incoming edges featuring a clock reset, only runs ending with such edges are accepted by the reachability synthesis on $\loc_j$.
Since the clock value when entering in $\loc_j$ through such an edge is always $0$, removing the upper bound of the invariant does not impact the availability of transitions.
Because of our assumption that $\loc_i \neq \loc_j$, Step 5 does not change the initial location.
Step~6 ensures that, in any run from $\loc_i$ to $\loc_j$ :
\begin{itemize}
\item no clock reset is performed before the last edge of the run;
\item the clock is not reset when entering $\loc_j$, and is therefore equals to the duration of the run;
\item $\locfinal$ is not reached.
\end{itemize}
Step~7 ensures that $\paramabs$ is equal to the value of the clock when entering $\locfinal$.

\smallskip

Let us now consider the case where $\loc_i = \loc_j$.
In this case, Step~5 changes the initial locality to $\loc_j'$.
Because of Steps 1 to~3, runs from $\loc_j'$ to $\loc_j$ in $\PTA(\loc_i,\loc_j)$ are identical to runs looping from $\loc_i$ to $\loc_i$ in $\PTA$ where $\clock$ is reset on the last edge of the run and on this edge only.
Restrictions obtained by Steps 4, 6 and~7 are unchanged.

\smallskip

Therefore, $Z_{\loc_i,\loc_j}$ is equivalent to the synthesis of parameter valuations~$\pval$ and execution times~$\execTimes_\pval$ such that $\execTimes_\pval = \{ d \mid \exists \varrun$ from $(\loc_i,\{\clock = 0\})$ to $\loc_j$ in $\valuate{\PTA}{\pval}$ such that  $d = \duration(\varrun)$, $\locfinal$ is never reached, and $\clock$ is reset on the last edge of $\varrun$ and on this edge only.
\end{proof}
\subsection{\cref{equiv_lf}}\label{appendix:proof:equiv_lf}
\begin{restatable}{proposition}{propequivlf}
\label{equiv_lf}
	Let~$\PTA$ be a 1-clock PTA, and $(\loc_i,\loc_j) \in \Rfp(\PTA, \locfinal)$ such that $\loc_j = \locfinal$.
	Then $Z_{\loc_i,\loc_j}$ is equivalent to the synthesis of parameter valuations~$\pval$ and execution times~$\execTimes_\pval$ such that $\execTimes_\pval = \{ d \mid \exists \varrun$ from $(\loc_i,\{\clock = 0\})$ to $\locfinal$ in $\valuate{\PTA}{\pval}$ such that  $d = \duration(\varrun)$, $\locfinal$ is reached only on the last state of $\varrun$, and $\clock$ may only be reset on the last edge of $\varrun$ $\}$.
\end{restatable}
\begin{proof}
By \cref{def_rfp}, we know that $\loc_i \neq \locfinal$.

Steps 1 to~3 in \cref{sub-automata} imply that:
\begin{itemize}
\item whenever $\locfinal$ is the target location of an edge, it is replaced by the duplicate locality $\loc_j'$, except when $\clock$ is reset on the edge;
\item once $\loc_j'$ is reached, no delay is allowed and the only available transition consists in reaching $\locfinal$ through an empty action $\epsilon$.
\end{itemize}
At this stage, the only difference between path from $\loc_i$ to $\locfinal$ in $\PTA(\loc_i,\loc_j)$ and $\PTA$ is that incoming edges to $\locfinal$ where $\clock$ is not reset now leads to $\loc_j'$, and then to $\locfinal$ without any added elapsed time.
Step~4 implies that whenever $\loc_f$ is reached in $\PTA(\loc_i,\loc_j)$ no delay is allowed.
As $\locfinal$ is either entered by the immediate transition from $\loc_j'$ or feature a clock reset, removing the upper bound of the invariant does not impact the availability of transitions.
As $\loc_i \neq \locfinal$, Step~5 does not change the initial location.
Step~6 ensures that, in any run from $\loc_i$ to $\loc_j$ :
\begin{itemize}
\item no clock reset is performed before the last edge of the run (not counting the $\epsilon$ edge from $\loc_j'$ to $\locfinal$);
\item the clock value is not reset when entering $\locfinal$, and is therefore equal to the duration of the run;
\item no action can be taken after reaching $\locfinal$.
\end{itemize}
Step~7 ensures that $\paramabs$ is equal to the value of the clock when entering $\locfinal$.

\smallskip

Therefore, $Z_{\loc_i,\loc_j}$ is equivalent to the synthesis of parameter valuations~$\pval$ and execution times~$\execTimes_\pval$ such that $\execTimes_\pval = \{ d \mid \exists \varrun$ from $(\loc_i,\{\clock = 0\})$ to $\locfinal$ in $\valuate{\PTA}{\pval}$ such that  $d = \duration(\varrun)$, $\locfinal$ is reached only on the last state of $\varrun$, and $\clock$ may only be reset on the last edge of $\varrun$.

\end{proof}
\subsection{Proof of \cref{language equivalence}}\label{appendix:proof:language equivalence}

\proplanguageequivalence*

\begin{proof}
Let us first show that $\bar{e}$ contains $\PET{\PTA}$.
Let $\rho$ be a path whose time duration and parameter constraints are in $\PET{\PTA}$.
By definition, $\rho$ starts at time 0 in the initial locality and ends in $\locfinal$, with only one occurrence of $\locfinal$ in the whole path.
Let us consider that the clock is reset $n$ times before the last transition, then $\rho$ can be decomposed as $\rho_0\dots\rho_n$ such that:
\begin{itemize}
\item $\forall\ 0 \leq i < n$, sub-path $\rho_i$ starts in $\loc_i$ at time valuation 0, ends in $\loc_{i+1}$, contains a single reset positioned on the last transition (thus ending with time valuation 0) and does not contain any occurrence of $\locfinal$;
\item sub-path $\rho_n$ starts in $\loc_n$ at time valuation 0, ends in $\locfinal$, may only contain a reset on its last transition, and contains exactly one occurrence of $\locfinal$.
\end{itemize}
By \cref{def_rfp}, $\forall\ 0 \leq i < n$, $(\loc_i,\loc_{i+1}) \in \Rfp(\PTA, \locfinal)$ and by \cref{equiv_not_lf}, $Z_{\loc_i,\loc_{i+1}}$ is the synthesis of parameter valuations and execution times of that sub-path.
By \cref{def_rfp}, $(\loc_n,\locfinal) \in \Rfp(\PTA, \locfinal)$ and by \cref{equiv_lf}, $Z_{\loc_n,\locfinal}$ is the synthesis of parameter and valuation times of that sub-path.
By \cref{zone_automaton}, there is a sequence of transitions $Z_{\loc_0,\loc_1},\dots, Z_{\loc_i,\loc_{i+1}}, \dots, Z_{\loc_n,\locfinal}$ in the automaton of the zones $\hat{\PTA}$.
By application of operators $\bar{+}$ and $^{\bar{*}}$, that sequence thus exists in $\bar{e}$ as $Z_{\loc_0,\loc_1}\bar{.}\dots\bar{.} Z_{\loc_i,\loc_{i+1}}\bar{.}\dots\bar{.}Z_{\loc_n,\locfinal}$.
By definition of operator $\bar{.}$, this expression is the intersection of all parameter constraints and the addition of all valuation times, which is equivalent to $\PET{\PTA}$.

\smallskip

Let us now show that $\PET{\PTA}$ contains $\bar{e}$.
By application of operators $\bar{+}$ and $^{\bar{*}}$, any word in $\bar{e}$ can be expressed as a sequence of concatenation operations $\bar{.}$.
By \cref{zone_automaton}, given a word $Z_{\loc_0,\loc_1}\bar{.}\dots\bar{.} Z_{\loc_i,\loc_{i+1}}\bar{.}\dots\bar{.}Z_{\loc_n,\loc{n+1}} \in \bar{e}$, we know that $\loc_0$ is the initial location of~$\PTA$, $\loc_{n+1} = \locfinal$ and $\forall\ 0 \leq i \leq n, \loc_{i} \neq \locfinal$.
By \cref{equiv_not_lf}, $\forall\ 0 \leq i < n$, $Z_{\loc_i,\loc_{i+1}}$ is the synthesis of parameter valuations and execution times of paths between $\loc_i$ and $\loc_{i+1}$  in $\PTA$ such that $\locfinal$ is never reached, and $\clock$ is reset on the last edge of the path and on this edge only.
And by \cref{equiv_lf}, $Z_{\loc_n,\locfinal}$ is the synthesis of parameter valuations and execution times of paths between $\loc_n$ and $\locfinal$ in $\PTA$ such that $\locfinal$ is reached only on the last state of $\varrun$, and $\clock$ may only be reset on the last edge of $\varrun$.

Let us assume there exists a path $\rho$ whose time duration and parameter constraints are in $\PET{\PTA}$ such that $\rho = \rho_0\dots\rho_n$ and:
\begin{itemize}
\item $\forall\ 0 \leq i < n$, sub-path $\rho_i$ starts in $\loc_i$ at time valuation 0, ends in $\loc_{i+1}$, contains a single reset positioned on the last transition (thus ending with time valuation 0) and does not contain any occurrence of $\locfinal$;
\item sub-path $\rho_n$ starts in $\loc_n$ at time valuation 0, ends in $\locfinal$, may only contain a reset on its last transition, and contains exactly one occurrence of $\locfinal$.
\end{itemize}
Then $Z_{\loc_0,\loc_1}\bar{.}\dots\bar{.} Z_{\loc_i,\loc_{i+1}}\bar{.}\dots\bar{.}Z_{\loc_n,\loc{n+1}} \in \PET{\PTA}$.
On the other hand, if there does not exist such a path, then there exist $ 0 \leq i \leq n$ such that $Z_{\loc_{i},\loc_{i+1}} = \emptyset$.
By recursive applications of operator $\bar{.}$, the whole sequence is evaluated as $\emptyset$ and thus contained in $\PET{\PTA}$.

\end{proof}
\subsection{Proof of \cref{th:FTOE-LpSl}}

\FTOELPSL*
\begin{proof}
Given a PTA~$\PTA$, we showed in \cref{ss:problems} how to compute two PTAs
$\PTA^{\locpriv}_{\locfinal}$ and $\PTA^{\neg\locpriv}_{\locfinal}$
separating the private and public runs of~$\PTA$. Then in \cref{ss:infinitePETS}, we
showed how to build expressions $\bar{e}_{\locpriv}$ and $\bar{e}_{\neg\locpriv}$
such that (\cref{language equivalence}) 
$\bar{e}_{\locpriv} = \PET{\PTA^{\locpriv}_{\locfinal}}$ and
$\bar{e}_{\neg\locpriv} = \PET{\PTA^{\neg\locpriv}_{\locfinal}}$.

Note that the operators $\bar{.}$, $\bar{*}$ and $\bar{+}$ are associative and  commutative; moreover, each term $Z$ occurring in the expressions $\bar{e}_{\locpriv}$ and $\bar{e}_{\neg\locpriv}$ is a union of constraints
$Z = \bigcup_{i'} {\Constr_{i'}} = \text{\Huge $\bar{+}$}_{i'} \Constr_{i'}$.
As a consequence, we can thus develop the entire expression to the form 
\[
{\text{\Huge $\bar{+}$}}_i\ (\Constr^i_1\bar{.}\Constr^i_2\bar{.}\cdots \bar{.}\Constr^i_{n_i})\bar{.}(\Constr^i_{n_i+1})^{\bar{*}}\bar{.}(\Constr^i_{n_i+2})^{\bar{*}}\bar{.}\cdots\bar{.}(\Constr^i_{n_i+m_i})^{\bar{*}}.
\]
where we put all $\bar{+}$ outside of the expression.
For example, the expression $Z_1\bar{.}(Z_2)^{\bar{*}}$ where $Z_1 = \Constr_1\cup\Constr_2$
and $Z_2 = \Constr_3\cup\Constr_4$ is developed into
$
\Constr_1\bar{.}(\Constr_3)^{\bar{*}}\bar{.}(\Constr_4)^{\bar{*}} \bar{+} 
\Constr_2\bar{.}(\Constr_3)^{\bar{*}}\bar{.}(\Constr_4)^{\bar{*}}.
$

As $\Constr^{\bar{*}} = \{\paramabs=0\}\bar{+}\Constr\bar{.}\Constr^{\bar{*}}$, for each $\Constr^i_{n_i+j}$ we can w.l.o.g.\ express term $i$ as the union of two terms: one where $(\Constr^i_{n_i+j})^{\bar{*}}$ is removed (\ie this loop is never taken), and one where $\Constr^i_{n_i+j}$ is concatenated to the term (\ie the loop is taken at least once).
This means that each term, is turned into $2^{m_i}$ terms, where we can assume w.l.o.g.\ that for each $j>0$, $\Constr^i_{n_i+j} = \Constr^i_j$.

Given an expression of the above form, by definition of $\bar{.}$, the product 
$\Constr^i_1\bar{.}\Constr^i_2\bar{.}\cdots \bar{.}\Constr^i_{n_i}$ is also a conjunction of inequalities and thus can be expressed as
$\Constr_i^\paramabs \cap \Constr_i^\mathbb{P}$ 
where $\Constr_i^\mathbb{P}$ is obtained by the constraints that do not involve $\paramabs$ while $\Constr_i^\paramabs$ contains the constraints that involve $\paramabs$ and potentially some parameters in $\mathbb{P}$.
Note also that by the assumption that for each $j>0$, $\Constr^i_{n_i+j} = \Constr^i_j$,
any constraint that does not involve~$\paramabs$ can be removed from $\Constr^i_{n_i+j}$ without
modifying the set.
Therefore, the expression can now be rewritten as 
\[
{\text{\Huge $\bar{+}$}}_i(\Constr_i^\paramabs \cap \Constr_i^\mathbb{P})\bar{.}(\Constr^i_{1})^{\bar{*}}\bar{.}(\Constr^i_{2})^{\bar{*}}\bar{.}\cdots\bar{.}(\Constr^i_{m_i})^{\bar{*}}.
\]
where every inequality in $\Constr^i_{j}$ involves~$\paramabs$.

\begin{itemize}
	\item
Assume the expressions involve a single parameter $p$.
Let us show that the $\problemFTOE$ problem for PTAs over a single parameter reduces to the 1-\LpSl{} equality problem.

Every constraint on~$p$ is of the form $p\bowtie c$ with $c \in \setN$
and ${\bowtie} \in \{\leq,\geq\}$. Therefore, there exists a constant $M$ such that
for all $i$, either the constraint $\Constr_i^\mathbb{P}$ is satisfied for all $p\geq M$, or
it is satisfied by none.

For any fixed valuation $\pval$, \fullOpacityTextForSec{} of $\valuate{\PTA}{\pval}$ is decidable by~\cite{ALLMS23}. We thus assume that
we consider only valuations of $p$ greater than $M$. 
This can be represented by replacing every occurrence of $p$ in the expressions by $M+p$.
This can be done without loss of generality as we can independently test whether
the PTA is \fullOpaqueText{} for the finitely many integer values of $p$ smaller than $M$.
When solving the $\problemFTOS$ problem, we thus need to include the valuations of $p$ smaller than $M$ that achieved equality to the valuations provided by the reduction.

The terms $\Constr_i^\mathbb{P}$ being either always or never valid, one can either remove
this constraint from the expression, or the term containing it producing an expression of the form 
\[
{\text{\Huge $\bar{+}$}}_i\Constr^i_0\bar{.}(\Constr^i_{1})^{\bar{*}}\bar{.}(\Constr^i_{2})^{\bar{*}}\bar{.}\cdots\bar{.}(\Constr^i_{m_i})^{\bar{*}}.
\]
where every constraint involves~$\clock$.

Once again, assuming $p$ is large enough, the constraint $\Constr_j^i$ can
be assumed to be of the form $\alpha^i_j p + \beta^i_j \leq x \leq \gamma^i_j p+ \delta_j^i$ where $\alpha^i_j, \beta^i_j, \gamma^i_j, \delta_j^i \in \setN$.

For both expressions $\bar{e}_{\locpriv}$ and $\bar{e}_{\neg\locpriv}$, now in the simplified form described above, we build the 1-\LpSl{} sets $S_{\bar{e}_{\locpriv}}$ and
$S_{\bar{e}_{\neg\locpriv}}$
where, taking the notations from \cref{eq:semip}, 
$I$ is the set {\Huge $\bar{+}$} ranges over, for $0\leq j\leq m_i, b_j^i =  \alpha^i_j p + \beta^i_j$ and $c_j^i =  \gamma^i_j p+ \delta_j^i$.

For a valuation $\pval$ of $p$, we have that 
$S_{\bar{e}_{\locpriv}}(\pval)$ contains exactly the integers that satisfy
$\pval(\bar{e}_{\locpriv})$ (and similarly for $S_{\bar{e}_{\neg \locpriv}}(\pval)$ and 
$\pval(\bar{e}_{\neg\locpriv})$).
Therefore, there exists a valuation such that $\PTA$ if fully opaque \wrt{} $\locpriv$ on the way to~$\locfinal$ iff there exists a parameter valuation $\pval$ such that $S_{\bar{e}_{\locpriv}}(\pval)=S_{\bar{e}_{\neg\locpriv}}(\pval)$, establishing the reduction.

\item We now wish to show that
the \LpSl{} equality problem reduces to the $\problemFTOE$ problem.

To do so, we fix two \LpSl{} sets $S_1$ and $S_2$, then build two automata $\PTA_1$ and $\PTA_2$ such that $S_i(\pval)$ contains exactly the integers that satisfy $\pval(\PET{\PTA_i})$, for all valuation $\pval$, for $i\in \{1,2\}$.

Let us focus on $S_1$ and assume it is of the form given by~\cref{eq:semip}.
We build $\PTA_1$ so that from the initial location $\locinit$ it can take
multiple transitions (one for each $i\in I$),
the $i$th transition being allowed if the clock lies between $b^i_0$ and $c^i_0$, reset the clock and reach a state~$\loc_i$.
From $\loc_i$, there are $n_i$ loops, and the $j$th loop can be taken if the clock lies between $b^i_j$ and $c^i_j$ and resets the clock.
Moreover, a transition can be taken from $\loc_i$ to~$\locfinal$ if $x=0$.

Formally, $\PTA_1 = (\ActionSet, \LocSet, \locinit, \ClockSet, \ParamSet, \invariant, \EdgeSet)$
where $\ActionSet=\{\epsilon\}$, $\LocSet = \{\locinit,\locfinal\}\cup \{\loc_i\mid i\in I\}$, $\ClockSet=\{x\}$, $\ParamSet$ is the set of parameters appearing in $S_1$,
$\invariant$ does not restrict the PTA (\ie{} it associates $\setRgeqzero$ to every location), and finally 
\begin{align*}
\EdgeSet=& \big\{(\locinit,(b_0^i\leq x\leq c_0^i) ,\epsilon, \{x\},\loc_i\mid i\in I\big\}\\
\cup & \big\{(\loc_i,(b_j^i\leq x\leq c_j^i) ,\epsilon, \{x\},\loc_i\mid i\in I, 1\leq j\leq n_i\big\}\\
\cup &  \big\{(\loc_i,(x=0) ,\epsilon, \emptyset,\locfinal\mid i\in I\big\}.\\
\end{align*}

Thus, a run reaching $\locfinal$ can be decomposed into final-reset paths. %
In other words, there is a run reaching $\locfinal$ with duration $d$ iff
$d$ can be written as a sum $d=\sum_{j=0}^{n_i} d_j$ where 
$b_0^i\leq d_0\leq c_0^i$ and for all $j>0$, $k_jb_j^i\leq d_j\leq k_jc_j^i$ where $k_j$ is the number of times the $j$th loop is taken in the PTA.
As a consequence, the set of durations of runs reaching $\locfinal$ is exactly~$S_1$.

We build $\PTA_2$ similarly. We now build the PTA~$\PTA$ which can either immediately (with $x=0$) go to the initial state of $\PTA_1$ or go immediately to a private location
$\locpriv$ before immediately reaching the initial state of $\PTA_2$.
The final location of $\PTA_1$ and $\PTA_2$ are then fused in a single location $\locfinal$.
We thus have that, the set of 
runs reaching $\locpriv$ on the way to $\locfinal$ are exactly the ones reaching
$\locfinal$ in $\PTA_2$ (with a prefix of duration~$0$).
And similarly, the set of runs avoiding $\locpriv$ on the way to $\locfinal$ are exactly the ones reaching $\locfinal$ in $\PTA_1$ (with a prefix of duration~$0$).
Therefore, for any parameter valuation $\pval$, we have that 
$\PrivDurVisit{\valuate{\PTA}{\pval}} = \PubDurVisit{\valuate{\PTA}{\pval}} $ iff
$S_1(\pval) = S_2 (\pval)$, concluding the reduction.
\end{itemize}%
\end{proof}
\subsection{Proof of \cref{theorem:TOE-div}}

\newcommand{\TOE}{\ensuremath{\mathit{TOE}}}

\TOEdiv*
\begin{proof}
Within the proof of~\cref{th:FTOE-LpSl}, we considered two expressions
$\bar{e}_{\locpriv}$ and $\bar{e}_{\neg\locpriv}$ such that (\cref{language equivalence}) 
$\bar{e}_{\locpriv} = \PET{\PTA^{\locpriv}_{\locfinal}}$ and
$\bar{e}_{\neg\locpriv} = \PET{\PTA^{\neg\locpriv}_{\locfinal}}$.
Those two expressions were simplified into terms of the form
\[
{\text{\Huge $\bar{+}$}}_i(\Constr_i^\paramabs \cap \Constr_i^\mathbb{P})\bar{.}(\Constr^i_{1})^{\bar{*}}\bar{.}(\Constr^i_{2})^{\bar{*}}\bar{.}\cdots\bar{.}(\Constr^i_{m_i})^{\bar{*}}.
\]
where every inequality in $\Constr^i_{j}$ involves~$\paramabs$.

Assume $\bar{e}_{\locpriv}$ is of the above form, and that for all $i,j$ with $j\leq m_i$,
$\Constr_i^\mathbb{P}=\bigwedge_k I_{i,-1,k}$,
$\Constr_i^d = \bigwedge_k I_{i,0,k}$,
$\Constr^i_{j}=\bigwedge_k I_{i,j,k}$ where each $I_{i,r,k}$ is a linear inequality over $\mathbb{P}$ and $\paramabs$.

We build the formula with free variables $\paramabs, p_1,\dots,p_M,$
\begin{align*}
\phi_{\locpriv} =&  \bigvee_i \exists x_0,\dots x_{m_i}, \paramabs=\sum_{k=1}^{m_i} x_i\\
& \wedge\bigwedge_k  I_{i,-1,k}(p_1,\dots,p_M)\\
& \wedge\bigwedge_k I_{i,0,k}(x_0,p_1,\dots,p_M)\\
& \wedge\bigwedge_j \exists y_1,y_2,y_3,z_1,z_2 (\bigwedge_{m\in\{1,2,3\}}\bigwedge_k I_{i,j,k}(y_m,p_1,\dots,p_M))\\
& \wedge (z_1 = 0 \vee y_1 \ |\ z_1) \wedge (z_2=0\vee y_2 \ |\ z_2) \wedge x_j = z_1+z_2+y_3.
\end{align*}

For fixed values of the variables $p_1,\dots, p_M$, the set of variables $\clock$ satisfying $\phi_{\locpriv}$ is exactly the set of integers contained in $\bar{e}_{\locpriv}$ for parameter valuations $p_1,\dots, p_m$.

Indeed, let us fix one value of~$i$; by definition, the conjunction of constraint $\bigwedge_k  I_{i,-1,k}(p_1,\dots,p_M)$ constrains the variables $p_1,\dots,p_M$ as $\Constr_i^\mathbb{P}$ does to the parameter valuations.
Moreover, by definition of $\bar{.}$, the concatenation of the other constraints accepts the values that can be obtained as a sum of elements produced by each constraint.
This is the role played by the variables $x_i$ in the formulas.

The main point to show is that for $j\geq 1$, the variable $x_j$ takes exactly the values accepted by~$(\Constr^i_{j})^{\bar{*}}$.
Remember that $(\Constr^i_{j})^{\bar{*}}$ accepts every number obtained as a sum of terms accepted by~$\Constr^i_{j}$.

First, by definition, $y_1, y_2$ and $y_3$ all satisfy $\Constr^i_{j}$.
Thus, $z_1$ and $z_2$, being integer multiple of $y_1$ and $y_2$, satisfy $(\Constr^i_{j})^{\bar{*}}$.
Hence, any possible value of $x_j$ belongs to $(\Constr^i_{j})^{\bar{*}}$.

Reciprocally, let $n\in \setN$ accepted by $(\Constr^i_{j})^{\bar{*}}$.
There thus exist $n_1,\dots,n_k$ such that for all~$r$, $n_r$~satisfies $\Constr^i_{j}$ and $n=\sum_{r=1}^k n_r$.
Assume $n_1\leq n_2\leq \dots \leq n_r$.
By convexity of the set described by $\Constr^i_{j}$, every integer between 
$n_1$ and $n_r$ satisfies the constraint.
Thus, we can assume w.l.o.g.\ that at most one number $n_s$ has a value strictly between
$n_1$ and $n_r$ (if two such numbers $a$ and $b$ exist, one can replace them by $a+1$ and $b-1$ to bring them closer to $n_1$ and $n_r$, and by repeating this process, at most one remains).
There thus exist $v_1, v_r\in \setN$ and $v\in [n_1;n_r]$ such that
$n = v_1 n_1 + v_r n_r + v $.
By setting $y_1 = n_1$, $y_2=n_r$, $z_1 = v_1 n_1$, $z_2 =v_r n_r$ and $y_3 = v$,
the variable $x_j$ takes the value $n$.\footnote{The formula allows for $z_1 =0$ and $z_2 =0$, so that if $n$ satisfies $\Constr^i_{j}$, we can set $z_1=z_2=0$ and $y_3 = n$.}

We build $\phi_{\locpub}$ from $\bar{e}_{\locpub}$ in the same way.
Asking whether there exist parameter valuations $p_1,\dots, p_M$ such that an integer $\paramabs \in \setN$ appears in both $\bar{e}_{\locpub}$ and $\bar{e}_{\locpub}$ is thus equivalent to verifying the truth of the formula
\[
\exists p_1,\dots,p_M, \paramabs, \phi_{\locpub}(\paramabs,p_1,\dots,p_M) \wedge \phi_{\locpriv}(\paramabs,p_1,\dots,p_M).
\]
As this formula belongs to the existential fragment of Presburger arithmetic with divisibility, its veracity is decidable, and thus $\problemTOE$ is decidable.

\end{proof}

\subsection{Proof of \cref{theorem:TOE-div:GH21}}

\thEOEeasy*

\begin{proof}
	In~\cite[Section~8]{ALMS22}, we gave a semi-algorithm to answer the $\problemTOS$ problem in $(1, *, 1)$-PTAs, working as follows.
	We build the parallel composition of two occurrences of the input PTA and, adding an absolute time clock, we force simultaneous reachability of the final location such that one PTA visited $\locpriv$ while the other did not.
	This can be reused here, by replacing the absolute time clock with a synchronized action between both PTAs (knowing the actual execution time is not necessary here, as we aim at solving $\problemTOE$---not $\problemTOS$).
	Assuming $\PTA$ is a $(1, *, 1)$-PTA, let~$\PTA'$ denote this resulting PTA.
	Now, from our construction, $\problemTOE$ holds iff the final location of~$\PTA'$ is reachable for at least one parameter valuation.

	Note that, while the (unique) parametric clock of the PTA must be duplicated in~$\PTA'$, the (unique) parameter is not duplicated, as it is the same in both versions of the PTA, and therefore $\PTA'$ contains a single parameter.
	That is, $\PTA'$ is a $(2, *, 1)$-PTA.

	Finally, reachability emptiness is \EXPSPACE{}-complete in $(2, *, 1)$-PTA over discrete time~\cite{GH21}, and therefore the $\problemTOE$ problem for $(1, *, 1)$-PTAs over discrete time can be solved in \EXPSPACE{}.
\end{proof}

\end{document}